\tikzset{
  treenode/.style = {shape=rectangle, rounded corners,
                     draw, align=center,
                     top color=white, bottom color=blue!20},
  root/.style     = {treenode, font=\Large, bottom color=red!30},
  env/.style      = {treenode, font=\ttfamily\normalsize},
  whitecircle/.style = {circle,draw},
  blackcircle/.style = {circle,draw,fill=black}
}
\mathchardef\mhyphen="2D
\newcommand{\CC}{\ensuremath{\hyperlink{dictcc}{\textsf{CC}}}}
\newcommand{\CBI}{\ensuremath{\hyperlink{dictcc}{\textsf{WBI}}}}
\newcommand{\DC}{\ensuremath{\hyperlink{dict}{\textsf{DC}}}}
\newcommand{\DCSERIAL}[1]{\ensuremath{\hyperlink{dictdc}{\textsf{DC}}_{#1}^{\mathit{serial}}}}
\newcommand{\BI}{\ensuremath{\hyperlink{dict}{\textsf{BI}}}}
\newcommand{\BILEAST}[1]{\ensuremath{\hyperlink{dictdc}{\textsf{BI}}_{#1}^{\mathit{least}}}}
\newcommand{\GBI}{\ensuremath{\hyperlink{dictgen}{\textsf{GBI}}}}
\newcommand{\GDC}{\ensuremath{\hyperlink{dictgen}{\textsf{GDC}}}}
\newcommand{\AC}{\ensuremath{\hyperlink{dict2}{\textsf{AC}}}}
\newcommand{\coAC}{\ensuremath{\hyperlink{dict2}{\textsf{co-AC}}}}
\newcommand{\KL}{\ensuremath{\hyperlink{dict}{\textsf{KL}}}}
\newcommand{\FT}{\ensuremath{\hyperlink{dict}{\textsf{FT}}}}
\newcommand{\WKL}{\ensuremath{\hyperlink{dict}{\textsf{WKL}}}}
\newcommand{\RCA}{\ensuremath{\textsf{RCA}}}
\newcommand{\BPI}{\ensuremath{\hyperlink{axalg}{\textsf{BPI}}}}
\newcommand{\coBPI}{\ensuremath{\hyperlink{axalg}{\textsf{co-BPI}}}}
\newcommand{\BPF}{\ensuremath{\hyperlink{axalg}{\textsf{BPF}}}}
\newcommand{\coBPF}{\ensuremath{\hyperlink{axalg}{\textsf{co-BPF}}}}
\newcommand{\Compl}{\ensuremath{\hyperlink{dictcompl}{\textsf{Compl}}}}
\newcommand{\free}[1]{\mathsf{Free}(#1)}
\newcommand{\emptyseq}{\langle\rangle}
\newcommand{\mytensor}{\ensuremath{\mathbin{\land}}}
\newcommand{\mywith}{\ensuremath{\mathbin{\land}}}
\newcommand{\mymultimap}{\ensuremath{\mathbin{\imp}}}
\newcommand{\myoplus}{\ensuremath{\mathbin{\lor}}}
\newcommand{\Ax}{\mathsf{Ax}}
\newcommand{\Cut}{\mathsf{Cut}}
\newcommand{\intersect}{\ensuremath{\mathbin{)\hspace{-1.4mm}(}}}
\newcommand{\Prop}{\ensuremath{\mathsf{Prop}}}
\newcommand{\N}{\ensuremath{\mathbb{N}}}
\newcommand{\Bool}{\ensuremath{\mathbb{B}}}
\newcommand{\myifbool}[3]{\mathsf{if}\;{#1}\;\mathsf{then}\;{#2}\;\mathsf{else}\;{#3}}
\newcommand{\innerpos}[1]{\,\downarrow^{\!\scriptscriptstyle +}\!{#1}}
\newcommand{\outerofpos}[1]{\,\uparrow^{\!\scriptscriptstyle +}\!{#1}}
\newcommand{\innerneg}[1]{\,\downarrow^{\!\scriptscriptstyle -}\!{#1}}
\newcommand{\outerofneg}[1]{\,\uparrow^{\!\scriptscriptstyle -}\!{#1}}
\newcommand{\dom}[1]{\mathit{dom}(#1)}
\newcommand{\myeq}{\mathbin{\mbox{$=\!\!=$}}}
\newcommand{\Leaf}{\ensuremath{\mathtt{Leaf}}}
\newcommand{\Node}{\ensuremath{\mathtt{Node}}}
\newcommand{\nth}[1]{{#1}^{\mbox{\scriptsize th}}}
\title{On the logical structure of choice and bar induction principles\\
\raisebox{.5cm}{\small (includes errata -- January 2026)}\vspace{-0.9cm}}
\author{
\IEEEauthorblockN{Nuria Brede}
\IEEEauthorblockA{University of Potsdam, Germany}
\and
\IEEEauthorblockN{Hugo Herbelin}
\IEEEauthorblockA{Inria Paris, Universit\'e de Paris, CNRS, IRIF, France}}
\def\proofskip{\vskip 5pt}
\def\proofbox{\hfill\rule{6pt}{6pt}}
\newcommand{\omitnow}[1]{}
\newcommand\postsoumission[1]{} 
\newenvironment{proof}{{\noindent \sc Proof:\/}~}{\proofbox\proofskip}
\newtheorem{prop}{Proposition}
\newtheorem{theorem}{Theorem}
\def\IMP{\rightarrow}
\def\arrow{\IMP}
\newcommand \seqr[3]
\hrulefill\mbox{}\\ \mbox{}\\ $#3$} \; \raisebox{3ex}{$\;\;\mbox{$#1$}$}}
\newcommand \seq[2]{\shortstack{$#1$ \\ \mbox{}\\
                    \mbox{}\hrulefill\mbox{}\\ \mbox{}\\ $#2$}}
\newcommand\defeq{\triangleq}
\newcommand{\imp}{\Rightarrow}
\newcommand{\coupesep}{|\!|}
\newcommand{\coupe}[2]{\black{
  \ifnum \coupelevel>0 
  \coupecolor=\coupelevel
  \divide\coupecolor by 3\relax
  \output{\coupecolor}
    {\mbox{\textlangle}} \overline{\gray{#1} \gray{\coupesep} \gray{#2}} \mbox{\textrangle}
  \else
    \advance\coupelevel by 1\relax\langle \redcolor{#1} \coupesep \blue{#2} \rangle
    \output{\coupelevel}
  \fi}}
\begin{document}

\maketitle

\begin{abstract}

We develop an approach to choice principles and their
contra-positive bar-induction principles as extensionality schemes
connecting an 
``intensional'' or ``effective'' view of respectively
ill- and well-foundedness properties to an ``extensional'' or ``ideal''
view of these properties. After classifying and analysing the
relations between different intensional definitions of ill-foundedness
and well-foundedness, we introduce, for a domain $A$, a codomain $B$
and a ``filter'' $T$ on finite approximations of functions
from $A$ to $ B$, a generalised form $\GDC_{ABT}$ of the axiom of dependent choice
and dually a generalised bar induction principle $\GBI_{ABT}$ such
that:

$\GDC_{ABT}$ intuitionistically captures the strength of
\begin{itemize}
\item the
  general axiom of choice expressed as $\forall a\, \exists b\, R(a,b)
  \imp \exists \alpha\,\forall a\,R(a,\alpha(a)))$ when $T$ is a
  filter that derives
  point-wise from a relation $R$ on $A \times B$ without introducing
  further constraints,
\item the
  Boolean Prime Filter Theorem / Ultrafilter Theorem if $B$ is the
  two-element set $\Bool$ (for a constructive definition of prime filter),
\item the
  axiom of dependent choice if $A=\N$,
\item Weak Kőnig's
  Lemma if $A=\N$ and $B=\Bool$ (up to weak classical reasoning).
\end{itemize}  

  $\GBI_{ABT}$ intuitionistically captures the strength of
\begin{itemize}
\item
  G\"odel's completeness theorem in the form validity implies
  provability for entailment relations if $B=\Bool$ (for a
  constructive definition of validity),
\item bar induction if $A=\N$ and $T$ is decidable,
\item the Weak Fan Theorem if
$A=\N$ and $B=\Bool$.
\end{itemize}

Contrastingly, even though $\GDC_{ABT}$ and $\GBI_{ABT}$ smoothly
capture several variants of choice and bar induction, some
instances are inconsistent, e.g. when $A$ is $\Bool^{\N}$ and $B$ is
$\N$.

\end{abstract}


\section{Introduction}
\label{sec:intro}

\subsection{Bar induction, dependent choice and their 
variants as extensionality principles}

For a domain $A$, there are different ways to define a
well-founded tree branching over $A$. A first possibility is to define
it as an inductive object built from leaves  
and from nodes associating a subtree to each element in $A$. We will
call this definition {\em intensional}. Using a
syntax familiar to functional programming languages or Martin-L\"of-style type
theory, such intensional trees correspond to inhabitants of an inductive type:

\medskip
\begin{tt}
type tree = $\Leaf$ | $\Node$ of (A $\rightarrow$ tree)
\end{tt}
\medskip

A second possibility is a definition which we shall
call {\em extensional} and which is probably more standard in the
context of non type-theoretic mathematics. Let $A^*$ denote the set of
finite sequences of elements of $A$, with 
$\emptyseq$ denoting the empty sequence and $u \star a$ the extension
of the sequence $u$ with $a$ from~$A$. Then an extensional tree $T$ is
a downwards-closed predicate over~$A^*$. Finite sequences $u \in A^*$
are interpreted as finite paths from the root of a tree and the
predicate determines which paths are contained in $T$.
We say that $T$ is {\em extensionally
  well-founded} if for all infinite paths $\alpha$ in $A^{\N}$, the
path eventually ``leaves'' the tree, i.e. there is an initial finite
prefix $u$ of $\alpha$ such that $u$ (as path from the root) is not
contained in $T$.

The intensional definition is stronger: to any inductively-defined
tree $t$, we can associate an extensionally well-founded tree $T(t)$ by
recursion on $t$ as follows:
$$
\begin{array}{lll}
u \in T(\Leaf) & \defeq & \bot\\
u \in T(\Node(f)) & \defeq & \lor \!\!\begin{array}{l}u = \emptyseq
 \\ \exists a\,\exists u'\,(u = a@u' \land u' \in T(f(a)))
\end{array}\!\!\!\\
\end{array}
$$ where $a@u$, a particular case of concatenation $u'@u$, prefixes $u$ with $a$.
We can then prove by induction on $t$ that
$\forall \alpha\, \exists n\, \neg T(t)(\alpha_{|n})$, where
$\alpha_{|n}$ is the restriction of $\alpha$ to its first $n$ values.

\newcommand{\mysf}[1]{\it #1}
To reflect that $T(t)$ is related to $t$, we can define a
realisability relation between $t$ and $T$ as follows:
\begin{itemize}
\item $\Leaf$ {\mysf realises} $T$ if $\emptyseq \notin T$
\item $\Node(f)$ {\mysf realises} $T$ if $\emptyseq\in T$ and for all $a$, $f(a)$ 
    realises $\lambda u.(a @ u \in T)$
\end{itemize}
Then, we can prove by induction on $t$ that $t$
realises $T(t)$.

{\em Bar induction}, introduced by Brouwer and further analysed e.g. by Kleene and
Vesley~\cite{KleeneVesley65} can be seen as the converse property, namely that any
extensionally well-founded $T$ can be turned into an
inductively-defined tree $t$ that realises $T$, so that, at the end, the intensional
and extensional definitions of well-foundedness are
equivalent\footnote{Kleene and Vesley~\cite{KleeneVesley65} used
  respectively the terms ``inductive'' and ``explicit'' for what we
  call intensional and extensional.}.

At its core, bar induction is the statement ``$U$ barred implies $U$
inductively barred'' for $U$ a predicate on $A^*$.  As studied e.g. in
Howard and Kreisel~\cite{HowardKreisel66}, when used on a negated
predicate $\neg T$, this reduces to ``$T$ extensionally well-founded
implies $T$ inductively well-founded'', where $T$ {\it inductively
  well-founded} abbreviates ``$T$ inductively well-founded at
$\emptyseq$'', where $T$ {\em inductively well-founded} at $u$ is itself
defined by the following clauses:
\begin{itemize}
\item if $u \notin T$ then $T$ is inductively well-founded at $u$
\item if, for all $a$, $T$ is inductively well-founded at $u \star a$,
  then $T$ is inductively well-founded at $u$
\end{itemize}
Then, it can be proved that $T$ inductively well-founded
at~$u$ is itself not different from the existence of an intensional
tree $t$ (hidden in the structure of any proof of inductive
well-foundedness) such that $t$ realises $\lambda u'.T(u@u')$. This
justifies our claim that bar induction is at the end a way to produce
an intensionally well-founded tree from an extensionally well-founded one.


Now, if bar induction can be considered as an extensionality
principle, it should be the same for its contrapositive which is
logically equivalent to the \emph{axiom of dependent
  choice}.
This means that it should eventually be possible to rephrase the 
axiom of dependent choice as a principle asserting that, if a
tree is {\em coinductively ill-founded}, then it is {\em extensionally
ill-founded} (i.e. an infinite branch can be found). We will investigate this direction
in Section~\ref{sec:bardc}, together with precise relations
between these principles and their restriction on finitely-branching
trees, namely {\em Kőnig's Lemma}\footnote{The spelling König's Lemma is also common.
  We respect here the original Hungarian spelling of the author's name.}
and the {\em Fan Theorem}, introducing a
systematic terminology to characterise and compare these different
variants.

Note in passing that the approach to consider bar induction and choice principles as
extensional principles is consistent with the methodology developed e.g. by
Coquand and Lombardi: to avoid the necessity of choice or bar induction
axioms, mathematical theorems are restated using the (co-)inductively-defined notions of
well- and ill-foundedness rather than the extensional
notions~\cite{Coquand04,CoquandLombardi06}.

\subsection{Weak K\texorpdfstring{ő}{\"o}nig's Lemma at the intersection of Boolean Prime Filter Theorem and Dependent
  Choice}

We know from classical reverse mathematics of the subsystems of second order
arithmetic~\cite{Simpson09} that the binary form of Kőnig's
lemma, namely \hypertarget{WKL}{\emph{Weak Kőnig's Lemma}} ($\WKL$) has
the strength of {\em G\"odel's completeness theorem} (for a countable language). Classical reverse
mathematics of the axiom of choice and its variants in set
theory~\cite{Henkin49a,RubinRubin63,Jech73,Espindola16} also tells that G\"odel's completeness
theorem has the strength of the {\em Boolean Prime Filter Theorem} (for a
language of arbitrary cardinal). This suggests that the Boolean
Prime Filter Theorem is the ``natural'' generalisation of $\WKL$ from
countable to arbitrary cardinals.

On the other side, Weak Kőnig's Lemma is a
consequence\footnote{Note that Kőnig's Lemma is a theorem of set
  theory and that we need to place ourselves in a sufficiently weak
  metatheory, e.g. $\RCA_0$, to state this result.} of the axiom of
Dependent Choice, the same way as its contrapositive, the Weak Fan Theorem, is
an instance of Bar Induction, itself related to the contrapositive of the
axiom of Dependent Choice. This suggests that there is common
principle which subsumes both the Axiom of Dependent Choice and the Boolean
Prime Filter Theorem with Weak Kőnig's Lemma at their intersection.

Such a principle is stated in Section~\ref{sec:acgen} where it is
shown that the ill-founded version indeed generalises the axiom of
Dependent Choice and the well-founded version generalises Bar
Induction. In the same section, we also show that one of the instance
of the ill-founded version captures the general Axiom of Choice, but
that, in its full generality, the new principle is actually
inconsistent.

Section~\ref{sec:BPI} is devoted to show that the Boolean Prime Filter
Theorem is an instance of the generalised axiom of Dependent
Choice. In particular, this highlights that the notions of {\em ideal}
and {\em filter} generalise the notion of a binary tree where the prefix
order between paths of the tree
 is replaced by an inclusion order between
    non-sequentially-ordered paths now seen as finite
    approximations of a function from $\N$ to the two-element set $\Bool$.

\begin{table}
  \caption{Summary of logical correspondences}\label{table:summary-correspondence}
\begin{small}
  \begin{center}
    \renewcommand*{\arraystretch}{1.5}

\begin{tabular}{|p{0.5cm}|p{4cm}|p{3.2cm}|}
  \hline
{\it ref.} & \hfil {\it ill-foundedness-style} & \hfil {\it well-foundedness-style}\\
  \hhline{|-|-|-|}
& \multicolumn{2}{|c|}{\em $T$ branching from $\N$ over arbitrary $B$}\\
\hhline{|~|-|-|}
{\scriptsize Th.~\ref{thm:dc-bi}} & $\GDC_{\N BT}$ = $\DC^{\mathit{productive}}_{BT}$
                          &  $\GBI_{\N BT}$ = $\BI^{\mathit{ind}}_{BT}$ \\
{\scriptsize Th.~\ref{thm:spread-prod}} & \hspace*{1.065cm} = $\DC^{\mathit{spread}}_{BT}$
                          & \hspace*{0.90cm} = $\BI^{\mathit{barricaded}}_{BT}$ \\
{\scriptsize Th.~\ref{thm:dc}} & \hspace*{1.065cm} = $\DCSERIAL{BRb_0}$  & \\
\hhline{|~|-|-|}
& \multicolumn{2}{|c|}{\em $T$ branching from $\N$ over non-empty finite $B$}\\
\hhline{|~|-|-|}
{\scriptsize Th.~\ref{thm:gdc-kl}} & $\GDC_{\N BT}$ = $\KL^{\mathit{productive}}_{BT}$ & $\GBI_{\N BT}$ = $\FT^{\mathit{ind}}_{BT}$ \\
                & \hspace*{1.065cm} = $\KL^{\mathit{spread}}_{BT}$ & \hspace*{0.9cm} = $\FT^{\mathit{barricaded}}_{BT}$ \\
                & \hspace*{1.095cm} =${}_{\mathit{co}-\mathit{intuit.}}$ $\!\KL^{\mathit{unbounded}\!\!}_{BT}$ & \hspace*{0.93cm} =${}_{\mathit{intuit.}}$ $\!\FT^{\mathit{uniform}}_{BT}$\\
                & \hspace*{1.07cm} =${}_{\mathit{co}-\mathit{intuit.}}$ $\!\KL^{\mathit{staged}}_{BT}$
                & \hspace*{0.90cm} =${}_{\mathit{intuit.}}$ $\!\FT^{\mathit{staged}}_{BT}$ \\
  \hhline{|~|-|-|}
& \multicolumn{2}{|c|}{\em
  functions from $\N$ to arbitrary $B$}\\
  \hhline{|~|-|-|}
{\scriptsize Th.~\ref{thm:cc}} & $\GDC_{\N BR_{\top}}$ = $\CC_{BR}$ & \\
\hhline{|~|-|-|}
& \multicolumn{2}{|c|}{\em
  functions from arbitrary $A$ to arbitrary $B$}\\
  \hhline{|~|-|-|}
{\scriptsize Th.~\ref{thm:ac}} & $\GDC_{ABR_{\top}}$ = $\AC_{ABR}$ & \\
\hhline{|~|-|-|}
%
& \multicolumn{2}{|c|}{\em binary branching from arbitrary $A$}\\
\hhline{|~|-|-|}
{\scriptsize Th.~\ref{thm:compl}} & $\GDC_{A \Bool{\cal T}^C}$ = $\Compl^-_A(\scalebox{0.9}{$\cal T$})$ & $\GBI_{A \Bool{\cal T}}$ = $\Compl^+_A(\scalebox{0.9}{$\cal T$})$ \\
{\scriptsize Th.~\ref{thm:bpf}} & $\GDC_{A \Bool{\cal T}^C}$ = $\BPF_{\free{A}}(F_{\cal T})$ &\\
\hhline{|-|-|-|}
\end{tabular}
\end{center}
\end{small}

\end{table}

\subsection{Methodology and summary}

For our investigations to apply both to classical and
to intuitionistic mathematics, we carefully distinguish between the
choice axioms (seen as ill-foundedness extensionality schemes) and bar
induction schemes (seen as well-foundedness extensionality schemes).

All in all, the correspondences we obtain are
summarised in Table~\ref{table:summary-correspondence} where the definitions of the
different notions can be found in the respective sections of the
paper.

\section{The logical structure of dependent choice and bar induction principles}
\label{sec:bardc}


\subsection{Metatheory}

We place ourselves in a metatheory capable to express arithmetic
statements. In
addition to the type $\N$ of natural numbers together with induction
and recursion, we assume the following constructions to be available:
\begin{itemize}
\item The type $\Bool$ of Boolean values $0$ and $1$ together with a
  mechanism of definition by case analysis. It shall be convenient to
  allow the definition of propositions by case analysis as in
  $\myifbool{b}{P}{Q}$, whose logical meaning shall be equivalent to $(b=1
  \land P) \lor (b=0 \land Q)$.
\item For any type $A$, the type $A^*$ of finite sequences over~$A$
  whose elements shall generally be ranged over by the letters
  $u$, $v$ ... We write $\emptyseq$ for the empty sequence and
  $u \star a$ for the extension of sequence $u$ with element $a$. We
  write $|u|$ for the length of $u$ and $u(n)$ for the $\nth{n}$
  element of $u$ when $n<|u|$. We write $v @ u$ for the
  concatenation of $v$ and $u$. We write $u \leq_s v$ to mean
  that $u$ is an initial prefix of $v$. This is inductively defined by:
$$
\seq{}
    {u \leq_s u}
\qquad
\seq{u \leq_s v}
    {u \leq_s v \star a}
$$
  We shall also support case analysis over finite sequences under the
  form of a $\mathsf{case}$ operator.
\item For any two types $A$ and $B$, the type $A \arrow B$ of
  functions from $A$ to $B$. Functions can be built by
  $\lambda$-abstraction as in $\lambda x.\,t$ for $x$ in $A$ and $t$ in
  $B$ and used by application as in $t(u)$ for $t$ in $A \arrow B$ and
  $u$ in $A$. To get closer to the traditional notations, we shall
  also abbreviate $t(u_1)\ldots(u_n)$ into $t(u_1,\ldots,u_n)$.
\item A type $\Prop$ reifying the propositions as a type. The type $A
  \arrow \Prop$ shall then represent the type of predicates over~$A$.
  We shall allow predicates to be defined inductively (smallest
  fixpoint) or coinductively (greatest fixpoint), using respectively the
  $\mu$ and $\nu$ notations.
\item For any type $A$ and predicate $P$ over $A$, the subset\break
  $\{a:A~\mid~P(a)\}$ of elements of $A$ satisfying $P$.
\end{itemize}
This is a language for higher-order arithmetic but in practice, we
shall need quantification just over functions and predicates
of (apparent) rank 1 (i.e. of the form $A_1 \arrow \ldots \arrow
A_n \arrow A$ or $A_1 \arrow \ldots \arrow A_n \arrow \Prop$ with no
arrow types in $A$ and the $A_i$). We however also allow arbitrary type constants
to occur, so we can think of our effective metatheory as a
second-order arithmetic generic over arbitrary more complex types.
In practise, our metatheory could typically be the image of arithmetic
in set theory or in an impredicative type theory. We will in any case
use the notation $a \in A$ to mean that $a$ has type $A$ when $A$ is a
type, which, if in set theory, will become $a$ belongs to the set $A$.

The metatheory can be thought as classical, i.e. associated to a
classical reading of connectives but in practice, unless stated
otherwise, most statements will have proofs compatible with a linear,
intuitionistic or co-intuitionistic reading of connectives too. Using
linear logic as a reference for the semantics of
connectives~\cite{Girard87}, $A \imp B$, $\forall a\,Q$, $A \lor B$,
$\exists a\,Q$, $\neg A$ have respectively to be read linearly as $P
\multimap Q$, $\&_{a}\, Q$, $A \oplus B$, $\bigoplus_{a}\, Q$ and the
logical dual $A^{\bot}$ of $A$, while $A \land B$ has to be read $A
\otimes B$ when used as the dual of $A \imp B^\bot$ and $A \& B$ when
used as the dual of $A^\bot \lor B^\bot$. An intuitionistic reading will
add a ``!''  (of-course connective of linear logic) in front of
negative connectives while a co-intuitionistic reading will add a
``?'' (why-not connective of linear logic) in front of positive
connectives.

\subsection{Infinite sequences}
\label{sec:infinite-sequence}

We write $A^{\N}$ for the infinite (countable) sequences of elements of
$A$. There are different ways to represent such an infinite sequence:
\begin{itemize}
\item We can represent it as a function, i.e. as a functional object
  of type $\N \arrow A$.
\item We can represent it as a total functional relation, i.e. as a relation
  $R$ of type $\N \arrow A \arrow \Prop$ such that $\forall n\, \exists!
  a\, R(n,a)$.
\item Additionally, when $A$ is $\Bool$, an extra possible
  representation is as a predicate $P$ over $\N$ with intended
  meaning $1$ if $P(n)$ holds and $0$ if $\neg P(n)$ holds (and
  unknown meaning otherwise).
\end{itemize}
The representation as a functional relation is weaker in the sense
that a function $\alpha$ induces a functional relation $\lambda
n.\,\lambda a.\,\alpha(n)=a$ but the converse requires the axiom of unique
choice. In the sequel, we will use the notation $\alpha(n) \myeq a$
and $\alpha(n) \defeq a$ to mean different things
depending on the representation chosen for $\N \arrow B$.

In the first case, $\alpha(n) \myeq a$ means $\alpha(n) =_A a$
where $=_A$ is the equality
on $A$.  Similarly, $\alpha(n) \defeq a$ defines the function
$\alpha \defeq \lambda n.\,a$.

In the second case, $\alpha(n) \myeq a$ however means
$\alpha(n,a)$ and $\alpha(n) \defeq a$ defines the functional
relation $\alpha \defeq \lambda (n,a').\,(a' = a)$ where $n$ can occur in $a$.

When $A$ is $\Bool$, the representation as a predicate $P$ is even weaker
in the sense that a functional relation $R$ induces a predicate
$\lambda n.\,R(n,1)$ but the converse requires classical reasoning. We
can easily turn a predicate $P$ into a relation $\lambda n.\,\lambda
b.\,(\myifbool{b}{P(n)}{\neg P(n)})$ but proving $\forall n\,\exists!b\,
(\myifbool{b}{P(n)}{\neg P(n)})$ requires a call to excluded-middle on $P(n)$.

When $A$ is $\Bool$ and $\alpha$ is a predicate, we
define $\alpha(n) \myeq 1$ as $\alpha(n)$ and $\alpha(n) \myeq 0$ as
$\neg \alpha(n)$. Technically, this means seeing $\alpha(n) \myeq b$ as a
notation for ``$\myifbool{b}{\alpha(n)}{\neg \alpha(n)}$''.
Similarly, $\alpha(n) \defeq b$ defines $\alpha \defeq \lambda n.\,
(\myifbool{b}{\top}{\bot})$.

In particular, this means that {\em all choice and bar induction
  statements of this paper} have two readings of a different
logical strength (depending on the validity of the axiom of unique
choice in the metatheory), or even three readings (depending on the
validity of the axiom of unique choice {\em and} of classical
reasoning) when the codomain of the function mentioned in the theorems
is $\Bool$.

If $\alpha \in A^{\N}$, we write $u \prec_s \alpha$ to mean that $u$ is
an initial prefix of $\alpha$. This is defined inductively by the
following clauses:
$$
\seq{}
    {\emptyseq \prec_s \alpha}
\qquad
\seq{u \prec_s \alpha \qquad \alpha(|u|) \myeq a}
    {u \star a \prec_s \alpha}
$$

If $a \in A$ and $\alpha \in A^{\N}$, we write $a @ \alpha$
for the sequence $\beta$ defined by $\beta(0) \defeq a$ and $\beta(n+1) \defeq
\alpha(n)$.


We have the following easy property:

\begin{prop}
\label{prop:prec-star}
If $u \prec_s \alpha$ then $a @ u \prec_s a @ \alpha$.
\end{prop}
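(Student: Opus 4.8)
The plan is to prove the statement by induction on the derivation of $u \prec_s \alpha$, which, by the two clauses defining $\prec_s$, amounts to a structural induction on the finite sequence $u$. Throughout I will use the two defining equations of $a @ \alpha$, namely $(a @ \alpha)(0) \myeq a$ and $(a @ \alpha)(n+1) \myeq \alpha(n)$, together with the routine facts that prefixing commutes with extension on finite sequences, i.e. $a @ \emptyseq = \emptyseq \star a$ and $a @ (u \star b) = (a @ u) \star b$, and that $|a @ u| = |u| + 1$.

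For the base case $u = \emptyseq$, we have $a @ \emptyseq = \emptyseq \star a$, so it suffices to derive $\emptyseq \star a \prec_s a @ \alpha$. This follows from the second clause applied to $\emptyseq \prec_s a @ \alpha$ (first clause) and to $(a @ \alpha)(|\emptyseq|) \myeq a$, i.e. $(a @ \alpha)(0) \myeq a$, which is exactly the first defining equation of $a @ \alpha$.

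For the inductive step, suppose the last rule derives $u \star b \prec_s \alpha$ from $u \prec_s \alpha$ and $\alpha(|u|) \myeq b$. By the induction hypothesis, $a @ u \prec_s a @ \alpha$. Since $a @ (u \star b) = (a @ u) \star b$, it remains, by the second clause, to check $(a @ \alpha)(|a @ u|) \myeq b$. Now $|a @ u| = |u| + 1$, and the second defining equation of $a @ \alpha$ gives $(a @ \alpha)(|u| + 1) \myeq \alpha(|u|)$, which is $b$ by hypothesis; hence $(a @ \alpha)(|a @ u|) \myeq b$ and we conclude $a @ (u \star b) \prec_s a @ \alpha$.

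The only point requiring a little care — and the closest thing to an obstacle — is that $\myeq$ and $\defeq$ are notations relative to the chosen representation of $A^{\N}$; but in each of the three representations the equations $(a @ \alpha)(0) \myeq a$ and $(a @ \alpha)(n+1) \myeq \alpha(n)$ hold merely by unfolding the definition of $a @ \alpha$, so the argument above is uniform across representations and needs no further case analysis. The induction is evidently compatible with a linear, intuitionistic or co-intuitionistic reading of the connectives.
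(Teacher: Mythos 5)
Your proof is correct: the induction on the derivation of $u \prec_s \alpha$, using $a @ (u \star b) = (a @ u) \star b$ and the defining equations $(a @ \alpha)(0) \myeq a$, $(a @ \alpha)(n+1) \myeq \alpha(n)$, goes through in each case, and your closing remark rightly observes that the argument is uniform over the three representations of $A^{\N}$. The paper states this as an ``easy property'' and gives no proof at all, so there is nothing to compare against; what you wrote is exactly the expected argument.
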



\subsection{Trees and monotone predicates}

Let $B$ be a type and $T$ be a predicate on $B^*$. We overload the notation $u \in T$ to mean that $T$
holds on $u\in B^*$. We say that $T$ is finitely-branching if $B$ is
in bijection with a non-empty bounded subset of $\N$ (i.e. to
$\{n:\N \mid n \leq p\}$ for some $p$).

We say that $T$ is a {\em tree} if it is closed under restriction,
and, dually, that $T$ is {\em monotone} if it is closed under
extension (the formal definitions are given in Table~\ref{table:2}).
Classically, we have $T$ monotone iff $\neg T$ is a tree, and,
dually, $\neg T$ monotone iff $T$ is a tree. In particular, another
way to describe a tree is as an antimonotone predicate\footnote{From a
  categorical perspective, a tree is a contravariantly functorial
  predicate over the preorder generated by $u \leq_s v$, while a
  monotone predicate is covariantly functorial.}. It is convenient for
the underlying intuition to restrict oneself to predicates which are
trees, or which are monotone, even if it does not always matter in
practice. When it matters, a predicate is turned into a tree either by
discarding sequences not connected to the root or by
completing it with missing sequences from the root: these are
respectively the {\em downwards arborification} $\innerneg{T}$ and
{\em upwards arborification} $\outerofneg{T}$ of a predicate, as shown in Table~\ref{table:2bis}. We
dually write $\outerofpos{T}$ and $\innerpos{T}$ for the {\em upwards
  monotonisation} and {\em downwards monotonisation} of $T$.
Arborification and monotonisation are idempotent. We shall in general
look for minimal definitions of the
concept involved in the paper, and thus consider arbitrary predicates
as much as possible, turning them into trees or monotone predicates
only when needed to give sense to the definitions.

\begin{table}
  \caption{Logically equivalent dual concepts on dual predicates}
  \label{table:2}
\renewcommand*{\arraystretch}{1.4}
\begin{center}
\begin{tabular}{|p{4.1cm}|p{4.1cm}|}
\hline
\hfil $T$ is a tree & \hfil $T$ is monotone \\
\hfil (closure under restriction) & \hfil (closure under extension) \\
\hfil $\forall u\,\forall a\, (u \star a \in T \imp u \in T)$ &
\hfil $\forall u\,\forall a\, (u \in T \imp u \star a \in T)$ \\
\hline
\end{tabular}
\end{center}
\end{table}

\begin{table}
  \caption{Logically opposite closure operators on dual predicates}
  \label{table:2bis}
\renewcommand*{\arraystretch}{1.4}
\begin{center}
\begin{tabular}{|p{4.1cm}|p{4.1cm}|}
\hline
\hfil {downwards arborification of $T$} &
\hfil {upwards monotonisation of $T$} \\
\hfil {($\innerneg{T}$)} &
\hfil {($\outerofpos{T}$)} \\
\hfil $\lambda u.\, \forall u'\, (u' \leq_s u \imp u' \in T)$ &
\hfil $\lambda u.\, \exists u'\, (u' \leq_s u \land u' \in T)$ \\
\hhline{-|-}
\hfil {upwards arborification of $T$} &
\hfil {downwards monotonisation of $T$} \\
\hfil {($\outerofneg{T}$)} &
\hfil {($\innerpos{T}$)} \\
\hfil $\lambda u.\, \exists u'\, (u \leq_s u' \land u' \in T)$ &
\hfil $\lambda u.\, \forall u'\, (u \leq_s u' \imp u' \in T)$ \\
\hline
\end{tabular}
\end{center}
\end{table}

\subsection{Well-foundedness and ill-foundedness properties}

We list properties on predicates which are relevant for stating
ill-foundedness axioms (i.e. choice axioms), and their dual
well-foundedness axioms (i.e. bar induction axioms). Duality can be
understood both under a classical or linear interpretation of the
connectives, where the predicate $T$ in one column is supposed to be
dual of the predicate $T$ occurring in the other column (dual
predicates if in linear logic, negated predicates if in classical
logic). Table~\ref{table:progressing} details properties which differ
by contraposition and are thus logically equivalent (in classical and
linear logic). On the other side, tables~\ref{table:foundedness} and
\ref{table:foundedness-relativised} detail properties which are
logically opposite.

\begin{table}[h]
  \caption{Basic logically equivalent dual properties on dual predicates}
\label{table:progressing}
\renewcommand*{\arraystretch}{1.4}
\begin{center}
\begin{tabular}{|p{4cm}|p{4cm}|}
\hline
\hfil $T$ is progressing at $u$ (*) &
\hfil $T$ is hereditary at $u$ \\
\hfil $u \in T \mymultimap (\exists a\, u\star a \in T)$ &
\hfil $(\forall a\, u \star a \in T) \mymultimap u \in T$ \\
\hline
\hfil $T$ is progressing (*) &
\hfil $T$ is hereditary \\
\hfil $\forall u\, (\mbox{$T$ is progressing at $u$})$ &
\hfil $\forall u\, (\mbox{$T$ is hereditary at $u$})$ \\
\hline
\end{tabular}

\end{center}
\end{table}


\begin{table}
  \caption{Logically opposite dual concepts on dual predicates}
\label{table:foundedness}
\renewcommand*{\arraystretch}{1.4}
\begin{center}
\begin{tabular}{|p{4cm}|p{4cm}|}
\hline
\hfil {\em ill-foundedness properties} & \hfil {\em well-foundedness properties} \\
\hhline{-|-}
\multicolumn{2}{|c|}{\em closure operators}\\
\hhline{-|-}
\hfil pruning of $T$ & \hfil hereditary closure of $T$ \\
\hfil $\nu X.\lambda u.\, (u \in T \mywith \exists a\, u \star a \in X)$ &
\hfil $\mu X.\lambda u.\, (u \in T \myoplus \forall a\, u \star a \in X)$ \\
\hhline{-|-}
\multicolumn{2}{|c|}{\em intensional concepts}\\
\hhline{-|-}
\hfil $T$ is a spread & \hfil $T$ is barricaded (*) \\
\hfil $\emptyseq \in T \mytensor T \mbox{ progressing}$ &
\hfil $T \mbox{ hereditary} \mymultimap \emptyseq \in T$ \\
\hline
\hfil $T$ is productive & \hfil $T$ is inductively barred \\
\hfil $\emptyseq \in \mbox{pruning of $T$}$ &
\hfil $\emptyseq \in \mbox{hereditary closure of $T$}$ \\
\hhline{-|-}
\multicolumn{2}{|c|}{\em intensional concepts relevant for the finite case}\\
\hhline{-|-}
\hfil $T$ has unbounded paths & \hfil $T$ is uniformly barred \\
\hfil $\forall n\, \exists u\, (|u|=n \mytensor u \in \innerneg{T})$ &
\hfil $\exists n\, \forall u\, (|u|=n \mymultimap u \in \outerofpos{T})$ \\
\hline
\hfil $T$ is staged infinite & \hfil $T$ is staged barred (*) \\
\hfil $\forall n\, \exists u\, (|u|=n \mytensor u \in T)$ &
\hfil $\exists n\, \forall u\, (|u|=n \mymultimap u \in T)$\\
\hhline{-|-}
\multicolumn{2}{|c|}{\em extensional concepts}\\
\hhline{-|-}
\hfil $T$ has an infinite branch & \hfil $T$ is barred \\
\hfil $\exists \alpha\, \forall u\, (u \prec_s \alpha \mymultimap u \in T)$ &
\hfil $\forall \alpha\, \exists u\, (u \prec_s \alpha \mytensor u \in T)$ \\
\hline
\end{tabular}
\end{center}
\end{table}
We indicated with (*) concepts for which we did not find an existing
terminology in the literature. Thus, the terminology is ours. Also,
what we called {\em staged infinite} is often simply called {\em
  infinite}. We used {\em staged infinite} to make explicit the
difference from a definition based on the presence of an infinite
number of nodes. Thereby we also obtain
a symmetry with the notion of {\em staged barred}. What we call {\em having an
  infinite branch} could alternatively be called {\em ill-founded}, or
{\em having a choice function}. In particular, the terminology {\em
  having an infinite branch} applies here to any predicate and is not
restricted to trees. Note that {\em well-founded} in the standard
meaning is the same as {\em barred} for the dual predicate. In
particular, when opposing ill-foundedness and well-foundedness, we
adopt a bias towards the tree view, i.e. towards the left column.
\begin{table}
  \caption{Useful relativisation of some of the concepts of Table~\ref{table:foundedness}}
  \label{table:foundedness-relativised}
\renewcommand*{\arraystretch}{1.4}
\begin{center}
\begin{tabular}{|p{4cm}|p{4cm}|}
\hline
\hfil {\em ill-foundedness-style} & \hfil {\em well-foundedness-style} \\
\hhline{-|-}
\multicolumn{2}{|c|}{\em relativised intensional concepts}\\
\hhline{-|-}
\hfil $T$ is productive from $u$ & \hfil $T$ is inductively barred from $u$ \\
\hfil $u \in \mbox{pruning of $T$}$ &
\hfil $u \in \mbox{hereditary closure of $T$}$ \\
\hhline{-|-}
\multicolumn{2}{|c|}{\em relativised intensional concepts relevant for the finite case}\\
\hhline{-|-}
\hfil $T$ has unbounded paths from $u$ & \hfil $T$ is uniformly barred from $u$\\
\hfil $\forall n\, \exists u'\, (|u'|=n \mytensor u @ u' \in \innerneg{T})$ &
\hfil $\exists n\, \forall u'\, (|u'|=n \mymultimap u @ u' \in \outerofpos{T})$ \\
\hhline{-|-}
\multicolumn{2}{|c|}{\em extensional concepts}\\ %
\hhline{-|-}
\hfil $T$ has an infinite branch from $u$ & \hfil $T$ is barred from $u$ \\
\hfil $\exists \alpha\, \forall u'\, (u' \prec_s \alpha \mymultimap u @ u' \in T)$ &
\hfil $\forall \alpha\, \exists u'\, (u' \prec_s \alpha \mytensor u @ u' \in T)$ \\
\hline
\end{tabular}
\end{center}
\end{table}
We have the following:
\begin{prop}
\label{prop:tree-unbounded}
If $T$ is a tree, then having unbounded paths is equivalent to being staged
infinite. Dually, if $T$ is monotone, being a uniform bar is
equivalent to being staged barred.
\end{prop}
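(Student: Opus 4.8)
The plan is to reduce both halves of the statement to a single pointwise observation: when $T$ is a tree, the downwards arborification $\innerneg{T}$ coincides with $T$, and dually, when $T$ is monotone, the upwards monotonisation $\outerofpos{T}$ coincides with $T$. Granting this, \emph{having unbounded paths} --- $\forall n\, \exists u\, (|u|=n \land u \in \innerneg{T})$ --- is literally \emph{being staged infinite} --- $\forall n\, \exists u\, (|u|=n \land u \in T)$ --- and similarly the definition of \emph{uniformly barred} unfolds into that of \emph{staged barred}, so there is nothing further to check.

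For the pointwise claim about trees, one inclusion is free: since $u \leq_s u$, the definition of $\innerneg{T}$ immediately gives $u \in \innerneg{T} \imp u \in T$ for an arbitrary predicate. For the converse I assume $T$ is a tree and $u \in T$, and show that every $u' \leq_s u$ lies in $T$ by induction on the derivation of $u' \leq_s u$ from the two generating rules: in the base case $u' = u$ and there is nothing to prove, while in the step case $u' \leq_s v$ with $u = v \star a$, so closure under restriction turns $v \star a \in T$ into $v \in T$ and the induction hypothesis yields $u' \in T$. Hence $\innerneg{T} = T$, and substituting this into the two predicates gives the first equivalence. The monotone case is exactly dual: $T \subseteq \outerofpos{T}$ holds for free because $u \leq_s u$, and conversely, if $T$ is monotone and $u \in \outerofpos{T}$, one picks $u' \leq_s u$ with $u' \in T$ and pushes membership upwards along the derivation of $u' \leq_s u$, each step $u' \leq_s v$, $u' \leq_s v \star a$ turning $v \in T$ into $v \star a \in T$ by closure under extension; so $\outerofpos{T} = T$, and this substitutes into \emph{uniformly barred} to give \emph{staged barred}.

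There is no genuine obstacle here. The only point requiring a modicum of care is that ``prefix of'' is the \emph{inductively generated} relation $\leq_s$, so the two arborification/monotonisation identities really do need an induction on that derivation: a single application of closure under restriction (respectively, extension) does not suffice when $u'$ sits several steps below $u$. Once that induction is carried out, the proposition is immediate.
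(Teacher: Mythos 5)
Your proof is correct and is exactly the argument the paper compresses into its one-line justification (``trees and monotone predicates are invariant under arborification and monotonisation''): you establish $\innerneg{T}=T$ for trees and $\outerofpos{T}=T$ for monotone predicates by induction on the derivation of $\leq_s$, and then the two pairs of definitions coincide syntactically. Nothing further is needed.
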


\begin{proof}
Because trees and monotone predicates are invariant under arborification
and monotonisation. 
\end{proof}

As a consequence, it is common to use the notion of staged infinite,
which is simpler to formulate, when we know that $T$ is a
tree. Otherwise, if $T$ is an arbitrary predicate which is not
necessarily a tree, there is no particular interest in using
the notion of staged infinite. Similarly, staged barred is a simpler
way to state uniformly barred when $T$ is monotone, i.e., conversely,
uniform bar is the expected refinement of staged barred when $T$ is
not known to be monotone.

A {\em progressing} $T$ may be {\em productive at} $\emptyseq$ without
being productive at all $u \in T$, so we may need to {\em prune} $T$ to extract from
it a {\em spread}. Dually, not all {\em barricaded} predicates are {\em inductive bars at} all $u$
but we can saturate them into inductive bars, by taking the {\em
  hereditary closure}. We make this formal in the following
proposition:

\begin{prop}
\label{prop:productive-pruning-spread}
If $T$ is productive then its pruning is a spread. Dually, if $T$ is barricaded
then its hereditary closure is an inductive bar.
\end{prop}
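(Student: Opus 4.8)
The plan is to observe that each half of the proposition is obtained by unfolding the relevant fixpoint exactly once, so that no genuine (co)induction is actually carried out.

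For the first claim, abbreviate $P \defeq \nu X.\lambda u.\,(u \in T \land \exists a\, u \star a \in X)$, the pruning of $T$. Being a fixpoint, $P$ satisfies $u \in P \iff (u \in T \land \exists a\, u \star a \in P)$ for every $u$. Reading this equivalence from left to right: on the one hand $T$ productive is by definition $\emptyseq \in P$, which is the first half of ``$P$ is a spread''; on the other hand, any $u \in P$ gives in particular $\exists a\, u \star a \in P$, so $P$ is progressing. Hence $P$ is a spread. (Reading the equivalence the other way one also sees $P \subseteq T$, i.e. that the pruning is really a sub-predicate of $T$, though the statement does not need this.)

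The dual claim is the exact mirror image, with the greatest fixpoint replaced by the least one and $\land,\exists$ by $\lor,\forall$. Write $H \defeq \mu X.\lambda u.\,(u \in T \lor \forall a\, u \star a \in X)$, the hereditary closure of $T$, with fixpoint equation $u \in H \iff (u \in T \lor \forall a\, u \star a \in H)$. Read from right to left this yields $(\forall a\, u \star a \in H) \imp u \in H$ for every $u$, that is, $H$ is hereditary --- and, just like progressingness of the pruning, this comes for free, with no hypothesis. Therefore the implication ``$H$ hereditary $\imp \emptyseq \in H$'' collapses to the plain assertion $\emptyseq \in H$, which is precisely ``$T$ is inductively barred'' (the well-foundedness-side analogue of ``$T$ is productive''); so under the dual hypothesis $H$ is barricaded, i.e. an inductive bar.

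I do not expect a real difficulty: the content is purely definitional. The one point to keep straight is the direction in which the fixpoint equation is used --- left-to-right for the pruning, to extract the existential clause that witnesses progressingness, and right-to-left for the hereditary closure, to extract the universal clause that witnesses hereditariness --- together with the standard pairing of dual notions (productive $\leftrightarrow$ inductively barred, spread $\leftrightarrow$ barricaded, pruning $\leftrightarrow$ hereditary closure).
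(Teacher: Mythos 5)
The first half of your argument is correct and is exactly the paper's proof: $\emptyseq$ in the pruning is literally the hypothesis ``$T$ productive'', and progressingness of the pruning is immediate from one unfolding of the greatest fixpoint.

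The dual half has a genuine gap at its final step. Your computation correctly shows that the hereditary closure $H$ is hereditary for free, hence that ``$H$ is barricaded'' collapses to $\emptyseq \in H$, which (by idempotence of the closure) is also ``$H$ is an inductive bar'' and is by definition ``$T$ is inductively barred''. But the hypothesis of the proposition is ``$T$ is \emph{barricaded}'', not ``$T$ is inductively barred'', and your concluding sentence silently substitutes the latter for the former. These are not interchangeable: the paper records only the implication ``inductively barred $\imp$ barricaded'', and the converse is exactly what fails for non-hereditary predicates. Concretely, take $B \defeq \Bool$ and let $T$ hold exactly on the two one-step extensions of the singleton sequence $\langle 0\rangle$: then $T$ is not hereditary (at $\langle 0\rangle$ all extensions are in $T$ but $\langle 0\rangle$ is not), so $T$ is vacuously barricaded, yet its hereditary closure is $T \cup \{\langle 0\rangle\}$, which does not contain $\emptyseq$ and hence is not an inductive bar. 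So the implication ``$T$ barricaded $\imp \emptyseq \in H$'' that your last sentence relies on is not available. For comparison, the paper disposes of this half with a one-word appeal to duality; the faithful contrapositive-with-dualised-predicates of the first half is ``if the \emph{hereditary closure} of $T$ is barricaded then $T$ is inductively barred'', which is precisely the (essentially tautological) equivalence your unfolding establishes. So your computation is sound, but it proves that dualised form rather than the implication as worded in the statement, and the final inference bridging the two is where the argument breaks.
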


\begin{proof}
That $\emptyseq$ is in the pruning of $T$ is direct from $T$
productive. That the pruning of $T$ is progressing on all $u$ is also
direct by construction of the pruning. The other part of the statement
is by duality.
\end{proof}

Conversely, by coinduction, the pruning of any progressing predicate $T$ contains $T$
and dually, induction shows that the hereditary closure of an hereditary predicate $T$
is included in $T$. Thus, we have:
\begin{prop}
$T$ spread implies $T$ productive, and, dually, $T$ inductively barred
implies $T$ barricaded. \hfill $\Box$
\end{prop}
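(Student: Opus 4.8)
The plan is to obtain both halves as immediate consequences of the two closure facts recorded just before the statement: that for a \emph{progressing} $T$ the predicate $T$ is contained in its pruning (by coinduction), and dually that for a \emph{hereditary} $T$ the hereditary closure of $T$ is contained in $T$ (by induction). Each half is then a one-line instantiation in its column, with the duality handled uniformly.

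For the first half I would unfold ``$T$ spread'' as the conjunction of $\emptyseq \in T$ and $T$ progressing. Applying the coinduction fact to the progressing $T$ gives the inclusion of $T$ into its pruning; feeding $\emptyseq \in T$ through this inclusion shows that $\emptyseq$ lies in the pruning of $T$, which is exactly the definition of ``$T$ productive''.

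For the second half I would unfold ``$T$ barricaded'' as the implication that $T$ hereditary entails $\emptyseq \in T$, so I would assume $T$ hereditary and aim to derive $\emptyseq \in T$. Applying the induction fact to this hereditary $T$ gives that the hereditary closure of $T$ is contained in $T$; since ``$T$ inductively barred'' asserts that $\emptyseq$ lies in the hereditary closure of $T$, this inclusion delivers $\emptyseq \in T$, which discharges the assumption and establishes ``$T$ barricaded''.

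I expect no genuine obstacle beyond the fixpoint reasoning already carried out for the two auxiliary facts. The inclusion of $T$ into its pruning is the coinduction (greatest-fixpoint) principle applied to the observation that ``$T$ progressing'' is precisely the statement that $T$ is a post-fixpoint of $\lambda X.\lambda u.(u \in T \land \exists a\, u \star a \in X)$; dually, the inclusion of the hereditary closure of $T$ into $T$ is the induction (least-fixpoint) principle applied to the fact that ``$T$ hereditary'' makes $T$ a pre-fixpoint of $\lambda X.\lambda u.(u \in T \lor \forall a\, u \star a \in X)$. Since these facts are taken as given, the proposition reduces to the two short deductions above.
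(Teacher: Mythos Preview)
Your proposal is correct and matches the paper's approach exactly: the paper states the two auxiliary inclusion facts (pruning of a progressing $T$ contains $T$, hereditary closure of a hereditary $T$ is contained in $T$) in the sentence immediately preceding the proposition, and the proposition is then recorded as an immediate consequence with no further proof. Your elaboration of the fixpoint reasoning behind those inclusions is accurate and is precisely what the paper leaves implicit.
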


We can then relate productive and spread, as well as inductive bar and
barricaded as follows:

\begin{prop}
\label{prop:spread-productive}
$T$ is productive iff there exists $U \subseteq T$ which is a spread. Dually,
$T$ is an inductive bar iff all $U \supseteq T$ is barricaded.
\end{prop}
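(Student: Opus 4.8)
The plan is to prove the two directions of the iff for the ill-foundedness side, and obtain the well-foundedness side by the usual duality (classical/linear negation swapping tree/monotone, pruning/hereditary closure, spread/barricaded, productive/inductively barred), exactly as in the preceding propositions. So I focus on: \emph{$T$ is productive iff there exists $U \subseteq T$ which is a spread.}

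For the forward direction, suppose $T$ is productive, i.e. $\emptyseq \in \mbox{pruning of }T$. By Proposition~\ref{prop:productive-pruning-spread}, the pruning of $T$ is a spread; and by the coinductive characterisation recalled just before the statement, the pruning of $T$ is contained in $T$ (since the pruning unfolds $\nu X.\lambda u.(u \in T \mywith \exists a\, u\star a \in X)$, each $u$ in the pruning satisfies $u \in T$). So $U \defeq \mbox{pruning of }T$ is the required $U \subseteq T$ that is a spread. This direction is essentially a repackaging of results already established in the excerpt.

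For the backward direction, suppose $U \subseteq T$ is a spread, i.e. $\emptyseq \in U$ and $U$ is progressing. I want $\emptyseq \in \mbox{pruning of }T$, which by the coinduction principle for $\nu X.\lambda u.(u \in T \mywith \exists a\, u\star a \in X)$ amounts to exhibiting a predicate $X$ with $\emptyseq \in X$ and $X(u) \imp (u \in T \land \exists a\, u\star a \in X)$. Take $X \defeq U$: then $\emptyseq \in U$ holds by hypothesis, and for any $u \in U$ we have $u \in T$ (because $U \subseteq T$) and $\exists a\, u \star a \in U$ (because $U$ is progressing at $u$). Hence $U$ is a post-fixed point, so $U \subseteq \mbox{pruning of }T$, and in particular $\emptyseq \in \mbox{pruning of }T$, i.e. $T$ is productive.

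The well-foundedness side is obtained by dualising: if $T$ is an inductive bar then $\emptyseq$ is in the hereditary closure of $T$, and one shows by the induction principle associated with $\mu X.\lambda u.(u \in T \myoplus \forall a\, u\star a \in X)$ that any $U \supseteq T$ is barricaded — i.e. $U$ hereditary implies $\emptyseq \in U$ — by proving $\{u : \mbox{hereditary closure of }T \mbox{ holds at } u\} \subseteq U$ from $U \supseteq T$ and $U$ hereditary; conversely, if every $U \supseteq T$ is barricaded, apply this to $U$ the hereditary closure of $T$ itself (which contains $T$ and is hereditary by construction) to conclude $\emptyseq$ is in the hereditary closure of $T$. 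I expect the only mildly delicate point to be making explicit that the coinduction (resp. induction) principle is exactly what turns ``$U$ is a spread contained in $T$'' (resp. ``$U \supseteq T$ hereditary'') into membership of $\emptyseq$ in the pruning (resp. hereditary closure); everything else is bookkeeping with the definitions in Table~\ref{table:foundedness}.
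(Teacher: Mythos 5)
Your proof is correct and follows essentially the same route as the paper: the forward direction takes $U$ to be the pruning of $T$ via Proposition~\ref{prop:productive-pruning-spread} together with the observation that the pruning is contained in $T$, and the backward direction is the coinduction showing a spread contained in $T$ witnesses productivity of $T$ (you fold the paper's two steps --- ``a spread is productive'' and ``inclusion preserves productivity'' --- into a single post-fixed-point argument, which is an immaterial variation). The dual half, which the paper dispatches with ``by duality'', is spelled out correctly in your version.
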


\begin{proof}
By duality, it is enough to prove the first equivalence. From left to
right, we use Prop.~\ref{prop:productive-pruning-spread},
observing that the pruning of $T$ is included in $T$. From right to
left, a spread is productive and a coinduction suffices to prove that
inclusion preserves productivity.
\end{proof}

On the other side, having unbounded paths is equivalent to being a
spread or to being productive only when $T$ is finitely-branching. Similarly
for being uniformly barred compared to being an inductive bar or being barricaded.
Moreover, none of the equivalences hold linearly. The second one
requires intuitionistic logic, i.e. requires the ability to use an
hypothesis several times while the first one, dually, requires a bit
of classical reasoning\footnote{or, to be more precise,
  co-intuitionistic reasoning, that is, using a multi-conclusion
  sequent calculus to formulate the reasoning, with the contraction
  rule allowed on conclusions but not on hypotheses}.

For $S$ being a class of formulae and $P$ and $Q$ ranging over~$S$,
let $D_S$ be the principle $ \forall x y\,(P(x) \lor Q(y))) \imp
(\forall x\,P(x)) \lor (\forall y\,Q(y))$. Dually, let $C_S$ be
$(\exists x\, P(x)) \mywith (\exists y\, Q(y)) \mymultimap \exists x \exists
y\,(P(x) \mywith Q(y))$.

\begin{prop}
\label{prop:productive-unbounded}
If $B$ is non-empty finite, then productive is equivalent to having
unbounded paths and being an inductive bar is equivalent to uniformly
barred. The first statement holds in a logic where $D_S$ holds and the
second in a logic where $C_S$ holds, for $S$ a class of formulae
containing arithmetical existential quantification over $T$.
\end{prop}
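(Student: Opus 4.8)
The plan is to prove the two equivalences in parallel, exploiting the stated duality so that only the first (``productive iff unbounded paths'') needs real work. Recall from Prop.~\ref{prop:tree-unbounded} that when $T$ is already a tree, having unbounded paths coincides with being staged infinite; and from earlier propositions, $T$ productive is equivalent to the existence of a spread $U \subseteq T$. Since $B$ is in bijection with a non-empty bounded subset of $\N$, the branching is finite, and this is exactly what lets us turn a ``width-wise'' statement (for each length $n$, some path of that length survives) into a ``depth-wise'' statement (a single spread, i.e. a coinductive/progressing subtree). The only nontrivial direction is: from unbounded paths to productive. The converse — productive (hence a spread inside $T$, hence an infinite branch, hence in particular a path of every finite length inside the downwards arborification) implies unbounded paths — is routine and does not need finiteness of $B$ nor any logical hypothesis beyond the intuitionistic/linear base.

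**For the hard direction** I would argue as follows. Assume $T$ has unbounded paths, i.e. $\forall n\, \exists u\, (|u|=n \mathbin{\land} u \in \innerneg{T})$. Without loss of generality replace $T$ by its downwards arborification $\innerneg{T}$, which is a tree contained in $T$ up to the relevant clauses (arborification is idempotent and a subpredicate in the tree-sense), so we may assume $T$ is a tree and work with ``staged infinite'' via Prop.~\ref{prop:tree-unbounded}. I want to show the pruning $\nu X.\lambda u.\,(u \in T \mathbin{\land} \exists a\, u\star a \in X)$ contains $\emptyseq$. By the coinduction principle it suffices to exhibit a predicate $Y$ with $\emptyseq \in Y$ and $Y \subseteq \lambda u.\,(u \in T \mathbin{\land} \exists a\, u\star a \in Y)$. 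Take $Y \defeq \lambda u.\,(u \in T \mathbin{\land} \forall n\,\exists v\,(|v|=n \mathbin{\land} u @ v \in T))$, i.e. ``$T$ is staged infinite from $u$''. Then $\emptyseq \in Y$ is the hypothesis. For the coinductive step, fix $u \in Y$: then $u \in T$; and for each $n$ we get some $v_n$ of length $n+1$ with $u @ v_n \in T$, whose first element $a_n$ lies in $B$. Here is where finiteness of $B$ enters: by the pigeonhole principle there is a single $a \in B$ that is $a_n$ for infinitely many $n$ — or, more carefully and constructively, we use the principle $D_S$ (resp. $C_S$) to commute the universal quantifier over $n$ with the finite disjunction ``$a_n = b_0 \lor \dots \lor a_n = b_p$'' ranging over the finitely many elements of $B$, concluding $\exists a \in B\, \forall n\, \exists v'\, (|v'|=n \mathbin{\land} u \star a @ v' \in T)$ (using that $T$ is a tree to trim lengths), which says $u \star a \in Y$. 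This gives $\exists a\, u\star a \in Y$ and closes the coinduction.

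**The main obstacle** is getting the logical bookkeeping exactly right: the naive pigeonhole argument silently uses a choice of ``the infinitely-often-occurring successor'', which is not available linearly or co-intuitionistically. The point of invoking $D_S$ is precisely to replace that choice by the permitted quantifier commutation — $\forall n\,\bigvee_{b \in B}\varphi_b(n)$ to $\bigvee_{b\in B}\forall n\,\varphi_b(n)$ — and one must check that the formulae $\varphi_b(n) \defeq \exists v'(|v'|=n \mathbin{\land} u\star b @ v' \in T)$ lie in the class $S$, i.e. are arithmetical existential quantifications over $T$, so that the hypothesis ``$D_S$ holds, for $S$ containing arithmetical existential quantification over $T$'' applies. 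One should also be careful that $D_S$ as stated is a binary commutation and must be iterated $|B|$ times, and that the tree property of $T$ is what lets us pass freely between ``path of length exactly $n$'' and ``path of length $\ge n$'' when peeling off the first letter. The well-foundedness side is then obtained by the evident dualisation: replace $T$ by $\outerofpos{T}$ to assume $T$ monotone, replace coinduction/$\nu$ by induction/$\mu$, replace $D_S$ by $C_S$, and replace the pigeonhole step by its co-intuitionistic mirror, commuting a finite conjunction past an existential quantifier over $n$.
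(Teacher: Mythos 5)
Your proposal is correct and follows essentially the same route as the paper's proof: reduce to the finite/binary branching case, handle the easy direction by induction on the length, and for the hard direction run a coinduction whose successor-selection step is discharged by $D_S$ together with the ``take a path of length $\mathit{max}(n_0,n_1)+1$, peel off its first letter, and trim using the tree property'' trick (dually $C_S$ for the bar side). The paper phrases the coinduction as the relativised biconditional ``productive from $u$ iff unbounded paths from $u$'' for $u \in \innerneg{T}$ rather than via your explicit invariant $Y$, but this is only a presentational difference.
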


\begin{proof}
Relying on duality, we only prove the first statement. Based on our 
definition of finite, we also assume without loss of
generality that $B$ is $\Bool$. Our proof relies on an
argument found in \cite{Berger09,Ishihara05} and proceeds by proving
more generally for $u \in \innerneg{T}$ that $T$ is productive from $u$ iff $T$ has unbounded
paths from $u$.

From left to right, we reason by induction on $n$. If $n$ is $0$ this
is direct from $T$ productive by defining $u' \defeq
\emptyseq$. Otherwise, by $T$ productive from $u$, we get $a$ such
that $T$ is productive from $u\star a$, obtaining by induction $u'$ of
length $n-1$ such that $(u\star a) @ u' = u @(a @u') \in
\innerneg{T}$, showing that $a@u'$ is the expected sequence of length
$n$.

From right to left, we reason coinductively. To prove that $u\in T$,
we take a path of length $0$. Then, in order to apply the coinduction
hypothesis and prove the coinductive part, we prove that there is $b$
such that $T$ has unbounded paths from $u \star b$. By $D_S$, it is
enough to prove that for all $n_0$ and $n_1$, there is a path $u_0$ of
length $n_0$ and a path $u_1$ of length $n_1$ such that either $(u
\star 0) @ u_0$ or $(u \star 1) @ u_1$ is in $\innerneg{T}$.  So, let
$n_0$ and $n_1$ be given lengths. By unbounded paths from $u$, we get
a sequence $u''$ of length $\mathit{max}(n_0,n_1)+1$ such that $u @
u'' \in \innerneg{T}$. This is a non-empty sequence, hence a sequence
of the form $b @ u'$ so that we have either $(u \star 0) @ u' \in
\innerneg{T}$ or $(u \star 1) @ u' \in \innerneg{T}$ for $u'$ of
length $\mathit{max}(n_0,n_1)$. By closure of $\innerneg{T}$, prefixes
$u_0$ of length $n_0$ and $u_1$ of length $n_1$ of $u'$ can be
extracted which both are in $\innerneg{T}$.
\end{proof}

\textit{Remark:} Based on the decomposition of $\WKL$ for decidable trees into
a choice principle and the Lesser Limited Principle of Omniscience (LLPO), we
suspect that we actually have the stronger result that the equivalence
of unbounded paths and productivity implies $D_S$ for the
corresponding underlying class of formulae $S$, and similarly with
$C_S$ and the dual statement.

\omitnow{
\subsection{The different equivalent formulations of Kőnig's Lemma and the Fan Theorem}

The previous results tend to relativise the importance of the
definitional variations in Kőnig's Lemma and the Fan Theorem.
suggests

the presence of the classical principe $D_S$ be considered
}


\subsection{Bar induction and tree-based dependent choice}

In the first part of Table~\ref{table:5}, we reformulate using our definitions
the standard statement of bar induction and a tree-based formulation of dependent choice
from the literature. The standard form of Bar Induction, as e.g. in~\cite{KleeneVesley65},
corresponds in our classification to $\BI^{\mathit{ind}}_{BT}$, apart from the fact
that we do not fix in advance the logical complexity of $B$ -- such as
being countable or not -- or the arithmetic strength of $T$ -- i.e.\
whether it is decidable, or recursively enumerable, etc. For 
dependent choice\footnote{or dependent choice{\em s} for some
  authors, e.g.~\cite{Jech73}}, we consider here a pruned-tree-based
definition $\DC^{\mathit{spread}}_{BT}$ corresponding to the instance
$\DC_{\aleph_0}$ of Levy's family of Dependent
Choice indexed on cardinals~\cite{Levy64}\footnote{Alternatively, it can be seen as the
  generalisation to arbitrary codomains of the Boolean dependent
  choice principle $DC^{\lor}$ described e.g. in
  Ishihara~\cite{Ishihara05}.}. A comparison with other logically
equivalent definitions of dependent choice will be given in
Section~\ref{sec:standard-DC}.

These formulations of Tree-based Dependent Choice and Bar Induction
are not dual\footnote{This might be related to coinductive reasoning
  historically coming later and being less common than inductive
  reasoning in mathematics.} of each other but
Prop.~\ref{prop:spread-productive} gives us a way to connect
each one with the dual of the other:

\begin{theorem}
\label{thm:spread-prod}
As schemes, generalised over $T$, $\DC^{\mathit{spread}}_{BT}$ and $\DC^{\mathit{productive}}_{BT}$ are equivalent, and so are
$\BI^{\mathit{barricaded}}_{BT}$ and $\BI^{\mathit{ind}}_{BT}$.~$\Box$
\end{theorem}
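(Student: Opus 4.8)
The plan is to deduce both equivalences from Proposition~\ref{prop:spread-productive}, crucially exploiting that the principles are \emph{schemes} quantified over the predicate $T$: to establish one instance we are allowed to invoke another instance of the companion scheme at an auxiliary predicate built from $T$. Two of the four implications are immediate and need no scheme-level reasoning. Since every spread is productive, $\DC^{\mathit{productive}}_{BT}$ applied to a spread $T$ already delivers an infinite branch, which is exactly the conclusion of $\DC^{\mathit{spread}}_{BT}$; hence $\DC^{\mathit{productive}}_{BT} \imp \DC^{\mathit{spread}}_{BT}$ for each fixed $T$. Dually, since every inductively barred predicate is barricaded, $\BI^{\mathit{ind}}_{BT}$ applied to a barred $T$ gives that $T$ is barricaded, so $\BI^{\mathit{ind}}_{BT} \imp \BI^{\mathit{barricaded}}_{BT}$ for each fixed $T$.

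For the reverse implication $\DC^{\mathit{spread}} \imp \DC^{\mathit{productive}}$ (now between schemes over a fixed $B$): assume $T$ is productive. By Proposition~\ref{prop:spread-productive} there is $U \subseteq T$ that is a spread -- concretely the pruning of $T$, which is a spread by Proposition~\ref{prop:productive-pruning-spread} and is included in $T$ by construction. The instance $\DC^{\mathit{spread}}_{BU}$ of the scheme yields an infinite branch $\alpha$ of $U$, i.e. $\forall u\,(u \prec_s \alpha \imp u \in U)$; since $U \subseteq T$, the same $\alpha$ is an infinite branch of $T$, which is the conclusion of $\DC^{\mathit{productive}}_{BT}$. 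Dually, for $\BI^{\mathit{barricaded}} \imp \BI^{\mathit{ind}}$: assume $T$ is barred. By the dual part of Proposition~\ref{prop:spread-productive}, to conclude that $T$ is inductively barred it suffices to show that every $U \supseteq T$ is barricaded. Fix such a $U$; since being barred is preserved under enlarging the predicate and $T$ is barred, $U$ is barred, so the instance $\BI^{\mathit{barricaded}}_{BU}$ of the scheme gives that $U$ is barricaded, as required.

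The only point requiring care is that these transfers genuinely take place \emph{across the quantifier on $T$}: we apply $\DC^{\mathit{spread}}$ (resp. $\BI^{\mathit{barricaded}}$) not at $T$ but at the auxiliary predicate $U$, so we must verify that $U$ still meets the hypothesis of that scheme -- a spread is still a spread, and a barred predicate stays barred under enlargement -- both being immediate. Together with the monotonicity of ``having an infinite branch'' and of ``barred'' in the predicate argument (read off directly from Table~\ref{table:foundedness}), this completes the argument; no new combinatorial content is needed beyond Proposition~\ref{prop:spread-productive}, which already packages the pruning/hereditary-closure constructions. The expected (minor) obstacle is purely bookkeeping: making sure the side-conditions of $\DC^{\mathit{spread}}_{BU}$ and $\BI^{\mathit{barricaded}}_{BU}$ match what the intermediate predicate $U$ actually satisfies.
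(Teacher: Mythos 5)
Your proof is correct and follows exactly the route the paper intends: the theorem is stated with an omitted proof precisely because, as the surrounding text notes, Proposition~\ref{prop:spread-productive} (together with the fact that a spread is productive and an inductive bar is barricaded) supplies all four implications, with the only substantive steps being the application of the companion scheme to the pruning $U \subseteq T$ (resp.\ to an arbitrary $U \supseteq T$) and the monotonicity of ``has an infinite branch'' and ``barred'' in the predicate argument. Nothing is missing.
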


\subsection{K\texorpdfstring{ő}{\"o}nig's Lemma and the Fan Theorem}

The second part of Table~\ref{table:5} is about Kőnig's Lemma and the
Fan Theorem.

The Fan Theorem is sometimes stated over finitely-branching trees,
where the definition of finite itself may
vary~\cite{KleeneVesley65,Ishihara05}, but it is also sometimes
considered by default to be on a binary
tree~\cite{JBerger05,BergerIshihara05,Berger09,Cheung15,Coquand04,Ishihara06}
in which case the finite version is sometimes called extended. We
call here Fan Theorem the finite version, for {\em finite} defined as being in bijection with a finite prefix of $\N$, and for all branchings being on the same finite $B$. The statement of the Fan Theorem
sometimes relies on the notion of inductive bar
(e.g. \cite{Coquand04}), what we call here $\FT^{\mathit{ind}}_{BT}$, or on the definition of staged barred for
monotone predicates (as a variant in~\cite{Ishihara06}), called here $\FT^{\mathit{staged}}_{BT}$, or on the
dual notions of finite tree (i.e., technically of staged barred for the
negation of a tree) and well-founded tree (i.e., technically of
inductively barred for the negation of a tree) in
e.g.~\cite{BergerIshiharaSchuster12}, which respectively corresponds to
$\FT^{\mathit{staged}}_{BT^C}$ and $\FT^{\mathit{ind}}_{BT^C}$
for $T^C$ the complement of $T$. But it also often relies on the definition of uniform
bar~\cite{JBerger05,Berger09,BergerIshihara05,Cheung15,Ishihara05,Ishihara06,KleeneVesley65}
over an arbitrary predicate, what we call here $\FT^{\mathit{uniform}}_{BT}$. Note that, as in the case of bar
induction, we omit the usual restriction of the
statement of the Fan Theorem to decidable predicates.

Kőnig's Lemma is generally stated as $T$ infinite tree implies $T$ has an
infinite branch, but the definition of $T$ infinite may differ from
author to author. The definition in~\cite{BergerIshiharaSchuster12,Ishihara05} expresses
explicitly that the infinity can only be in depth. It does so by requiring
arbitrary long branches rather than an infinite number of nodes. The exact
definition of arbitrarily long branches also depends on authors. For
instance, \cite{Veldman14} relies (up to classical reasoning) on
having unbounded paths for arbitrary predicates rather than trees, what we call here
$\KL^{\mathit{unbounded}}_{BT}$, but most of the time it is about what
we call staged infinite
tree~\cite{Berger09,Ishihara05,Ishihara06}, leading formally to the definition $\KL^{\mathit{staged}}_{BT}$.
The versions $\KL^{\mathit{staged}}_{BT}$ and
$\KL^{\mathit{unbounded}}_{BT}$ imply LLPO~\cite{Ishihara90}.
Contrastingly, the versions which we call $\KL^{\mathit{spread}}_{BT}$
and $\KL^{\mathit{productive}}_{BT}$ are ``pure choice'' versions not
implying LLPO (see Prop.~\ref{prop:productive-unbounded} for the
connection). The binary variant $\KL^{\mathit{spread}}_{\Bool T}$ of the former occurs for instance in
the literature with name $C_{\mathsf{WKL}}$~\cite{Berger09}.


\begin{table}
\hypertarget{dict}{\caption{Tree-based dependent choice and bar induction dual principles}}
\label{table:5}
\renewcommand*{\arraystretch}{1.4}
\begin{center}
\begin{tabular}{|p{4cm}|p{4cm}|}
\hline
\hfil {\em ill-foundedness-style} & \hfil {\em well-foundedness-style} \\
\hhline{-|-}
\multicolumn{2}{|c|}{\textit{$T$ branching over arbitrary $B$}}\\
\hhline{-|-}
Tree-based Dependent Choice\hfill\break ($\DC^{\mathit{spread}}_{BT}$) &
  Alternative Bar Induction\hfill\break ($\BI^{\mathit{barricaded}}_{BT}$) \\
$T$ spread $\imp$\hfill\break \hspace*{1mm} $T$ has an infinite branch &
  $T$ barred $\imp$\hfill\break \hspace*{1mm} $T$ is barricaded\\
\hhline{-|-}
Alternative Tree-based Dependent Choice ($\DC^{\mathit{productive}}_{BT}$) &
  Bar Induction ($\BI^{\mathit{ind}}_{BT}$) \\
$T$ productive $\imp$\hfill\break \hspace*{1mm} $T$ has an infinite branch &
 $T$ barred $\imp$\hfill\break \hspace*{1mm}  $T$ inductively barred\\
\hhline{-|-}
\multicolumn{2}{|c|}{\textit{$T$ branching over non-empty finite $B$}}\\
\hhline{-|-}
$\KL^{\mathit{spread}}_{BT} \defeq \DC^{\mathit{spread}}_{BT}$ (finite $B$)  &
$\FT^{\mathit{barricaded}}_{BT} \defeq \BI^{\mathit{barric.}}_{BT}$ (fin. $B$) \\
\hhline{-|-}
$\KL^{\mathit{productive}}_{BT} \defeq \DC^{\mathit{prod.}}_{BT}$ (fin. $B$)  &
$\FT^{\mathit{ind}}_{BT} \defeq \BI^{\mathit{ind}}_{BT}$ (finite $B$) \\
\hhline{-|-}
Alternative Kőnig's Lemma ($\KL^{\mathit{unbounded}}_{BT}$) &
  Fan Theorem ($\FT^{\mathit{uniform}}_{BT}$) \\
$T$ with unbounded paths $\imp$ &
  $T$ barred $\imp$ \\[-2mm]
\hspace*{1mm} $T$ has an infinite branch &
\hspace*{1mm}   $T$ uniform bar\\
\hhline{-|-}
Kőnig's Lemma ($\KL^{\mathit{staged}}_{BT}$) &
  Staged Fan Theorem ($\FT^{\mathit{staged}}_{BT}$) \\
$T$ staged-infinite tree $\imp$ &
  $T$ barred and monotone $\imp$ \\[-2mm]
\hspace*{1mm} $T$ has an infinite branch &
\hspace*{1mm}   $T$ staged barred \\
\hhline{-|-}
\hline
\end{tabular}
\end{center}
\end{table}

\smallskip

%

There is a standard way to go from arbitrary
predicates to trees or monotone predicates by associating to each
predicate its (downward or upwards) tree or monotone closure. This allows to show that
it is equivalent to state Kőnig's Lemma on trees using
staged-infinity or on
arbitrary predicates using unbounded paths, and,
similarly, that it is
equivalent to state the Fan Theorem on monotone predicates using
staged barred ($\FT^{\mathit{staged}}_{BT}$) or on arbitrary
predicates using uniformly barred.

\begin{prop}
\label{prop:kl-ft-variants}
As schemes, when generalised over $T$, $\KL^{\mathit{staged}}_{BT}$ is
equivalent to $\KL^{\mathit{unbounded}}_{BT}$ and
$\FT^{\mathit{staged}}_{BT}$ to $\FT^{\mathit{uniform}}_{BT}$.
\end{prop}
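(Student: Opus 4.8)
The plan is to derive each instance of one scheme from a single instance of the other, mediated by the arborification operator $\innerneg{(\cdot)}$ in the $\KL$ case and by the upwards monotonisation operator $\outerofpos{(\cdot)}$ in the $\FT$ case. The ingredients are: (i) $\innerneg{T}$ is always a tree and $\outerofpos{T}$ is always monotone, with $\innerneg{T} \subseteq T \subseteq \outerofpos{T}$ (the left inclusion because $u \leq_s u$, the right one trivially); (ii) a tree is invariant under downwards arborification and a monotone predicate is invariant under upwards monotonisation, as already noted after Table~\ref{table:2bis} and used in the proof of Prop.~\ref{prop:tree-unbounded}; and (iii) the purely definitional observations, obtained by unfolding Table~\ref{table:foundedness}, that ``$T$ has unbounded paths'' \emph{is} ``$\innerneg{T}$ is staged infinite'' and ``$T$ is uniformly barred'' \emph{is} ``$\outerofpos{T}$ is staged barred''. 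Two trivialities are also used: an infinite branch of $\innerneg{T}$ is an infinite branch of $T$ (by $\innerneg{T}\subseteq T$), and a predicate extending a barred predicate is barred.

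For the direction $\KL^{\mathit{staged}} \Rightarrow \KL^{\mathit{unbounded}}$, take an arbitrary $T$ with unbounded paths; then $\innerneg{T}$ is a staged-infinite tree by (i) and (iii), so $\KL^{\mathit{staged}}_{B\innerneg{T}}$ produces an infinite branch of $\innerneg{T}$, which is an infinite branch of $T$. For $\KL^{\mathit{unbounded}} \Rightarrow \KL^{\mathit{staged}}$, let $T$ be a staged-infinite tree; by (ii) $\innerneg{T} = T$, so $T$ has unbounded paths, and $\KL^{\mathit{unbounded}}_{BT}$ applies (equivalently, invoke Prop.~\ref{prop:tree-unbounded} directly). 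The $\FT$ case is the exact dual. For $\FT^{\mathit{staged}} \Rightarrow \FT^{\mathit{uniform}}$, take an arbitrary barred $T$; by (i) $\outerofpos{T}$ is monotone, and it is barred since $T \subseteq \outerofpos{T}$, so $\FT^{\mathit{staged}}_{B\outerofpos{T}}$ gives that $\outerofpos{T}$ is staged barred, i.e. $T$ is uniformly barred. For $\FT^{\mathit{uniform}} \Rightarrow \FT^{\mathit{staged}}$, let $T$ be barred and monotone; by (ii) $\outerofpos{T} = T$, so the conclusion ``$T$ uniformly barred'' of $\FT^{\mathit{uniform}}_{BT}$ reads ``$T$ staged barred''.

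I expect no real obstacle here: the argument is bookkeeping about which closure operator to apply on which side. The single point requiring a moment's care is that the tree- and monotonicity-hypotheses built into the ``staged'' principles are automatically discharged by $\innerneg{T}$ and $\outerofpos{T}$ respectively, so that the instances $\KL^{\mathit{staged}}_{B\innerneg{T}}$ and $\FT^{\mathit{staged}}_{B\outerofpos{T}}$ are legitimately available; and that, on the other side, invariance is exactly what turns the ``unbounded/uniform'' hypothesis or conclusion back into the ``staged'' one when $T$ is already a tree or monotone.
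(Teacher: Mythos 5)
Your proof is correct and follows essentially the same route as the paper: mediate by the downwards arborification $\innerneg{T}$ (resp.\ upwards monotonisation $\outerofpos{T}$), use $\innerneg{T}\subseteq T$ to transport the infinite branch, and use invariance of trees/monotone predicates under these closures (Prop.~\ref{prop:tree-unbounded}) for the converse direction. The paper merely states the $\FT$ case ``by duality'' where you spell it out, and leaves implicit the definitional unfolding you record as item (iii); these are presentational differences only.
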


\begin{proof}
We treat the first equivalence. From left to right, if $T$ is a
predicate, we apply $\KL^{\mathit{staged}}_{BT}$ to
$\innerneg{T}$. The resulting infinite branch is an infinite branch in
$T$ because $\innerneg{T} \subseteq T$. From right to left, the statement holds by
Prop.~\ref{prop:tree-unbounded}. The second equivalence is by
duality.
\end{proof}






\subsection{Choice and bar induction as relating intensional and extensional concepts}

The intensional definitions are stronger than the extensional ones,
which implies that the choice and bar induction axioms can
alternatively be seen as stating the logical equivalence of the
intensional and extensional versions of 
ill-foundedness and well-foundedness properties (of various
strengths). 

\begin{theorem}
\label{thm:converse}
$T$ inductively barred implies $T$ barred. Dually, 
$T$ has an infinite branch implies $T$ is productive.
\end{theorem}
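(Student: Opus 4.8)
The plan is to prove both halves by first strengthening them to their relativised forms from Table~\ref{table:foundedness-relativised} and then reading off the conclusion from the (co)inductive definitions of \emph{inductively barred} and \emph{productive}. For the first half I would prove that for every $u$, ``$T$ inductively barred from $u$'' implies ``$T$ barred from $u$'', and then specialise to $u=\emptyseq$ (using $\emptyseq @ u' = u'$). Since ``$T$ inductively barred from $u$'' means that $u$ belongs to the least fixpoint $\mu X.\lambda u.\,(u \in T \lor \forall a\, u \star a \in X)$, by the induction principle for $\mu$ it suffices to check that the predicate $\lambda u.\,(\mbox{$T$ barred from $u$})$ is a pre-fixpoint of the monotone operator $\Phi(X) = \lambda u.\,(u \in T \lor \forall a\, u \star a \in X)$.

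This amounts to two cases. If $u \in T$, then for any $\alpha$ the empty prefix $u' := \emptyseq$ witnesses that $T$ is barred from $u$, since $\emptyseq \prec_s \alpha$ and $u @ \emptyseq = u \in T$. If instead $T$ is barred from $u \star a$ for every $a$, then, given $\alpha$, I put $a := \alpha(0)$ and decompose $\alpha = a @ \alpha'$ with $\alpha' := \lambda n.\,\alpha(n+1)$; the hypothesis at $u \star a$ applied to the branch $\alpha'$ yields $u''$ with $u'' \prec_s \alpha'$ and $(u \star a) @ u'' \in T$, and then $u' := a @ u''$ does the job: $u' \prec_s \alpha$ by Prop.~\ref{prop:prec-star}, and $u @ u' = u @ (a @ u'') = (u \star a) @ u'' \in T$ by associativity of concatenation.

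The second half is the mirror image (it also follows from the classical/linear duality exploited throughout this section): I would show that for every $u$, ``$T$ has an infinite branch from $u$'' implies ``$T$ is productive from $u$'', by checking, via the coinduction principle for $\nu$, that $\lambda u.\,(\mbox{$T$ has an infinite branch from $u$})$ is a post-fixpoint of $\Psi(X) = \lambda u.\,(u \in T \land \exists a\, u \star a \in X)$, whose greatest fixpoint is precisely the pruning of $T$; then I specialise to $\emptyseq$. Given an infinite branch $\alpha$ from $u$, the empty prefix gives $u \in T$, and, writing $\alpha = a @ \alpha'$ as above, $\alpha'$ is an infinite branch from $u \star a$, again by Prop.~\ref{prop:prec-star} and associativity. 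Everything here is routine; the only mildly delicate point is the bookkeeping of the tail decomposition $\alpha = a @ \alpha'$, the identity $(u \star a) @ u'' = u @ (a @ u'')$, and invoking Prop.~\ref{prop:prec-star} in exactly the right form. No choice principle and no classical reasoning are needed: the first half is a plain induction and the second a plain coinduction, which is consistent with both principles sitting on the intensional side of the correspondence.
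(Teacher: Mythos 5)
Your proposal is correct and follows essentially the same route as the paper's own proof: relativise both statements to ``from $u$'', proceed by induction on the definition of inductively barred (resp.\ by coinduction for productivity), handle the base case with the empty prefix, and in the step case decompose $\alpha$ into its head $\alpha(0)$ and tail, invoking Prop.~\ref{prop:prec-star} together with the identity $u @ (a @ u'') = (u \star a) @ u''$. The only difference is presentational: you make the pre-/post-fixpoint formulation of the (co)induction principles explicit, which the paper leaves implicit.
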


\begin{proof}
We prove by induction on the definition of $T$ inductively barred that
$T$ inductively barred at $u$ implies $T$ barred from $u$ where the
latter requires that for all $\alpha$, there is $u' \prec_s \alpha$
such that $u @ u' \in T$.

If $u \in T$, then it is enough to take $\emptyseq$ for $u'$ to get $u
@ \emptyseq \in T$ for any $\alpha$. If $T$ is barred from $u \star b \in T$ for all
$b \in B$, this means that there is $u' \prec_s \beta$
such that $(u \star b) @ u' \in T$ for any $\beta$. For a given $\alpha$,
set $b \defeq \alpha(0)$ and $\beta(n) \defeq \alpha(n+1)$ so that we
can find $u' \prec_s \beta$, hence $b @ u' \prec_s b @ \beta$, i.e. $b
@ u' \prec_s \alpha$ (by Prop.~\ref{prop:prec-star}) together with $u @ (b @ u')
\in T$.

The dual proof builds $T$ productive at $u$ from $T$ has an infinite branch
from $u$ by coinduction. From the infinite branch $\alpha$ from $u$
and $\emptyseq \prec_s \alpha$ we get $u @ \emptyseq \in T$, i.e. $u \in
T$. It remains to find $b$ such that $T$ is productive from $u \star
b$ and it suffices to take $\alpha(0)$ since $T$ has an infinite
branch $\beta(n) \defeq \alpha(n+1)$ from $u \star \alpha(0)$ simply
because $v \prec_s \alpha$ implies $\alpha(0) @ v \prec_s \alpha(0) \star
\beta$ (by Prop.~\ref{prop:prec-star}) and $(u \star \alpha(0)) @ v \in T$
from $u @ (\alpha(0) @ v) \in T$.
\end{proof}

\subsection{Relation to other formulations of Dependent Choice and to countable Zorn's Lemma}
\label{sec:standard-DC}

For $R$ a relation on $B$, it is common to formulate
dependent choice as $$\begin{array}{l}\forall b^B\, \exists {b'}^B\, R(b,b') \imp\\
\qquad\forall {b_0}^B\, \exists f^{\N \arrow B}\, (f(0)=b_0 \land \forall n\,
R(f(n),f(n+1)))\,.\end{array}$$ Let us call {\em serial} a (homogeneous) relation such that
$\forall b^B\, \exists {b'}^B\, R(b,b')$ holds. In this section, we formally compare
the resulting statement of dependent choice to
$\DC_{BT}^{\mathit{productive}}$, examining also dual statements.

Let $R$ be a serial relation, i.e.\ a relation such that $\forall b^B\,
\exists {b'}^B\, R(b,b')$. Using a seed $b_0$, each such
relation $R$ can be turned into a predicate on $B^*$ under the two following ways:
\begin{itemize}
\item The {\em chaining} $R^*_{\top}(b_0)$ from $b_0$ is probably the most
  natural one: it says that $u \in R^*_{\top}(b_0)$ if all steps in
  $u$ from $b_0$ are in $R$.
\item The {\em alignment} $R^{\triangleright}_{\top}(b_0)$ from $b_0$
  artificially uses non-empty sequences to represent pairs of
  elements. We have $u \in R^{\triangleright}_{\top}(b_0)$ either when
  $u$ has at least two elements and the last two elements are related
  by $R$, or, when the sequence contains exactly one element which is
  related to $b_0$, or, finally, when the sequence is simply empty.
\end{itemize}

Reasoning by induction on $v \leq_s u$ in one direction and on $u$
in the other direction, we can show that both are related:
\begin{prop}
$u \in R^*_{\top}(b_0)$ iff $u \in
\innerneg{R^{\triangleright}_{\top}(b_0)}$
\hfill $\Box$
\end{prop}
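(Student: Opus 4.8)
The plan is to unfold the definition of the downwards arborification, so that $u \in \innerneg{R^{\triangleright}_{\top}(b_0)}$ reads as ``every prefix $v \leq_s u$ lies in $R^{\triangleright}_{\top}(b_0)$'', and then to check that this is exactly the assertion that every step of $u$ from $b_0$ is in $R$, which is the meaning of $u \in R^*_{\top}(b_0)$. I would prove the two implications separately, using the two inductions indicated in the statement.

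For the left-to-right direction I would assume $u \in R^*_{\top}(b_0)$, fix an arbitrary $v$ with $v \leq_s u$, and induct on the derivation of $v \leq_s u$ (equivalently, on $|v|$). The three possible shapes of $v$ correspond one-to-one to the three clauses defining $R^{\triangleright}_{\top}(b_0)$: if $v = \emptyseq$ we are in the ``empty'' clause; if $v$ consists of a single element $b$, then $b$ is the head of $u$ and the first step of $u$ from $b_0$ yields $R(b_0,b)$, giving the ``one element related to $b_0$'' clause; and if $v$ has at least two elements, its last two elements are consecutive elements of $u$, so the corresponding step of $u$ makes them $R$-related, giving the first clause. Hence $v \in R^{\triangleright}_{\top}(b_0)$ for every prefix $v$ of $u$, i.e. $u \in \innerneg{R^{\triangleright}_{\top}(b_0)}$.

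For the right-to-left direction I would assume every prefix of $u$ lies in $R^{\triangleright}_{\top}(b_0)$ and argue by induction on $u$. The case $u = \emptyseq$ is vacuous, as there are no steps. For $u = u' \star a$, every prefix of $u'$ is also a prefix of $u$, so the induction hypothesis gives that all steps of $u'$ from $b_0$ are in $R$; it remains to supply the final step, namely the step into $a$ from the last element of $u'$, or from $b_0$ itself when $u' = \emptyseq$. Since $u \leq_s u$, we have $u \in R^{\triangleright}_{\top}(b_0)$, and a case split on which of its three clauses holds delivers precisely $R(b_0,a)$ when $u' = \emptyseq$ and, otherwise, $R$ between the last element of $u'$ and $a$ — exactly the missing step. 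Thus $u \in R^*_{\top}(b_0)$. The only point requiring care is purely notational: one has to phrase ``every step of $u$ from $b_0$'' so that it lines up clause-by-clause with the definition of $R^{\triangleright}_{\top}(b_0)$ (the step from $b_0$ into the head, and each step between consecutive entries), and to note that since the inductive presentation of $\leq_s$ extends sequences on the right, ``a prefix of a prefix is a prefix'' is immediate. No classical or choice reasoning enters; the argument goes through in the intuitionistic, indeed linear, reading.
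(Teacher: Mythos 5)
Your proof is correct and follows exactly the route the paper indicates for this proposition: induction on the derivation of $v \leq_s u$ for the left-to-right direction and induction on $u$ for the converse, with the three clauses of $R^{\triangleright}_{\top}(b_0)$ matched against the possible shapes of a prefix. The paper gives only this one-sentence sketch and leaves the details to the reader, so your more explicit write-up is a faithful elaboration of the same argument.
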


Dually, we can define {\em antichaining} and {\em blockings} such that:
\begin{prop}
$u \in R^*_{\bot}(b_0)$ iff $u \in
\outerofpos{R^{\triangleright}_{\bot}(b_0)}$
\hfill $\Box$
\end{prop}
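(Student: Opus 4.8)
The plan is to prove the dual statement about \emph{antichaining} $R^*_{\bot}(b_0)$ and \emph{blockings} $R^{\triangleright}_{\bot}(b_0)$ by transferring the proof of the previous proposition through the classical/linear duality that pervades the paper, while also giving the direct argument so that the statement holds under a linear or intuitionistic reading too. First I would record the definitions: $u \in R^*_{\bot}(b_0)$ should say that \emph{some} step in $u$ starting from $b_0$ fails to be in $R$, i.e.\ it is the pointwise negation of $R^*_{\top}(b_0)$; dually $u \in R^{\triangleright}_{\bot}(b_0)$ holds when $u$ has at least two elements whose last two elements are \emph{not} $R$-related, or $u$ has exactly one element not $R$-related to $b_0$ (the empty sequence is \emph{excluded}, mirroring the way $\emptyseq \in R^{\triangleright}_{\top}(b_0)$ in the $\top$-version). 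The key observation is that $R^*_{\bot}(b_0)$ is monotone (once a step has failed, every extension still contains that failing step) and that $R^{\triangleright}_{\bot}(b_0)$ is the pointwise negation of $R^{\triangleright}_{\top}(b_0)$, so $\outerofpos{(\cdot)}$ on the $\bot$-side corresponds to $\innerneg{(\cdot)}$ on the $\top$-side.

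Next I would carry out the two inclusions directly, by the same inductions used for the $\top$-case but with the roles of $\forall/\exists$ and $\leq_s$-induction reversed. For the direction $u \in \outerofpos{R^{\triangleright}_{\bot}(b_0)} \imp u \in R^*_{\bot}(b_0)$: unfolding $\outerofpos{(\cdot)}$ from Table~\ref{table:2bis} gives some $u' \leq_s u$ with $u' \in R^{\triangleright}_{\bot}(b_0)$; by cases on the shape of $u'$ (length-one with $\neg R(b_0,u'(0))$, or length $\geq 2$ with the last two elements not $R$-related), a short induction on $u' \leq_s u$ produces an index at which the chain from $b_0$ through $u$ breaks, which is exactly membership in $R^*_{\bot}(b_0)$. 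For the converse, I would reason by induction on $u$: if $u \in R^*_{\bot}(b_0)$, locate the first broken step; the prefix of $u$ up to and including that step witnesses membership in $R^{\triangleright}_{\bot}(b_0)$ (it is either a single element not related to $b_0$, or of length $\geq 2$ with unrelated last two elements), and since that prefix is $\leq_s u$ we get $u \in \outerofpos{R^{\triangleright}_{\bot}(b_0)}$.

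I expect the main obstacle to be purely bookkeeping rather than conceptual: getting the boundary case of the empty sequence and the length-one sequence exactly right so that the equivalence is on the nose, and checking that ``first broken step'' is definable without excluded middle — for decidable $R$ it is immediate, but in a setting where $R$ need not be decidable one must instead phrase ``$u \in R^*_{\bot}(b_0)$'' as ``$\exists i\, \neg R(\text{step}_i)$'' and extract the prefix from that existential witness directly rather than taking a minimum. With that care, both directions are finite case analyses and the proposition follows; alternatively, one may simply invoke the duality with the previous proposition, observing that pointwise negation swaps $R^*_{\top}\leftrightarrow R^*_{\bot}$, $R^{\triangleright}_{\top}\leftrightarrow R^{\triangleright}_{\bot}$ and $\innerneg{(\cdot)}\leftrightarrow\outerofpos{(\cdot)}$ (by the displayed formulas in Table~\ref{table:2bis}), so that the statement $u \in R^*_{\bot}(b_0) \iff u \in \outerofpos{R^{\triangleright}_{\bot}(b_0)}$ is exactly the negation-transpose of $u \in R^*_{\top}(b_0) \iff u \in \innerneg{R^{\triangleright}_{\top}(b_0)}$.
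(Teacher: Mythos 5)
Your proof is correct and follows essentially the route the paper intends: the paper establishes the $\top$-version by induction on $v \leq_s u$ in one direction and on $u$ in the other, and obtains this $\bot$-version ``dually'', which is exactly the combination of the direct transposed inductions and the duality argument you give. One caveat on the definitions: in Table~\ref{table:8} the relation $R$ occurs \emph{positively} in both $R^*_{\bot}(b_0)$ (``some step is in $R$'', via the clause $R(b,b') \lor X(b',u)$) and $R^{\triangleright}_{\bot}(b_0)$ (last two elements \emph{are} $R$-related); the duality with the $\top$-column is realised by reading the $R$ of the right column as the dual of the $R$ of the left column, not by negating $R$ inside the definitions. So where you write ``fails to be in $R$'' / ``not $R$-related'' you should read ``is in $R$'' / ``$R$-related''; your argument is insensitive to this uniform substitution and goes through verbatim, and the correction in fact dissolves your worry about locating a ``first broken step'' without decidability of $R$ --- one simply extracts the witnessing index from the finite disjunction unfolding $R^*_{\bot}(b_0)$, exactly as you propose in your fallback.
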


The formal definitions are given in Table~\ref{table:8}, where we can notice that the
use of $\mu$ vs.\ $\nu$ does not matter in practice since the structure
of the relation is a function of $|u|$.

\newcommand{\casedeux}[5]{\setlength{\arraycolsep}{0.15em}\parbox[t]{3cm}{$\mathsf{case}~#1~\mathsf{of}~\\[0.5em]
    \hspace*{0.1cm}\left[\begin{array}{lcl}#2&\mapsto&#3\\#4&\mapsto&#5\end{array}\right]\!\!\\[0.2em]$}}
\newcommand{\casetrois}[7]{\setlength{\arraycolsep}{0.15em}\parbox[t]{3cm}{$\mathsf{case}~#1~\mathsf{of}~\\[0.5em]
    \hspace*{0.1cm}\left[\begin{array}{lcl}#2&\mapsto&#3\\#4&\mapsto&#5\\#6&\mapsto&#7\end{array}\right]\\[0.2em]$}}

\begin{table}
  \caption{Logically opposite dual concepts on dual homogeneous relations}
  \label{table:8}
\renewcommand*{\arraystretch}{1.4}
\begin{center}
\begin{tabular}{|p{4.0cm}|p{4.0cm}|}
\hline
\hfil {\em ill-foundedness-style} & \hfil {\em well-foundedness-style} \\
\hhline{-|-}
\multicolumn{2}{|c|}{\em intensional concepts}\\
\hhline{-|-}
\hfil {$R$ serial} & \hfil {$R$ has a ``least'' element}\\
\hfil {$\forall b\, \exists b'\, R(b,b')$} &
\hfil {$\exists b\, \forall b'\, R(b,b')$} \\
\hhline{-|-}
\hfil {$R$ left-not-full (*)} & \hfil {$R$ has a ``maximal'' element}\\
\hfil {$\forall b\, \exists b'\, \neg R(b,b')$} &
\hfil {$\exists b\, \forall b'\, \neg R(b,b')$} \\
\hhline{-|-}
\hfil {chaining of $R$ from $b_0$ ($R^{*}_{\top}(b_0)$)} &
\hfil {antichain.\ of $R$ from $b_0$ ($R^{*}_{\bot}(b_0)$)}\\
$\mu X.\lambda b.\,\lambda u.\,$\hfill\break$\hspace*{0.1cm} \casedeux{u}{\emptyseq}{\top}{b' \star u}{R(b,b') \land X(b',u)}$ &
$\nu X.\lambda b.\,\lambda u.\,$\hfill\break$\hspace*{0.1cm} \casedeux{u}{\emptyseq}{\bot}{b' \star u}{R(b,b') \lor X(b',u)}$ \\
\hhline{-|-}
\hfil {alignment of $R$ from $b_0$ ($R^{\triangleright}_{\top}(b_0)$)} &
\hfil {blockings of $R$ from $b_0$ ($R^{\triangleright}_{\bot}(b_0)$)}\\
$\lambda u.\,\casetrois{u}{\emptyseq}{\top}{b}{R(b_0,b)}{u' \star b \star b'}{R(b,b')}$ &
$\lambda u.\,\casetrois{u}{\emptyseq}{\bot}{b}{R(b_0,b)}{u' \star b \star b'}{R(b,b')}$ \\
\hline
\end{tabular}
\end{center}
\end{table}

We are now in position to state in Table \ref{table:9} a relatively standard form of
Dependent Choice which we call $\DCSERIAL{BRb_0}$ for $R$
being a relation on $B$ and $b_0$ a seed in $B$. Though to our knowledge uncommon in the
literature, we also mention its dual which we call
$\BILEAST{BRb_0}$.

\begin{table}
\hypertarget{dictdc}{\caption{Dependent choice and bar induction principles}}
  \label{table:9}
\renewcommand*{\arraystretch}{1.4}
\begin{center}
\begin{tabular}{|p{3.8cm}|p{4.2cm}|}
\hline
\hfil {\em ill-foundedness-style} & \hfil {\em well-foundedness-style} \\
\hhline{-|-}
Dependent Choice ($\DCSERIAL{BRb_0}$) &
  Dual to Dependent Choice ($\BILEAST{BRb_0}$) \\
$R$ serial $\imp$\hfill\break \hspace*{1mm} $R^{\triangleright}_{\top}(b_0)$ has an infinite branch &
  $R^{\triangleright}_{\bot}(b_0)$ barred $\imp$\hfill\break \hspace*{1mm} $R$ has a least element\\
\hline
\end{tabular}
\end{center}
\end{table}

We state a few results that allow to show the equivalence of
$\DCSERIAL{BRb_0}$ and $\DC_{BT}^{\mathit{productive}}$
as schemes.

We have the following properties.

\begin{prop}
\label{prop:serial-productive}
$R$ serial implies $R^{\triangleright}_{\top}(b_0)$ productive for any $b_0$. Dually,
if $R^{\triangleright}_{\bot}(b_0)$ is inductively barred then $R$ has a least element.
\end{prop}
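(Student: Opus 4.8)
The plan is to deduce the first statement from the proposition above that every spread is productive, and to obtain the dual statement by the mirror-image route, via the fact that every inductively barred predicate is barricaded; only the dual half will require a pinch of classical reasoning.

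For the first statement it suffices to show that $R^{\triangleright}_{\top}(b_0)$ is a spread, i.e. that $\emptyseq \in R^{\triangleright}_{\top}(b_0)$ (immediate from the first clause of the definition in Table~\ref{table:8}) and that $R^{\triangleright}_{\top}(b_0)$ is progressing; it is then productive. Progressingness is checked by a case analysis on $u$ following the three clauses: if $u = \emptyseq$, seriality of $R$ at $b_0$ gives $a$ with $R(b_0,a)$, so $\langle a\rangle \in R^{\triangleright}_{\top}(b_0)$; if $u = \langle b\rangle$, seriality at $b$ gives $a$ with $R(b,a)$, so $u \star a = \langle b,a\rangle \in R^{\triangleright}_{\top}(b_0)$ since its last two entries are $b$ and $a$; and if $u = u' \star b \star b'$, seriality at $b'$ gives $a$ with $R(b',a)$, so $u \star a = u' \star b \star b' \star a \in R^{\triangleright}_{\top}(b_0)$, its last two entries being $b'$ and $a$. (The hypothesis $u \in R^{\triangleright}_{\top}(b_0)$ is not even needed, which is precisely why the $\top$ in the empty-sequence clause causes no trouble.)

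For the dual statement I would argue symmetrically. Since every inductively barred predicate is barricaded, $R^{\triangleright}_{\bot}(b_0)$ inductively barred gives that $R^{\triangleright}_{\bot}(b_0)$ is barricaded, i.e. that $R^{\triangleright}_{\bot}(b_0)$ hereditary implies $\emptyseq \in R^{\triangleright}_{\bot}(b_0)$; but the latter unfolds to $\bot$ by the first clause of the definition, so this says exactly that $R^{\triangleright}_{\bot}(b_0)$ is not hereditary. Unpacking the negation of $\forall u\,(\forall a\, u \star a \in R^{\triangleright}_{\bot}(b_0) \imp u \in R^{\triangleright}_{\bot}(b_0))$ produces some $u$ with $u \star a \in R^{\triangleright}_{\bot}(b_0)$ for every $a$; reading off the clauses of $R^{\triangleright}_{\bot}(b_0)$, if $u = \emptyseq$ this says $\forall a\, R(b_0,a)$, so $b_0$ is a least element of $R$, and if $u$ is non-empty with last entry $c$ it says $\forall a\, R(c,a)$ (the last two entries of $u \star a$ being $c$ and $a$), so $c$ is a least element; either way $R$ has a least element.

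The main obstacle — and the source of the asymmetry between the two halves — is exactly the $\bot$ in the empty-sequence clause of $R^{\triangleright}_{\bot}(b_0)$: it forces ``barricaded'' to become a negation, and then extracting a concrete $u$ from the failure of a $\forall$-statement, and isolating the positive part $\forall a\, u \star a \in R^{\triangleright}_{\bot}(b_0)$ from the failed implication, needs classical, or at least co-intuitionistic, reasoning — in the same family of principles as the $D_S$ used in Prop.~\ref{prop:productive-unbounded}. A purely constructive treatment of the dual would instead run an induction on the hereditary-closure derivation with the invariant ``for non-empty $u$ with last entry $c$: either $R$ relates the predecessor of $c$ to $c$, or $R$ has a least element'', whose induction step still needs the non-intuitionistic passage from $\forall a\,(R(c,a) \lor Q)$ to $(\forall a\, R(c,a)) \lor Q$.
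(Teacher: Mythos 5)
Your proof is correct, but it routes both halves differently from the paper. The paper proves the first half by a single coinduction showing that $u \in R^{\triangleright}_{\top}(b_0)$ implies $R^{\triangleright}_{\top}(b_0)$ productive from $u$, invoking seriality at each coinductive step, and then dispatches the dual as the mirror-image induction. You instead factor through the spread/barricaded layer: verifying that $R^{\triangleright}_{\top}(b_0)$ is a spread requires no coinduction, only the one-step case analysis you give (and your parenthetical is right --- membership in $R^{\triangleright}_{\top}(b_0)$ only constrains the last one or two entries, so progressingness holds unconditionally), after which the earlier proposition ``spread implies productive'' supplies the coinductive content. That is a clean decomposition, very much in the spirit of Theorem~\ref{thm:spread-prod}. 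For the dual, your main argument is classically sound but logically heavier than what the paper intends: extracting a concrete $u$ from the failure of hereditarity uses $\neg\forall \imp \exists\neg$ over $B^*$, i.e.\ full classical witness extraction, whereas the mirror image of the paper's coinduction is an induction on the derivation of inductive barredness whose only non-intuitionistic ingredient is the distribution $\forall a\,(P(a) \lor Q) \imp (\forall a\,P(a)) \lor Q$, applied once per induction step --- precisely the co-intuitionistic principle you name in your closing sketch, and the one predicted by the paper's observation that the dual of a proof reusing a hypothesis is a proof ``reusing'' its conclusion. So it is your final paragraph, rather than your main dual argument, that is the faithful counterpart of the paper's proof; either way the proposition is established in the paper's classically readable metatheory.
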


\begin{proof}
We prove by coinduction that $u \in R^{\triangleright}_{\top}(b_0)$
implies $R^{\triangleright}_{\top}(b_0)$ productive from $u$. If $u$
is empty, $R^{\triangleright}_{\top}(b_0)(\emptyseq)$ holds by
definition and there is by seriality a $b_1$ such that
$R^{\triangleright}_{\top}(b_0)(b_1)$. This allows to conclude by
coinduction hypothesis. If $u$ has the form $u' \star b$, there is
also by seriality a $b'$ such that $R^{\triangleright}_{\top}(b_0)(u'
\star b \star b')$ and we can again conclude by coinduction hypothesis.
The productivity of $R^{\triangleright}_{\top}(b_0)$ finally follows
because $R^{\triangleright}_{\top}(b_0)(\emptyseq)$ holds by
definition. The dual statement is by dual (inductive) reasoning.
\end{proof}

Conversely, for $T$ a predicate, let $B_T$ be defined by $B_T \defeq
\{u \in B^*\mid\mbox{$T$ is productive from $u$}\}$ and let $R_T$ be
the relation on $B_T$ defined by $R_T(u,u') \defeq \exists b\,(u \star
b = u')$.  The relation $R_T$ is serial by construction: for $u$ such
that $T$ is productive from $u$, there is $a$ such that $T$ is
productive from $u \star a$ and $u \star a \in T$. Also, $\emptyseq \in
B_T$ as soon as $T$ is productive.

We can now formally state the correspondence in our language:

\begin{theorem}
\label{thm:dc}
As schemes, $\DC_{BT}^{\mathit{productive}}$ and
$\DCSERIAL{BRb_0}$ are logically equivalent.
\end{theorem}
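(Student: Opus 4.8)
The plan is to prove the two implications between the schemes by reusing the two encodings already set up just above the statement — the alignment $R^{\triangleright}_{\top}(b_0)$ of a relation and the pair $(B_T,R_T)$ associated to a predicate $T$ — together with Prop.~\ref{prop:serial-productive}. No new construction is needed; essentially all the work sits in the bookkeeping for the second direction.

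\emph{From $\DC^{\mathit{productive}}_{BT}$ to $\DCSERIAL{BRb_0}$.} Fix a type $B$, a relation $R$ on $B$ and a seed $b_0\in B$, and assume $R$ serial. Instantiate the scheme $\DC^{\mathit{productive}}_{BT}$ at the same $B$ and at $T\defeq R^{\triangleright}_{\top}(b_0)$. By the first half of Prop.~\ref{prop:serial-productive}, $R$ serial entails $R^{\triangleright}_{\top}(b_0)$ productive, so the hypothesis of $\DC^{\mathit{productive}}_{BT}$ is discharged and we conclude that $R^{\triangleright}_{\top}(b_0)$ has an infinite branch — which is exactly the conclusion of $\DCSERIAL{BRb_0}$. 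This direction is immediate.

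\emph{From $\DCSERIAL{BRb_0}$ to $\DC^{\mathit{productive}}_{BT}$.} Fix $B$ and a predicate $T$ on $B^*$, and assume $T$ productive. Use the type $B_T\defeq\{u\in B^*\mid\mbox{$T$ is productive from $u$}\}$ and the relation $R_T$ on $B_T$ with $R_T(u,u')\defeq\exists b\,(u\star b=u')$ introduced before the statement; recall that $R_T$ is serial by construction and that $\emptyseq\in B_T$ since $T$ is productive. Instantiate $\DCSERIAL{B_T R_T \emptyseq}$: its hypothesis ``$R_T$ serial'' holds, so we obtain an infinite branch $\gamma\in B_T^{\N}$ of $R_T^{\triangleright}_{\top}(\emptyseq)$. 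Unfolding the alignment $R_T^{\triangleright}_{\top}(\emptyseq)$ on the finite prefixes of $\gamma$ gives $R_T(\emptyseq,\gamma(0))$ and $R_T(\gamma(i),\gamma(i+1))$ for all $i$; by definition of $R_T$ this means each $\gamma(i)$ is a one-step extension of its predecessor, hence $|\gamma(i)|=i+1$ and $\gamma(i)$ is an initial prefix of $\gamma(i+1)$. Define $\alpha\in B^{\N}$ by $\alpha(i)\defeq\gamma(i)(i)$, the last (new) element of $\gamma(i)$. Then the length-$n$ prefix of $\alpha$ equals $\gamma(n-1)$ for $n\geq 1$ and $\emptyseq$ for $n=0$; in either case it belongs to $T$, because $\gamma(n-1)\in B_T$ says $T$ is productive from $\gamma(n-1)$ and one unfolding of the coinductive pruning shows $u$ in the pruning of $T$ implies $u\in T$ (and $\emptyseq\in T$ since $T$ is productive). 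Thus every $u\prec_s\alpha$ lies in $T$, i.e.\ $T$ has an infinite branch, as required.

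The only genuine obstacle I anticipate is the flattening step in the second direction: the branch delivered by $\DCSERIAL{\cdot}$ ranges over $B_T^{\N}$, whose elements are themselves finite $B$-sequences, so one must verify that consecutive entries form an increasing chain of prefixes whose ``limit'' is a bona fide element of $B^{\N}$, and then translate membership in $B_T$ (productivity from $u$) back into membership in $T$ via a single coinductive unfolding. Since both directions use only such unfoldings and Prop.~\ref{prop:serial-productive}, the equivalence holds at the same logical strength; the dual correspondence with $\BILEAST{BRb_0}$ would follow symmetrically from the dual halves of Prop.~\ref{prop:serial-productive}, but is not needed for this statement.
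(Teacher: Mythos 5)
Your proposal is correct and follows essentially the same route as the paper's proof: the forward direction instantiates $\DC^{\mathit{productive}}_{BT}$ at $R^{\triangleright}_{\top}(b_0)$ via Prop.~\ref{prop:serial-productive}, and the converse uses the pair $(B_T,R_T)$ with the seed $\emptyseq$, flattens the branch over $B_T$ into a branch over $B$ by taking last elements, and recovers $u\in T$ from productivity at $u$ by one unfolding of the pruning. The only difference is that you spell out the prefix-chain bookkeeping slightly more explicitly than the paper does, which is harmless.
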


\begin{proof}
From left to right, we take $R^{\triangleright}_{\top}(b_0)$ and use
Prop.~\ref{prop:serial-productive}. From right to left, we take
$B_T$ and $R_T$, obtaining $\emptyseq \in B_T$ from $T$ productive. We
get an infinite branch $\beta$ of elements of $B_T$ such that $u
\prec_s \beta$ implies $(R_T)^{\triangleright}_{\top}(\emptyseq)(u)$,
which means first that $R_T(\emptyseq,\beta(0))$, thus $\beta(0) = b$
for some $b$, then, secondly, that for all $n$,
$R_T(\beta(n),\beta(n+1))$, i.e. $\beta(n+1)=\beta(n) \star b$ for
some $b$. It is then enough to define $\alpha(n)$ to be the
corresponding $b$ to get an infinite branch of elements of $B$. Let us
now consider $u \prec_s \alpha$. We already know $\emptyseq \in T$
from $T$ productive. Otherwise, for $u$ non empty, we get by induction
that $u$ coincides with $\beta(|u|-1)$ which is in $T$ because $u \in
B_T$ implies $T$ being productive from~$u$.
\end{proof}

As a final remark, let us mention countable Zorn's
lemma~\cite{Wolk83}: {\em If a partial order $S$ on some set has no
  countable chain, it has a maximal element}.
It corresponds to the instantiation on
$\neg S$ of the generalisation of the scheme {\em $R^*_{\bot}(b_0)$
  barred implies $R$ has a least element} over all $b_0$, using our definitions
up to classical reasoning, and dropping the partial order
requirement. This is the case
because a least element is a maximal one in the complement of a
relation and because, classically, the barring of all antichainings of
$\neg S$ is the same as the absence of countable chains in a partial order $S$.


\subsection{Relation to countable choice}
\label{sec:cc}

For $R$ heterogeneous relation on $A$ and $B$, we introduce in Table~\ref{table:10} definitions allowing to state in
Table~\ref{table:11} the axiom of countable choice, $\CC$, and its
dual, which we call {\em weak bar induction}. Note that {\em
  left-total} and {\em grounded} are respective generalisations of
serial and having a least element to non-necessarily homogeneous
relations.

\begin{table}
\caption{Logically opposite dual concepts on dual relations}
  \label{table:10}
\renewcommand*{\arraystretch}{1.4}
\begin{center}
\begin{tabular}{|p{4cm}|p{4cm}|}
\hline
\hfil {\em ill-foundedness-style} & \hfil {\em well-foundedness-style} \\
\hhline{-|-}
\hfil {$R$ $A$-$B$-left-total} & \hfil {$R$ $A$-$B$-grounded (*)}\\
\hfil {$\forall a\, \exists b\, R(a,b)$} &
\hfil {$\exists a\, \forall b\, R(a,b)$} \\
\hhline{-|-}
\hfil {$R$ has an $A$-$B$-choice function} & \hfil {$R$ is $A$-$B$-barred} \\
\hfil {$\exists \alpha\, \forall a\, \forall b\, (\alpha(a) \myeq b \imp R(a,b))$} &
\hfil {$\forall \alpha\, \exists a\, \exists b\, (\alpha(a) \myeq b \land R(a,b))$} \\
\hhline{-|-}
\end{tabular}
\end{center}
\end{table}

\begin{table}
\hypertarget{dictcc}{\caption{Countable choice and weak bar induction principles}}
  \label{table:11}
\renewcommand*{\arraystretch}{1.4}
\begin{center}
\begin{tabular}{|p{3.8cm}|p{4.2cm}|}
\hline
\hfil {\em ill-foundedness-style} & \hfil {\em well-foundedness-style} \\
\hhline{-|-}
\hfil {Countable Choice ($\CC_{BR}$)} &
  \hfil {Dual to Countable Choice ($\CBI_{BR}$)} \\
\hfil {$R$ $\N$-$B$-left-total $\imp$\hfill\break \hspace*{1mm} $R$ has an $\N$-$B$-choice function} &
\hfil {$R$ $\N$-$B$-barred $\imp$\hfill\break \hspace*{1mm} $R$ $\N$-$B$-grounded}\\
\hline
\end{tabular}
\end{center}
\end{table}

\begin{table}
\caption{Logically opposite dual concepts on dual relations}
  \label{table:12}
\renewcommand*{\arraystretch}{1.4}
\begin{center}
\begin{tabular}{|p{4cm}|p{4cm}|}
\hline
\hfil {\em ill-foundedness-style} & \hfil {\em well-foundedness-style} \\
\hhline{-|-}
\multicolumn{2}{|c|}{\em intensional concepts}\\
\hhline{-|-}
\hfil {seq.\ pos.\ alignment of $R$ ($R^{\N}_{\top}$)} &
\hfil {seq.\ neg.\ alignment of $R$ ($R^{\N}_{\bot}$)} \\
\hfil {$\lambda u.\, \casedeux{u}{\emptyseq}{\top}{u \star b}{R(|u|,b)}$} &
\hfil {$\lambda u.\,\casedeux{u}{\emptyseq}{\bot}{u \star b}{R(|u|,b)}$} \\
\hline
\end{tabular}
\end{center}
\end{table}

We shall prove that $\CC$ is derivable from $\DC^{\mathit{productive}}$
and introduce for that the {\em alignment} of a sequential relation over
$\N \times A$ as a predicate over $A^*$ (see Table~\ref{table:12}). We have:

\begin{theorem}
\label{thm:cc}
For $B$ and $R$ given ($R$ relation over $\N$ and $B$),
$\CC_{BR}$ is equivalent to
$\DC_{BR^{\N}_{\top}}^{\mathit{productive}}$.
Dually, 
$\CBI_{BR}$ is equivalent to
$\BI_{BR^{\N}_{\bot}}^{\mathit{ind}}$.
\end{theorem}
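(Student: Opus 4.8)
The plan is to reduce each of the two claimed equivalences to an equivalence between the \emph{hypotheses} and an equivalence between the \emph{conclusions} of the two implications involved, so that the principles become two notations for one and the same implication. For the ill-foundedness part I would establish: (i) $R^{\N}_{\top}$ is productive iff $R$ is $\N$-$B$-left-total; and (ii) $R^{\N}_{\top}$ has an infinite branch iff $R$ has an $\N$-$B$-choice function. Since $\DC_{BR^{\N}_{\top}}^{\mathit{productive}}$ reads ``$R^{\N}_{\top}$ productive $\imp$ $R^{\N}_{\top}$ has an infinite branch'' and $\CC_{BR}$ reads ``$R$ $\N$-$B$-left-total $\imp$ $R$ has an $\N$-$B$-choice function'', (i) and (ii) turn one into the other. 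The dual part is symmetric, identifying $\CBI_{BR}$ (``$R$ $\N$-$B$-barred $\imp$ $R$ $\N$-$B$-grounded'') with $\BI_{BR^{\N}_{\bot}}^{\mathit{ind}}$ (``$R^{\N}_{\bot}$ barred $\imp$ $R^{\N}_{\bot}$ inductively barred'') through (i$'$) $R^{\N}_{\bot}$ is inductively barred iff $R$ is $\N$-$B$-grounded, and (ii$'$) $R^{\N}_{\bot}$ is barred iff $R$ is $\N$-$B$-barred.

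Claims (ii) and (ii$'$) amount to unfolding the definitions of $\prec_s$ and of the sequential alignments of Table~\ref{table:12}: whether a non-empty $u$ lies in $R^{\N}_{\top}$ (resp.\ $R^{\N}_{\bot}$) depends only on the last letter of $u$ and on $|u|-1$, so a prefix $v \star b \prec_s \alpha$ of length $n+1$ belongs to $R^{\N}_{\top}$ exactly when $R(n,\alpha(n))$ holds, whence ``$\alpha$ is an infinite branch of $R^{\N}_{\top}$'' and ``$\alpha$ is a choice function for $R$'' are provably the same statement about $\alpha$; symmetrically, a witness of barredness of $R^{\N}_{\bot}$ at $\alpha$ is a non-empty prefix $v \star b \prec_s \alpha$ with $R(|v|,b)$, i.e.\ a pair $(|v|,b)$ witnessing $\N$-$B$-barredness of $R$ at $\alpha$. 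In the functional-relation reading of $\N \arrow B$ one first checks, by induction on $n$, that $\alpha$ admits a prefix of every finite length; this is the only place totality of the relation is used.

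The content lies in (i) and (i$'$): the coinductive pruning, resp.\ the inductive hereditary closure, of a sequential alignment collapses to the grounding condition because neither operator records the intermediate letters. Three of the four implications are routine. ``$R^{\N}_{\top}$ productive $\imp$ $R$ $\N$-$B$-left-total'' comes from unfolding the pruning: by induction on $n$, $\emptyseq$ in the pruning yields some $u$ of length $n$ in the pruning, hence in $R^{\N}_{\top}$, hence a witness of $\exists b\, R(n-1,b)$ for $n \geq 1$. ``$R$ $\N$-$B$-grounded $\imp$ $R^{\N}_{\bot}$ inductively barred'' comes from a downward induction on $n_0 - |u|$: a grounding index $n_0$ makes every length-$(n_0+1)$ sequence lie in $R^{\N}_{\bot}$, hence every $u$ with $|u| \leq n_0$ — in particular $\emptyseq$ — lies in the hereditary closure. ``$R$ $\N$-$B$-left-total $\imp$ $R^{\N}_{\top}$ productive'' follows from the post-fixed point $S(u) \defeq u \in R^{\N}_{\top} \wedge \forall n \geq |u|\, \exists b\, R(n,b)$, which contains $\emptyseq$ by left-totality and is stable under the pruning operator (given $u \in S$, pick $a$ with $R(|u|,a)$). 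It remains to prove ``$R^{\N}_{\bot}$ inductively barred $\imp$ $R$ $\N$-$B$-grounded'', which I would do by feeding the induction principle for $\mu$ the pre-fixed point $P(u) \defeq (\exists u'\, \exists b\, (u = u' \star b \wedge R(|u'|,b))) \vee (\exists n \geq |u|\, \forall b\, R(n,b))$, so that the hereditary closure of $R^{\N}_{\bot}$ is contained in $P$ and $P(\emptyseq)$ is exactly ``$R$ is $\N$-$B$-grounded''.

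The main obstacle is this last step, namely verifying that $P$ is a pre-fixed point, and in particular that $(\forall a\, u \star a \in P) \imp u \in P$. The temptation to induct directly on the derivation of inductive-barredness fails, since its base case $u \in R^{\N}_{\bot}$ supplies only a \emph{single} good letter rather than all of them — which is precisely what forces the disjunctive shape of $P$. Having picked $P$, one notes that $u \star a \in P$ simplifies to $R(|u|,a) \vee (\exists n \geq |u|+1\, \forall b\, R(n,b))$, and the crux is to pass from ``for every $a$, $R(|u|,a)$ or the ($a$-free) second disjunct'' to ``$\forall a\, R(|u|,a)$, or that disjunct''. This distributivity is immediate for finite $B$; for arbitrary $B$ it is the delicate point, with no counterpart in the dual, coinductive proof of productivity, which is why that side goes through more smoothly. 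Everything else is bookkeeping about finite prefixes and the case-split definitions of Tables~\ref{table:11} and \ref{table:12}.
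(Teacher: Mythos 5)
Your decomposition --- equate the hypotheses of the two principles, equate their conclusions, so that each principle becomes a renaming of the other --- is exactly the paper's, whose entire treatment of the dual half is the sentence ``Similarly for the dual case''; everything you do on the ill-foundedness side matches the paper's sketch, and your bookkeeping on prefixes and barredness is correct.

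The one thing to settle is the ``delicate point'' you flag, and you should know that it cannot be talked away: the step $\forall a\,(R(|u|,a) \lor Q) \imp (\forall a\,R(|u|,a)) \lor Q$ for $a$-free $Q$ is an instance of the paper's own principle $D_S$ (the constant-domain schema), classically and co-intuitionistically valid but not intuitionistically derivable for infinite $B$ (it fails under Kleene realizability). Nor is the obstruction an artefact of your choice of pre-fixed point $P$: take $R$ with $R(n,b) \defeq \bot$ for all $n \geq 2$. The third approximant of the hereditary closure shows that $\forall a\,(R(0,a) \lor \forall b\,R(1,b))$ already makes $R^{\N}_{\bot}$ inductively barred, and it also makes $R$ $\N$-$B$-barred, so the stated equivalence itself would yield $(\forall a\,R(0,a)) \lor (\forall b\,R(1,b))$, i.e. $D_S$. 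So the right move is simply to close the gap with one line of excluded middle on the $a$-free disjunct --- the metatheory may be read classically --- while recording that the dual half of the theorem genuinely requires this. Your analysis is thus a real refinement of the paper: its ``Similarly for the dual case'' conceals a use of $D_S$ that the concluding section claims to need only for Proposition~\ref{prop:productive-unbounded}, whereas the ill-foundedness half (whose coinductive direction merely reuses the left-totality hypothesis, i.e. needs contraction but nothing classical) is indeed the smoother side, as you observe.
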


\begin{proof}
The correspondence between $R$ left-total and $R^{\N}_{\top}$
productive is obtained by coinduction from left to right and, from
right to left, by extracting the $\nth{n}$ element of the proof of
$R^{\N}_{\top}$ to get the image of $n$ by $R$. The function relating
$R$ having a choice function (as a relation) and $R^{\N}_{\top}$
having a choice function (as a predicate on $B^*$) is the same. Then,
from left to right, for non-empty $u \star b \prec_s \alpha$, we have
$\alpha(|u|) \myeq b$, thus $R(|u|,b)$ and $u \in T$. From right to
left, for $n$ and $b$ such that $\alpha(n) \myeq b$, the restriction
$\alpha_{|n+1}$ of $\alpha$ to its first $n+1$ elements is in $T$, so
that $R(|\alpha_{|n}|,b)$, i.e. $R(n,b)$.  Similarly for the dual
case.
\end{proof}

We do not conversely expect to be able in general to express
$\DC^{\mathit{productive}}$ in term of $\CC$ since countable choice
is strictly weaker than dependent choice, and similarly for
$\BI^{\mathit{ind}}$ in terms of $\CBI$. However, if $B$ is countable, it is folklore that the
statements of $\DC$ and $\CC$ become mutually
expressible by classical-reasoning-based minimisation: their common
strength as choice principle then is not greater than the axiom of unique
choice. The latter itself is a tautology if functions are
represented as functional relations. It has however the logical effect
of reifying functional relations as proper functions if functions are
represented as proper objects in a functional type.
We conjecture that the equivalence of $\BI^{\mathit{ind}}$ and $\CBI$ with
countable codomain is provable intuitionistically.

\omitnow{
\subsection{Other properties expressing well-foundedness (intuitionistic case)}

We can also consider an intuitionistically weaker, though classically
trivially equivalent, notion of bar. We say that $T$ is a {\em weak
  bar}, or that it is {\em weakly barred}, if for all $\alpha \in
A^{\N}$ there classically is $n$ such that $\alpha_{|n} \in
T$.
}

\section{Non sequential generalisation of dependent choice and bar induction}
\label{sec:acgen}

In the previous section, we considered predicates branching countably many times
over a domain $B$. In this section, we investigate how to generalise
countable sequences of branchings to branching in an arbitrary order over
a non-necessarily countable domain $A$.

When $B$ is $\Bool$, we shall obtain principles equivalent to the {\em Boolean Prime
Ideal/Filter Theorem} (ill-founded case), or to the {\em Completeness
Theorem} but we shall recover the strength of dependent choice
(ill-founded case) and bar induction (well-founded case) when $A$ is
countable, that is when $A$ is in bijection with $\N$.
In particular we will obtain the strength of the Weak Fan
Theorem (well-founded case) and Weak Kőnig's Lemma (ill-founded
case), up to classical reasoning, when $A$ is countable and $B$ is $\Bool$.

For a certain instance, we will get the strength of the full axiom of
choice. However, the new principle is limited. For instance, for
$A \defeq \Bool^\N$ and $B \defeq \N$, we end up with an inconsistent axiom.


\subsection{Finite approximations of functions}

\newcommand{\mycal}[1]{#1}

Let ${\mycal A}$ be a domain whose elements are ranged over by the
letters $a$, $a'$, ... and $B$ a codomain whose elements are ranged over by the letters $b$, $b'$, ...
Let $T$ be a predicate over $({\mycal A} \times
B)^*$ i.e. over sequences of pairs in ${\mycal A}$ and $B$, thought
as a set of possible finite approximations of a function from $A$ to $B$.
We use $v$ to range over approximations.

We order $({\mycal A} \times B)^*$ by set inclusion, which we write
$\subseteq$. We overload the notations $\!\innerneg{T}$, $\outerofneg{T}$, $\outerofpos{T}$ and $\innerpos{T}$ to now
be with respect to $\subseteq$. In particular, since $v \subseteq v'$
for any $v'$ obtained from $v$ by permutation or duplication, all closures
are stable by permutation. We write $v \sim v'$ for $v
\subseteq v'$ and $v' \subseteq v$, i.e. for the equivalence of $v$ and $v'$ as
finite sets.


Note that we do not prevent that a sequence may contain several
occurrences of the same pair $(a,b)$. However, such a sequence shall
be equivalent to a sequence without redundancies (this design choice
is somewhat arbitrary, we just found it more convenient not to
enforce the absence of redundancies).

We write $(a,b) \in v$ to mean that $(a,b)$ is one of the
elements of the sequence. For $v \in ({\mycal A} \times B)^*$, we write
$\dom{v}$ for the set of $a$ such that there is some $b$ such that
$(a,b) \in v$. For $\alpha \in {\mycal A} \arrow B$ and $v \in
({\mycal A} \times B)^*$, we define $v \prec
\alpha$ to mean $\alpha(a) \myeq b$ for all $(a,b) \in v$, or more
formally for the predicate defined by the following clauses:
$$
\seq{}
    {\emptyseq \prec \alpha}
\qquad
\seq{v \prec \alpha \qquad \alpha(a) \myeq b}
    {v \star (a,b) \prec \alpha}
$$

We think of $({\mycal A} \times B)^*$ as finite approximations of
functions from ${\mycal A}$ to $B$ and of predicates over finite
approximations as constraints generating an ideal or a filter.

In Table~\ref{table:13}, we generalise the notion of productive over
(morally) trees into a coinductive notion of {\em ${\mycal
A}$-$B$-approximable} relative to a valid finite set of approximations, and dually, we generalise
the notion of inductively barred from holding on a sequence to holding
relative to a finite set of approximations.

\begin{table}
  \caption{Logically opposite dual concepts on dual predicates}
  \label{table:13}
\renewcommand*{\arraystretch}{1.4}
\begin{center}
\begin{tabular}{|p{4cm}|p{4cm}|}
\hline
\hfil {\em ill-foundedness-style} & \hfil {\em well-foundedness-style} \\
\hhline{-|-}
\multicolumn{2}{|c|}{\em intensional concepts}\\
\hhline{-|-}
\hfil $T$ ${\mycal A}$-$B$-approximable from $v$ \hfil & $T$ inductively ${\mycal A}$-$B$-barred from $v$\\
\hfil $\nu X.\lambda v.\, \left(\!\!\begin{array}{l}v \in \innerneg{T}\, \mywith\,\\ \forall\, a \notin \dom{v}\,\\\exists b\, (v \star (a,b) \in X)\end{array}\!\!\right)$ &
\hfil $\mu X.\lambda v.\, \left(\!\!\begin{array}{l}v \in \outerofpos{T}\, \myoplus\,\\ \exists\, a \notin \dom{v}\,\\\forall b\, (v \star (a,b) \in X)\end{array}\!\!\right)$ \\
\hhline{-|-}
\hfil $T$ ${\mycal A}$-$B$-approximable &
\hfil $T$ inductively ${\mycal A}$-$B$-barred \\
\hfil $T$ ${\mycal A}$-$B$-approximable from $\emptyseq$ &
\hfil $T$ inductively ${\mycal A}$-$B$-barred from $\emptyseq$ \\
\hhline{-|-}
\multicolumn{2}{|c|}{\em extensional concepts}\\
\hhline{-|-}
\hfil $T$ has an ${\mycal A}$-$B$-choice function &
\hfil $T$ is ${\mycal A}$-$B$-barred \\
\hfil $\exists \alpha\, \forall u\, (u \prec \alpha \mymultimap u \in T)$ &
\hfil $\forall \alpha\, \exists u\, (u \prec \alpha \mytensor u \in T)$ \\
\hline
\end{tabular}
\end{center}
\end{table}

\subsection{Generalised Dependent Choice and Generalised Bar Induction}

\begin{table}
\hypertarget{dictgen}{\caption{Dual axioms on dual predicates}}
  \label{table:gdc-gbi}
\renewcommand*{\arraystretch}{1.4}
\begin{center}
\begin{tabular}{|p{4cm}|p{4cm}|}
\hline
\hfil {\em ill-foundedness-style} & \hfil {\em well-foundedness-style} \\
\hhline{-|-}
Generalised Dependent Choice ($\GDC_{{\mycal A}BT}$) & Generalised Bar Induction ($\GBI_{{\mycal A}BT}$) \\
$T$ ${\mycal A}$-$B$-approximable $\imp$\hfill\break \hspace*{1mm} $T$ has an ${\mycal A}$-$B$-choice function &
$T$ ${\mycal A}$-$B$-barred $\imp$\hfill\break \hspace*{1mm} $T$ inductively ${\mycal A}$-$B$-barred \\
\hline
\end{tabular}
\end{center}
\end{table}

We state the generalisation of dependent
choice and bar induction to non-sequential choices over a
non-necessarily countable domain in Table~\ref{table:gdc-gbi}. Called $\GDC_{ABT}$ (shortly
$\GDC_{AB}$ or $\GDC$ as schemes) and $\GBI_{ABT}$ (shortly
$\GBI_{AB}$ or $\GBI$ as schemes), they are generalisations in the
sense that they respectively capture $\DC^{\mathit{productive}}$
and $\BI^{\mathit{ind}}$ for countable~${\mycal A}$, where by countable
is meant the existence of a bijection between $A$ and $\N$.

\newcommand{\ord}[1]{\mathit{ord}({#1})}
\newcommand{\ordered}[1]{||#1||}
\newcommand{\unorderedlifte}[1]{\widehat{#1}^{\raisebox{1mm}{\scriptsize $+$}}}
\newcommand{\unorderedlifta}[1]{\widehat{#1}^{\raisebox{1mm}{\scriptsize $-$}}}

To prove it, let us assume without loss of generality that ${\mycal A}$ is $\N$ itself.  We
say that $v \in (\N \times B)^*$ is {\em sequential} whenever
either $v$ is empty or $v$ has the form $v' \star (|v'|,b)$ with $v'$
itself sequential. To each $u \in B^*$ we can associate a sequential
element $\ord{u}$ by $\ord{\emptyseq} \defeq \emptyseq$ and $\ord{u
  \star b} \defeq \ord{u} \star (|u|,b)$.

To each $T$ over $(\N \times B)^*$, we can associate $\ordered{T}$ on
$B^*$ by $u \in \ordered{T} \defeq \ord{u} \in T$.
Conversely, to each $T$ over $B^*$, we can associate
$\unorderedlifte{T}$ and $\unorderedlifta{T}$ on $(\N \times B)^*$ defined respectively by $v \in
\unorderedlifte{T} \defeq \exists u\,(v = \ord{u} \land u \in T)$ and $v \in
\unorderedlifta{T} \defeq \forall u\,(v = \ord{u} \imp u \in T)$.
We have an easy property:



\begin{prop}
\label{prop:ordered-stabilility} Let $T$ a predicate over $B^*$. If $T$ is closed under restriction,
$u \in T$ iff $u \in \ordered{\!\!\outerofneg{\unorderedlifte{T}}\,}$.
If $T$ is closed under extension,
$u \in T$ iff $u \in \ordered{\!\!\innerpos{\unorderedlifta{T}}\,}$.
\hfill $\Box$
\end{prop}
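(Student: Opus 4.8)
The plan is to unfold all the notation involved---$\ord{\cdot}$, $\ordered{\cdot}$, $\unorderedlift{\cdot}$, and the arborification and monotonisation closures---and to reduce both stated equivalences to a single combinatorial fact about sequential approximations, after which the closure hypotheses on $T$ finish the argument.

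The fact I would record first is: for $u,w \in B^*$, one has $\ord{u} \subseteq \ord{w}$ if and only if $u \leq_s w$. Since membership in any of the four closure operators depends on an approximation only up to $\sim$ (hence up to reordering and duplication), I may treat $\ord{u}$ as the finite set $\{(i,u(i)) \mid i < |u|\}$, and $\ord{w}$ likewise. Inclusion then says precisely that for every $i < |u|$ the pair $(i,u(i))$ occurs in $\ord{w}$; by sequentiality of $\ord{w}$, a pair with first component $i$ occurs in $\ord{w}$ exactly when $i < |w|$, and then carries value $w(i)$. Hence $|u| \leq |w|$ and $u(i) = w(i)$ for all $i < |u|$, i.e.\ $u \leq_s w$; the converse inclusion is immediate from the definition of $\ord{\cdot}$.

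Next I would unfold the two composite predicates and then apply this fact. By definition $u \in \ordered{\outerofneg{\unorderedlift{T}}}$ means $\ord{u} \in \outerofneg{\unorderedlift{T}}$, i.e.\ there is $v'$ with $\ord{u} \subseteq v'$ and $v' \in \unorderedlift{T}$; but $v' \in \unorderedlift{T}$ says $v' = \ord{w}$ for some $w \in T$, so the statement is equivalent to ``there is $w \in T$ with $\ord{u} \subseteq \ord{w}$'', hence to ``there is $w \in T$ with $u \leq_s w$''. Symmetrically, $u \in \ordered{\outerofpos{\unorderedlift{T}}}$ is equivalent to ``there is $w \in T$ with $w \leq_s u$''. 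Now if $T$ is closed under restriction, $u \in T$ gives $u \leq_s u \in T$, hence $u \in \ordered{\outerofneg{\unorderedlift{T}}}$, while conversely $u \leq_s w \in T$ forces $u \in T$; and if $T$ is closed under extension, $u \in T$ trivially witnesses ``there is $w \in T$ with $w \leq_s u$'', while conversely $w \leq_s u$ with $w \in T$ gives $u \in T$ by iterating closure under extension $|u|-|w|$ times.

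I do not expect a genuine obstacle---the proposition is essentially bookkeeping. The only points that need care are the directions of the two closure operators (upwards arborification $\outerofneg{\cdot}$ looks for an element of $T$ above $\ord{u}$, upwards monotonisation $\outerofpos{\cdot}$ looks for one below it) and the observation that, since $\unorderedlift{T}$ contains only approximations of the form $\ord{w}$, the superset witnessing membership in $\outerofneg{\unorderedlift{T}}$ is automatically sequential, so that passing between the unordered picture over $(\N\times B)^*$ and the ordered picture over $B^*$ loses nothing.
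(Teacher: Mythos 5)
Your proof is correct. The paper states this proposition without proof (it is flagged as ``an easy property'' and closed with $\Box$), and your unfolding --- reducing both equivalences to the observation that $\ord{u} \subseteq \ord{w}$ iff $u \leq_s w$, then invoking closure under restriction or extension --- is exactly the routine verification the authors leave implicit.
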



\begin{prop} For $T$ over $(\N \times B)^*$ and closed under restriction,
$T$ is $\N$-$B$-approximable iff $\ordered{T}$ is productive, and, for $T$ over $B^*$ and closed under restriction,
  $\!\outerofneg{\unorderedlifte{T}}$ is $\N$-$B$-approximable iff $T$ is productive.
  Dually, for $T$ closed under extension in both cases,
  $T$ is inductively $\N$-$B$-barred iff $\ordered{T}$ is inductively
  barred, and, $\!\innerpos{\unorderedlifta{T}}$ is inductively $\N$-$B$-barred iff
  $T$ is inductively barred.
\end{prop}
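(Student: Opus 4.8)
The plan is to prove the four biconditionals by unfolding the coinductive/inductive definitions of $\N$-$B$-approximability and productivity (resp.\ inductive $\N$-$B$-barredness and inductive barredness), exploiting the fact that the operation $\ordered{\cdot}$ relates the nested branching structure of predicates over $(\N\times B)^*$ to that of predicates over $B^*$, provided we stay in the world of predicates closed under restriction (resp.\ extension). By duality it suffices to treat the two ill-foundedness statements; the well-foundedness statements then follow by the same reasoning with $\mu$ for $\nu$, $\myoplus$ for $\mywith$, $\exists$ for $\forall$, and $\outerofpos{\cdot}$ for $\innerneg{\cdot}$, together with Prop.~\ref{prop:ordered-stabilility}.

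\emph{First claim: $T$ over $(\N\times B)^*$ closed under restriction is $\N$-$B$-approximable iff $\ordered{T}$ is productive.} The key observation is that, on \emph{sequential} approximations $v$, the only $a$ with $a\notin\dom{v}$ that can keep us inside the sequential fragment is $a=|v'|$ where $v$ corresponds to $u\in B^*$ via $v=\ord{u}$; and since $T$ is closed under restriction, $v\in\innerneg{T}$ is equivalent to $v\in T$, i.e.\ to $u\in\ordered{T}$. So I would prove, by coinduction in one direction and coinduction in the other, the strengthened statement: for every $u\in B^*$, $T$ is $\N$-$B$-approximable from $\ord{u}$ iff $\ordered{T}$ is productive from $u$. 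From left to right: unfolding $\N$-$B$-approximability at $\ord{u}$ gives $\ord{u}\in\innerneg{T}$ (hence $u\in\ordered{T}$) and, for \emph{every} $a\notin\dom{\ord{u}}$, some $b$ with $\ord{u}\star(a,b)$ approximable; instantiating $a:=|u|$ gives $b$ with $\ord{u\star b}=\ord{u}\star(|u|,b)$ approximable, and the coinductive hypothesis yields $\ordered{T}$ productive from $u\star b$, which is exactly the progressing step for the pruning. From right to left, the subtlety is that $\N$-$B$-approximability quantifies over \emph{all} $a\notin\dom{v}$ whereas productivity only provides one successor; this is where closure under restriction is essential — for $a\neq|u|$ the extension $\ord{u}\star(a,b)$ is no longer sequential, but one shows it is $\sim$-equivalent to an element whose membership in $\innerneg{T}$ (and approximability) follows from downward closure, so the universal quantifier is discharged uniformly. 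This mismatch between "one successor" (productive) and "all missing $a$, some $b$" ($\N$-$B$-approximable) is the \textbf{main obstacle}, and it is resolved precisely by the hypothesis that $T$ is closed under restriction (so $\innerneg{T}$ is invariant and membership is insensitive to which non-sequential coordinates get filled in).

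\emph{Second claim: $\outerofneg{\unorderedlift{T}}$ is $\N$-$B$-approximable iff $T$ is productive, for $T$ over $B^*$ closed under restriction.} By Prop.~\ref{prop:ordered-stabilility}, $u\in T$ iff $u\in\ordered{\outerofneg{\unorderedlift{T}}}$, i.e.\ $\ordered{\outerofneg{\unorderedlift{T}}}$ and $T$ agree as predicates on $B^*$ (both closed under restriction). Moreover $\outerofneg{\unorderedlift{T}}$ is by construction a tree (closed under restriction) over $(\N\times B)^*$, so the first claim applies to it: $\outerofneg{\unorderedlift{T}}$ is $\N$-$B$-approximable iff $\ordered{\outerofneg{\unorderedlift{T}}}$ is productive iff $T$ is productive, the last step using that productivity only depends on the predicate up to the equivalence furnished by Prop.~\ref{prop:ordered-stabilility}. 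Thus the second claim is an immediate corollary of the first once the bookkeeping about which predicates are trees is in place.

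\emph{Dual claims.} For the well-foundedness side I would run the order-dual argument: prove by induction on the derivation of inductive $\N$-$B$-barredness from $\ord{u}$ that it is equivalent to inductive barredness of $\ordered{T}$ from $u$, using $v\in\outerofpos{T}\iff v\in T$ for $T$ closed under extension, and instantiating the existential witness $a\notin\dom{\ord{u}}$ at $a=|u|$ in one direction while, in the other direction, using closure under extension to show that choosing any other missing coordinate $a$ only enlarges membership, so the "some $a$" of the inductive $\N$-$B$-barred clause can always be taken to be $|u|$. The second dual claim then follows from the first together with the extension-case of Prop.~\ref{prop:ordered-stabilility}, exactly as above. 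No new idea is needed beyond systematically replacing $\nu,\mywith,\forall,\innerneg{\cdot}$ by $\mu,\myoplus,\exists,\outerofpos{\cdot}$.
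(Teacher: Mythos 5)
Your overall strategy coincides with the paper's (coinduction in both directions for the first claim, instantiating $a:=|u|$ left-to-right, and reducing the remaining three claims via duality and Prop.~\ref{prop:ordered-stabilility}), but the hard direction — right to left of the first claim — has a genuine gap. Your coinductive invariant is ``$T$ is $\N$-$B$-approximable from $\ord{u}$ iff $\ordered{T}$ is productive from $u$'', which only covers \emph{sequential} approximations $\ord{u}$. Unfolding approximability at $\ord{u}$ requires a witness for \emph{every} $a\notin\dom{\ord{u}}$, i.e.\ every $a\geq|u|$; for $a>|u|$ the extension $\ord{u}\star(a,b)$ has a gap at positions $|u|,\dots,a-1$ and is therefore \emph{not} $\sim$-equivalent to any sequential element (it is only a strict subset of suitable $\ord{u'}$), so it falls outside your invariant and the coinductive hypothesis cannot be applied to it. Moreover, your claim that its ``membership in $\innerneg{T}$ (and approximability) follows from downward closure'' does not hold as stated: downward closure transfers \emph{membership} from a larger set to a smaller one, but approximability is a coinductive property that is not inherited this way without a separate argument, and in any case you have not said how to produce the larger set in $T$, nor which $b$ to choose for position $a$.

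The paper resolves exactly this obstacle by strengthening the coinductive statement to: if $\ordered{T}$ is productive from $u$ then $T$ is $\N$-$B$-approximable from $v$ for \emph{all} $v\subseteq\ord{u}$. With this invariant, a missing coordinate $n<|u|$ is handled by filling in $(n,u(n))$ (staying inside $\ord{u}$), while a coordinate $n\geq|u|$ is handled by unfolding productivity to extend $u$ to $u\star b$ and recursing by an inner induction on $n-|u|$ until the first case applies; this simultaneously determines the correct witness $b$ and keeps the extended $v$ inside the invariant. Your write-up correctly identifies the mismatch between ``one successor'' and ``all missing $a$'' as the main difficulty, but the mechanism you offer to discharge it does not work; you need the generalisation to arbitrary $v\subseteq\ord{u}$ together with the iteration of productivity.
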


\begin{proof}
By duality and Prop.~\ref{prop:ordered-stabilility}, it is
enough to prove the first item. The proof is by coinduction in both
directions.

From left to right, we prove $T$ $\N$-$B$-approximable from
$\ord{u}$ implies $\ordered{T}$ productive from $u$. We take $|u|$ for $a$ in the
definition of $\N$-$B$-approximable from~$\ord{u}$, get some $b$ and pass
it to the definition of $\ordered{T}$ productive from~$u$.

From right to left, we prove more generally that if $\ordered{T}$ is
productive from $u$ then $T$ is $\N$-$B$-approximable from
$v$ for all $v \subseteq \ord{u}$. By definition of $u \in \ordered{T}$, we
have $\ord{u} \in T$ and thus $v \in \innerneg{T}$ by closure of
$T$. Now, take $n \not\in \dom{v}$. If $n < |u|$, we set $b$ to be
$u(n)$ and apply the coinduction hypothesis with $v$ extended with $b$,
which still satisfies $v \star b \subseteq \ord{u}$ by a combinatorial
argument. If $n \geq |u|$, we explore the proof of productivity of
$\ordered{T}$ one step further, getting some $b$ such that $u \star
b \in \ordered{T}$ and $\ordered{T}$ is productive from $u \star
b$. The property $v \subseteq \ord{u \star b}$ continues to hold and we
reason by induction on $n-|u|$ until falling into the first case.
\end{proof}

Similarly, we have:

\begin{prop}
For $T$ closed under restriction in both cases,
$T$ has an $\N$-$B$-choice function iff $\ordered{T}$ has an
  infinite branch, and, $\outerofneg{\unorderedlifte{T}}$ has a $\N$-$B$-choice
  function iff $T$ has an infinite branch. Dually, for $T$ closed under
  extension in both cases, $T$ is $\N$-$B$-barred iff
  $\ordered{T}$ is barred, and, $\!\innerpos{\unorderedlifta{T}}$ is
  $\N$-$B$-barred iff $T$ is barred.
\end{prop}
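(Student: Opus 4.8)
The plan is to follow the same route as the proof of the preceding proposition. By Prop.~\ref{prop:ordered-stabilility} together with De~Morgan duality, all four stated equivalences reduce to a single \emph{core equivalence}: for every predicate $T$ over $(\N \times B)^*$ that is closed under restriction (with respect to $\subseteq$), $T$ has an $\N$-$B$-choice function if and only if $\ordered{T}$ has an infinite branch. From this, the second ill-foundedness-style item follows by applying the core equivalence to $\outerofneg{\unorderedlift{T}}$ --- which is closed under restriction, being an arborification --- and simplifying via $\ordered{\outerofneg{\unorderedlift{T}}} = T$ (Prop.~\ref{prop:ordered-stabilility}). The first well-foundedness-style item is the De~Morgan dual of the core equivalence, using that $\ordered{\cdot}$ commutes with complementation, that ``closed under restriction'' is classically dual to ``closed under extension'', and that ``has an $\N$-$B$-choice function'' and ``has an infinite branch'' are dual respectively to ``is $\N$-$B$-barred'' and ``is barred''. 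Applying this dual to $\outerofpos{\unorderedlift{T}}$ and simplifying via $\ordered{\outerofpos{\unorderedlift{T}}} = T$ gives the last item.

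For the core equivalence, the key observation is that a function $\N \arrow B$ and an infinite sequence of $B^{\N}$ are the same object, so a single witness $\alpha$ works on both sides, and that for $u \in B^*$ one has $u \prec_s \alpha$ if and only if $\ord{u} \prec \alpha$, since both unfold to ``$\alpha(i) \myeq u(i)$ for all $i < |u|$''. From left to right I would take a choice function $\alpha$ for $T$ and verify it is an infinite branch of $\ordered{T}$: for any $u \prec_s \alpha$ we get $\ord{u} \prec \alpha$, hence $\ord{u} \in T$, i.e.\ $u \in \ordered{T}$. This direction needs no closure hypothesis.

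From right to left I would take an infinite branch $\beta$ of $\ordered{T}$ and show it is an $\N$-$B$-choice function for $T$: given $v \prec \beta$, put $n \defeq \mathit{max}(\dom{v}) + 1$ (with $n \defeq 0$ when $v = \emptyseq$) and let $u$ be the restriction of $\beta$ to its first $n$ values. Then $u \prec_s \beta$, so $\ord{u} \in T$; moreover $v \subseteq \ord{u}$, since every $(a,b) \in v$ satisfies $a < n$ and $b = \beta(a) = u(a)$, so $(a,b)$ occurs in $\ord{u}$. Closure of $T$ under restriction (with respect to $\subseteq$) then yields $v \in T$.

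I expect the only genuinely delicate point to be the bookkeeping in the reduction step: keeping track that ``closed under restriction in both cases'' is used in two different guises --- closure with respect to $\subseteq$ on $(\N \times B)^*$ inside the core equivalence, versus closure with respect to $\leq_s$ on $B^*$ needed to invoke Prop.~\ref{prop:ordered-stabilility} --- and checking that $\outerofneg{\unorderedlift{T}}$ (respectively $\outerofpos{\unorderedlift{T}}$) does satisfy the hypothesis under which the core equivalence (respectively its dual) is applied. The finite-set combinatorics relating $v$, $\ord{u}$ and $\dom{v}$ are otherwise routine, exactly as in the preceding proposition.
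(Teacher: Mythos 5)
Your proof is correct and follows essentially the same route as the paper's: reduce everything to the single core equivalence between ``$T$ has an $\N$-$B$-choice function'' and ``$\ordered{T}$ has an infinite branch'' via duality and Prop.~\ref{prop:ordered-stabilility}, then relate $u \prec_s \alpha$ to $\ord{u} \prec \alpha$ in both directions. One small point in your favour: in the right-to-left direction the paper takes the prefix length $n \defeq |v|$, which does not in general guarantee $v \subseteq \ord{u}$ (an element of $\dom{v}$ may exceed $|v|$), whereas your choice $n \defeq \mathit{max}(\dom{v})+1$ is the correct one.
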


\begin{proof}
By duality and Prop.~\ref{prop:ordered-stabilility}, it is
enough to prove the first item. From left to right, if
$u \prec_s \alpha$, it is enough to consider
$\ord{u} \prec \alpha$. From right to left, if $v \prec \alpha$, we
consider $u \defeq \alpha_{|n}$, i.e. the initial prefix of length $n$
of $\alpha$, where $n$ is $|v|$. We have $u \prec_s \alpha$ thus
$u \in \ordered{T}$ and $\ord{u} \in T$. Since $v \subseteq \ord{u}$, we get
$v \in T$ by closure of $T$.
\end{proof}

Consequently, we have:

\begin{theorem}
\label{thm:dc-bi}
$\!\DC_{BT}^{\mathit{productive}}\!$ iff $\GDC_{\N BT}$ and
$\BI_{BT}^{\mathit{ind}}$ iff\footnote{Classically, or, assuming decidability or monotony of $T$.
Credits: M. Baillon.}
 $\GBI_{\N BT}$.
\end{theorem}

\begin{proof}
We mediate by the property that $\GDC_{\N BT}$ is equivalent as a
scheme to its restriction to predicates $T$ closed under restriction.
Indeed, it is enough to reason with $\innerneg{T}$ knowing that
$\innerneg{T} \subseteq T$ and that $\innerneg{T}$ is the identity on
predicates closed under restriction. The other equivalence holds by
duality
\end{proof}

Now, in combination with Prop.~\ref{prop:productive-unbounded}
and~\ref{prop:kl-ft-variants} and Th.~\ref{thm:spread-prod}, we
get:
\begin{theorem}
\label{thm:gdc-kl}
As schemes, generalised over $T$, for $B$ non-empty finite,
$\GDC_{\N BT}$ is equivalent to
$\KL^{\mathit{spread}}_{BT}$ and
$\KL^{\mathit{productive}}_{BT}$, and,
in co-intuitionistic and classical logic, equivalent also to
$\KL^{\mathit{unbounded}}_{BT}$ and $\KL^{\mathit{staged}}_{BT}$.
Dually, as schemes, $\GBI_{\N BT}$ is equivalent to
$\FT^{\mathit{barricaded}}_{BT}$ and
$\FT^{\mathit{ind}}_{BT}$, and,
in intuitionistic and classical logic, equivalent also to
$\FT^{\mathit{uniform}}_{BT}$ and $\FT^{\mathit{staged}}_{BT}$.
\hfill $\Box$
\end{theorem}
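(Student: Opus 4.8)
The statement is a corollary assembled from the equivalences established so far together with the definitional identifications for non-empty finite $B$ collected in Table~\ref{table:5}, so the plan is essentially a chaining exercise rather than a fresh argument.

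First I would dispatch the unconditional equivalences. By Table~\ref{table:5}, $\KL^{\mathit{productive}}_{BT}$ (resp.\ $\KL^{\mathit{spread}}_{BT}$) is just $\DC^{\mathit{productive}}_{BT}$ (resp.\ $\DC^{\mathit{spread}}_{BT}$) read with $B$ non-empty finite, and dually $\FT^{\mathit{ind}}_{BT}$ (resp.\ $\FT^{\mathit{barricaded}}_{BT}$) is $\BI^{\mathit{ind}}_{BT}$ (resp.\ $\BI^{\mathit{barricaded}}_{BT}$). Theorem~\ref{thm:dc-bi} gives $\GDC_{\N BT}\iff\DC^{\mathit{productive}}_{BT}$ and $\GBI_{\N BT}\iff\BI^{\mathit{ind}}_{BT}$, and Theorem~\ref{thm:spread-prod} gives, as schemes over $T$, $\DC^{\mathit{productive}}_{BT}\iff\DC^{\mathit{spread}}_{BT}$ and $\BI^{\mathit{ind}}_{BT}\iff\BI^{\mathit{barricaded}}_{BT}$. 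Restricting the latter scheme equivalences to non-empty finite $B$ and substituting definitions yields $\GDC_{\N BT}\iff\KL^{\mathit{productive}}_{BT}\iff\KL^{\mathit{spread}}_{BT}$ and, dually, $\GBI_{\N BT}\iff\FT^{\mathit{ind}}_{BT}\iff\FT^{\mathit{barricaded}}_{BT}$.

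For the conditional clauses I would use Prop.~\ref{prop:productive-unbounded}: for non-empty finite $B$, ``productive'' and ``having unbounded paths'' are equivalent properties of an arbitrary predicate over $B^*$ provided $D_S$ holds, and dually ``inductively barred'' and ``uniformly barred'' are equivalent provided $C_S$ holds (for the relevant class $S$ of formulae). Since replacing the antecedent of an implication scheme by a provably equivalent property preserves the scheme, this gives $\KL^{\mathit{productive}}_{BT}\iff\KL^{\mathit{unbounded}}_{BT}$ in any logic validating $D_S$ and $\FT^{\mathit{ind}}_{BT}\iff\FT^{\mathit{uniform}}_{BT}$ in any logic validating $C_S$; as recorded in the surrounding text $D_S$ holds co-intuitionistically (hence also classically) and $C_S$ holds intuitionistically (hence also classically), which is exactly the stated logical side-condition. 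Prop.~\ref{prop:kl-ft-variants} then adjoins, with no further hypothesis, $\KL^{\mathit{staged}}_{BT}$ to $\KL^{\mathit{unbounded}}_{BT}$ and $\FT^{\mathit{staged}}_{BT}$ to $\FT^{\mathit{uniform}}_{BT}$. Chaining these with $\GDC_{\N BT}\iff\KL^{\mathit{productive}}_{BT}$ and $\GBI_{\N BT}\iff\FT^{\mathit{ind}}_{BT}$ from the first part closes the argument.

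I do not expect a genuine obstacle, which matches the ``$\Box$'' annotation on the statement: the only point deserving a line of care is the passage from the \emph{property}-level equivalences of Prop.~\ref{prop:productive-unbounded} to \emph{scheme}-level equivalences, together with the bookkeeping that confirms that one of the two implications in each such property equivalence (``productive $\imp$ unbounded'' on the choice side, ``inductively barred $\imp$ uniformly barred'' on the bar-induction side) is in fact unconditional, so that only one direction between the paired schemes actually consumes $D_S$ (resp.\ $C_S$); everything else is substitution of definitions from Table~\ref{table:5}.
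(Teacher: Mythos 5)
Your proposal is correct and follows essentially the same route as the paper, which derives the theorem as an immediate corollary of Prop.~\ref{prop:productive-unbounded}, Prop.~\ref{prop:kl-ft-variants} and Th.~\ref{thm:spread-prod} (together with Th.~\ref{thm:dc-bi} and the definitional identifications of Table~\ref{table:5}). Your extra remark about which single direction of each property-level equivalence actually consumes $D_S$ (resp.\ $C_S$) is accurate and consistent with the paper's discussion following Prop.~\ref{prop:spread-productive}.
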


\subsection{Inconsistency of the unconstrained form of Generalised Dependent Choice
and Generalised Bar Induction}

In its full generality, the generalisation of $\GDC$ and $\GBI$ obtained
by allowing non-countable branchings over an arbitrary codomain $B$
is inconsistent: for large enough $A$ and $B$,
it may happen that some $T$ is coinductively $A$-$B$-approximable without
$T$ having a (full) $A$-$B$-choice function. Indeed, take $A \defeq \Bool^{\N}$ and
$B \defeq \N$ and filter the choice function so that it is injective.
That is, we define $u \in T$ as follows: if $u$ contains
$(f,n)$ and $(f',n)$ then $f$ and $f'$ are extensionally equal.

Then, $T$ is coinductively $\Bool^{\N}$-$\N$-approximable by successively
extending $u$ with $(f,|u|)$ for any $f$ not already in $\dom{u}$. But
there is no total choice function $\alpha$ from $\Bool^{\N}$ to $\N$,
since, by Cantor's theorem, such a function is necessarily
non-injective. Thus, taking $f$ and $f'$ distinct such that
$n \defeq \alpha(f) = \alpha(f')$, we get that the sequence
$(f,n),(f',n) \prec \alpha$ is not in $T$.

Therefore, we have:
\begin{prop}
As schemes, $\GDC_{\Bool^{\N}\N T}$ and $\GBI_{\Bool^{\N}\N T}$ are inconsistent (this requires classical logic; credits: Y. Forster).
\end{prop}

\begin{table}
  \caption{Logically opposite dual concepts on dual homogeneous relations}
  \label{table:23}
\renewcommand*{\arraystretch}{1.4}
\begin{center}
\begin{tabular}{|p{4cm}|p{4cm}|}
\hline
\hfil {\em ill-foundedness style} & \hfil {\em well-foundedness-style} \\
\hhline{-|-}
\multicolumn{2}{|c|}{\em intensional concepts}\\
\hhline{-|-}
\hfil positive alignment of $R$ ($R_{\top}$) &
\hfil negative alignment of $R$ ($R_{\bot}$) \\
\hfil $\lambda v.\, \forall (a,b) \in v\,(R(a,b))$ &
\hfil $\lambda v.\, \exists (a,b) \in v\,(R(a,b))$ \\
\hline
\end{tabular}
\end{center}
\end{table}

\subsection{Relation to the  general axiom of choice}

\begin{table}
\hypertarget{dict2}{\caption{The axiom of Choice and its dual}}
  \label{table:22}
\renewcommand*{\arraystretch}{1.4}
\begin{center}
\begin{tabular}{|p{4cm}|p{4cm}|}
\hline
\hfil {\em ill-foundedness-style} & \hfil {\em well-foundedness-style} \\
\hhline{-|-}
Standard Axiom of Choice\hfill\break ($\AC_{ABR}$) &
  Dual to Standard Axiom of Choice ($\coAC_{ABR}$) \\
$R$ $A$-$B$-left-total $\imp$ &
$R$ $A$-$B$-barred $\imp$\\
\hspace*{1mm} $R$ has an $A$-$B$-choice function &
\hspace*{1mm} $R$ $A$-$B$-ground\\
\hline
\end{tabular}
\end{center}
\end{table}

We state the standard axiom of choice in Table~\ref{table:22} and prove that it is equivalent
to an instance of the generalised dependent choice $\GDC$. To do so,
we generalise in Table~\ref{table:23} the notion of sequential
alignment introduced in Section~\ref{sec:cc} to the notion of
(non-sequential) {\em alignment of a relation} on $A \times B$ as a predicate over $(A \times
B)^*$.

\begin{theorem}
\label{thm:ac}
$\AC_{ABR}$ is logically equivalent to $\GDC_{ABR_{\top}}$
\end{theorem}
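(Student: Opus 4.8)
The plan is to prove the two implications directly, exploiting that the positive alignment $R_\top = \lambda v.\,\forall (a,b)\in v\,(R(a,b))$ is closed under restriction with respect to $\subseteq$ (any subsequence of a $v$ all of whose pairs satisfy $R$ again has all pairs satisfying $R$), so that $\innerneg{R_\top}$ coincides with $R_\top$ and the definition of $A$-$B$-approximability simplifies accordingly.

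For $\GDC_{ABR_\top}\imp\AC_{ABR}$: assume $R$ is $A$-$B$-left-total. First I would show that $R_\top$ is $A$-$B$-approximable. By coinduction, I prove the stronger statement that $R_\top$ is $A$-$B$-approximable from $v$ for every $v\in R_\top$: such a $v$ lies in $\innerneg{R_\top}=R_\top$, and for any $a\notin\dom v$ left-totality — being a plain $\forall\exists$ statement, it requires no meta-level choice — yields some $b$ with $R(a,b)$, whence $v\star(a,b)\in R_\top$ and the coinduction hypothesis gives approximability from $v\star(a,b)$. Since $\emptyseq\in R_\top$ vacuously, $R_\top$ is $A$-$B$-approximable. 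Applying $\GDC_{ABR_\top}$ produces $\alpha$ with $u\prec\alpha\imp u\in R_\top$ for all $u$; then for any $a,b$ with $\alpha(a)\myeq b$ we have $\emptyseq\star(a,b)\prec\alpha$, so the singleton $\emptyseq\star(a,b)$ is in $R_\top$, i.e. $R(a,b)$. Thus $\alpha$ is an $A$-$B$-choice function for $R$.

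For $\AC_{ABR}\imp\GDC_{ABR_\top}$: assume $R_\top$ is $A$-$B$-approximable, i.e. approximable from $\emptyseq$. Unfolding the coinductive definition once, since $\dom\emptyseq=\emptyset$, for every $a$ there is $b$ with $R_\top$ approximable from $\emptyseq\star(a,b)$, and in particular $\emptyseq\star(a,b)\in\innerneg{R_\top}\subseteq R_\top$, so $R(a,b)$; hence $R$ is $A$-$B$-left-total. Applying $\AC_{ABR}$ gives $\alpha$ with $\alpha(a)\myeq b\imp R(a,b)$ for all $a,b$. For any $u$ with $u\prec\alpha$, every $(a,b)\in u$ satisfies $\alpha(a)\myeq b$, hence $R(a,b)$, so $u\in R_\top$; thus $\alpha$ is an $A$-$B$-choice function for $R_\top$.

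I do not expect a genuine obstacle: the statement is essentially a bookkeeping unwinding of the two definitions once one observes that $R_\top$ faithfully encodes $R$ on singletons and conjunctively on longer approximations. All readings of the function notation (functions as objects, as functional relations, and, when $B=\Bool$, as predicates) go through verbatim, because the only property of $\alpha$ used on either side is the defining equivalence ``$\alpha(a)\myeq b$ for all $(a,b)\in v$'' $\iff$ $v\prec\alpha$. The single step that deserves care is the coinductive construction of $A$-$B$-approximability of $R_\top$, where one must invoke the closure-under-restriction fact so that $\innerneg{R_\top}=R_\top$; this is elementary.
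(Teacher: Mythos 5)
Your proposal is correct and follows essentially the same route as the paper: the paper likewise establishes the equivalence of the hypotheses ($R$ left-total iff $R_\top$ approximable, by coinduction calling left-totality at each step in one direction and by one unfolding of approximability from $\emptyseq$ in the other) and observes that the same $\alpha$ serves as choice function on both sides via the singleton/pointwise correspondence. Your explicit remark that $R_\top$ is closed under restriction, so $\innerneg{R_\top}=R_\top$, is a detail the paper leaves implicit but is exactly the right justification.
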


\begin{proof}
The proof is a variant of the one of Th.~\ref{thm:cc}. For instance, the
correspondence between $R$ $A$-$B$-left-total and $R_{\top}$
$A$-$B$-approximable is by coinduction from left to right,
calling left-totality at each step, and, from right to left, for any~$a$,
by using $A$-$B$-approximability from $\emptyseq$ to
get $b$ such that $R(a,b)$.
\end{proof}

\section{The Boolean instances of generalised dependent choice and bar induction: relation to
the Boolean Prime Ideal/Filter Theorem and completeness theorems}
\label{sec:BPI}

\subsection{Generalised Weak K\texorpdfstring{ő}{\"o}nig Lemma and Generalised Weak Fan Theorem}

By instantiating the codomain $B$ to $\Bool$ in $\GDC_{ABT}$ and $\GBI_{ABT}$, we
obtain extensions $\GBI_{{\mycal A}\Bool T}$ of the Weak Fan Theorem
(precisely of $\FT^{\mathit{ind}}_{\Bool T}$, i.e. $\GBI^{\mathit{ind}}_{\N\Bool T}$ by
Th.~\ref{thm:gdc-kl}) and $\GDC_{{\mycal
A}\Bool T}$ of the Weak Kőnig Lemma (precisely of
$\KL^{\mathit{productive}}_{\Bool T}$, i.e. $\GDC^{\mathit{productive}}_{\N\Bool T}$
by Th.~\ref{thm:gdc-kl}) which replace the countable
sequence of branching made on a ``tree'' (in practise predicates) by a
countable sequence of choices in arbitrary order over a
non-necessarily countable domain.
This will be proved
equivalent to a version of the Boolean Prime Ideal/Filter Theorem
where primality is formulated positively and to versions of the
completeness theorem for entailment relations.
This is consistent with
the standard reverse mathematics results which show that the completeness
theorem is equivalent to the Weak Kőnig's Lemma on countable
theories~\cite{Simpson09} but equivalent to the Boolean Prime Filter
Theorem on theories of arbitrary
cardinality~\cite{Henkin49a,RubinRubin63,Jech73,Espindola16}.

\begin{table}
  \caption{Dual axioms on dual predicates}
  \label{table:14}
\renewcommand*{\arraystretch}{1.4}
\begin{center}
\begin{tabular}{|p{4cm}|p{4cm}|}
\hline
\hfil {\em ill-foundedness-style} & \hfil {\em well-foundedness-style} \\
\hhline{-|-}
Generalised Weak Kőnig's Lemma ($\GDC_{{\mycal A}\Bool T}$) & Generalised Weak Fan Theorem ($\GBI_{{\mycal A}\Bool T}$) \\
$T$ ${\mycal A}$-$\Bool$-approximable $\imp$\hfill\break \hspace*{1mm} $T$ has an ${\mycal A}$-$\Bool$-choice function &
$T$ ${\mycal A}$-$\Bool$-barred $\imp$\hfill\break \hspace*{1mm} $T$ inductively ${\mycal A}$-$\Bool$-barred \\
\hline
\end{tabular}
\end{center}
\end{table}

\subsection{Logical reading: relation to completeness theorem}
\label{sec:logical-reading}

We can give a logical reading to $({\mycal A} \times \Bool)^*$ as
follows.  We call {\em atom} any element of ${\mycal A}$. We interpret
pairs in ${\mycal A} \times \Bool$ as {\em literals}, i.e. as atoms
together with a polarity indicating whether the atom is positive or negative (we
adopt the convention that $1$ stands for positive and $0$ for
negative). We call {\em clause} any unordered sequence of elements in
${\mycal A} \times \Bool$. We call {\em context} any unordered sequence
of elements of ${\mycal A}$. We range over clauses by the letters $C$,
$D$ and over contexts by the letters $\Gamma$, $\Delta$, ...

Any clause $C$ can canonically be represented as a pair of two
contexts $\Gamma$ and $\Delta$ with $\Gamma$ the subset of positive
elements of ${\mycal A}$ in $C$ and $\Delta$ the subset of negative
elements. We write $\Gamma \triangleright \Delta$ for such a pair. We
call a set of clauses a {\em theory} and use the letter ${\cal T}$ to
range over theories. We write $(\Gamma \triangleright \Delta) \in
{\cal T}$ to mean that there is a clause of ${\cal T}$ associated to
the pair $\Gamma \triangleright \Delta$.  We write $\Gamma \intersect
\Delta$ to mean that $\Gamma$ and $\Delta$ have an atom in common.

We consider (a variant of) Scott's notion of entailment
relation~\cite{Scott74}, i.e. of a preorder relation up to ``side
contexts''. Let ${\cal T}$ be a theory on ${\mycal A}$. We define the {\em
  entailment relation} generated by ${\cal T}$ to be the smallest relation on
sequents, written $\Gamma \vdash_{\cal T} \Delta$, with $\Gamma$ and $\Delta$
treated as sets, such that the following holds:
\begin{small}
$$
\seqr{\raisebox{-0.8mm}{$\!\!\!\!\Ax$}}{\Gamma \intersect \Delta}
     {\Gamma \vdash_{\cal T} \Delta}
\quad
\seqr{\raisebox{-0.8mm}{$\!\!\!\!\Ax_{\cal T}$}}{(\Gamma \triangleright \Delta) \in \outerofpos{{\cal T}}}
     {\Gamma \vdash_{\cal T} \Delta}
\quad
\seqr{\raisebox{-0.8mm}{$\!\!\!\!\Cut$}}{\Gamma \vdash_{\cal T} \Delta, F \quad \Gamma, F \vdash_{\cal T} \Delta}
     {\Gamma \vdash_{\cal T} \Delta}
$$
\end{small}
It is usual to add an explicit weakening rule to the definition of
entailment relation but here we shall consider it as an admissible
rule.
%
%
Formally, the existence of a derivation of $\Gamma \vdash_{\cal T} \Delta$
using the inferences rules above is the same as
\begin{small}
$$\vdash_{\cal T} \defeq \mu X.\lambda (\Gamma
  \mathbin{\triangleright}
\Delta).\,\left(\!\!\begin{array}{l}(\Gamma \mathbin{\triangleright} \Delta) \in
  \outerofpos{{\cal T}}\\ \myoplus\, \exists\, F \notin (\Gamma \cup
  \Delta)\left(\!\!\begin{array}{l}(\Gamma, F \mathbin{\triangleright} \Delta) \in
                X \\ \mywith\, (\Gamma \mathbin{\triangleright} \Delta, F) \in
                X\end{array}\!\!\right)\end{array}\!\!\right)$$
\end{small}
Thus, $\Gamma \vdash_{\cal T} \Delta$ exactly says that ${\cal T}$ is
inductively ${\mycal A}$-$\Bool$-barred from $\Gamma
\mathbin{\triangleright} \Delta$.

Conversely, let us consider $\Gamma \not\vdash_{\cal T} \Delta$. We
could define it by negation of $\Gamma \vdash_{\cal T} \Delta$ but we
instead give a direct explicit definition which we call {\em positive
disprovability} and which is equivalent to the negation of $\Gamma \vdash_{\cal
  T} \Delta$ when the connectives are read linearly or classically
(though not equivalent when read intuitionistically). Let ${\cal T}^C$
denote the complement of ${\cal T}$, i.e. $(\Gamma
\mathbin{\triangleright} \Delta) \in {\cal T}^C \defeq \neg ((\Gamma
\mathbin{\triangleright} \Delta) \in {\cal T})$. The positive disprovability $\Gamma
\not\vdash_{\cal C} \Delta$ can be characterised as the
${\cal T}^C$ ${\mycal A}$-$\Bool$-approximability from $\Gamma
\triangleright \Delta$, that is, formally:
\begin{small}
$$(\Gamma \triangleright \Delta) \in
\nu X.\lambda (\Gamma \mathbin{\triangleright} \Delta).\,
\left(\!\!\begin{array}{l}(\Gamma \mathbin{\triangleright} \Delta) \in
        \innerneg{\cal T}^C\\
        \mywith\, \forall\, F \notin (\Gamma \cup \Delta)
        \left(\!\!\begin{array}{l}(\Gamma, F \mathbin{\triangleright} \Delta) \in X
                \\
                \myoplus\, (\Gamma \mathbin{\triangleright} \Delta, F) \in X
              \end{array}\!\!\right)
      \end{array}\!\!\right)
$$
\end{small}

Let $\alpha$ be a function from ${\mycal A}$ to $\Bool$. It can be interpreted
as a model over ${\mycal A}$ with $1$ to indicate that the atom is true in the
model and $0$ to indicate that the atom is false in the model.

Truth $\alpha \vDash {\cal T}$ of a theory ${\cal T}$ in a model
$\alpha$ can be defined by $$\alpha \vDash {\cal T} ~~\defeq~~ \forall
(\Gamma \mathbin{\triangleright} \Delta) \in {\cal T}\, (\Gamma \subset \alpha
\imp \Delta \intersect \alpha)$$ where we use the notation $\Gamma
\subset \alpha$ to mean that $\forall a \in \Gamma\, \alpha(a) \myeq 1$ and the notation
$\Delta \intersect \alpha$ to mean $\neg \forall a \in \Delta\,
\alpha(a) \myeq 0$. Then, ${\cal T}$ is satisfiable (or has a model) if there exists
$\alpha$ such that $\alpha \vDash {\cal T}$.

Like for disprovability, the negation of truth can be defined
explicitly rather than by negation in a way which is equivalent when
the connectives are read linearly or classically (but not
intuitionistically). Let us define {\em positive
  falsity} of a theory ${\cal T}$ in a model $\alpha$, written $\alpha
\not\vDash {\cal T}$, by the following formula:
$$\alpha \not\vDash {\cal T} ~~\defeq~~ \exists (\Gamma \mathbin{\triangleright}
\Delta) \in {\cal T}\, (\Gamma \subset \alpha \land \Delta \subset \overline{\alpha})$$
where $\Delta \subset \overline{\alpha}$ stands for
$\forall a \in \Delta\, \alpha(a) \myeq 0$. We say that the theory ${\cal T}$ is {\em positively unsatisfiable}
if, for all $\alpha$, $\alpha \not\vDash {\cal T}$.

Then, still identifying clauses in ${\cal T}$ as sequences in
$({\mycal A} \times \Bool)^*$, we get that ${\cal T}$ ${\mycal
  A}$-$\Bool$-barred corresponds to the positive unsatisfiability of ${\cal
  T}$. Also, noticing that $\exists \alpha\,
\forall u\, (u \prec \alpha \mymultimap u \in {\cal T}^{C})$ is isomorphic
to $\exists \alpha\, \forall u\, (u \in {\cal T}
\mymultimap \neg u \prec \alpha)$ and that $\neg u \prec \alpha$ is
isomorphic to $\Gamma \subset \alpha \imp \Delta \intersect
\alpha$, we get that
${\cal T}^C$ has an ${\mycal A}$-$\Bool$-choice function if and only if
there exists a model for ${\cal T}$ (see Table~\ref{table:15}
where $\vdash_{\cal T}$ and $\not\vdash_{\cal T}$ refer to the
provability and positive disprovability of the empty clause).

\begin{table}
  \caption{Logically opposite dual concepts of logic on the same predicate}
  \label{table:15}
\renewcommand*{\arraystretch}{1.4}
\begin{center}
\begin{tabular}{|p{4cm}|p{4cm}|}
\hline
\hfil {\em ill-foundedness-style} & \hfil {\em well-foundedness-style} \\
\hhline{-|-}
\multicolumn{2}{|c|}{\em intensional concepts}\\
\hhline{-|-}
\hfil ${\cal T}$ is (positively) consistent &
\hfil \quad\qquad ${\cal T}$ is inconsistent \quad\qquad\\
\hfil $\not\vdash_{\cal T}$ &
\hfil $\vdash_{\cal T}$ \\
\hhline{-|-}
\multicolumn{2}{|c|}{\em extensional concepts}\\
\hhline{-|-}
\hfil ${\cal T}$ is satisfiable & \hfil ${\cal T}$ is (positively) unsatisfiable \\
\hfil $\exists \alpha\, \alpha \vDash {\cal T}$ &
\hfil $\forall \alpha\, \alpha \not\vDash {\cal T}$ \\
\hline
\end{tabular}
\end{center}
\end{table}

\begin{table}
\hypertarget{dictcompl}{\caption{Reformulation of Table~\ref{table:14} as statements about a given logical theory}}
  \label{table:16}
\renewcommand*{\arraystretch}{1.4}
\begin{center}
\begin{tabular}{|p{4.05cm}|p{3.95cm}|}
\hline
\hfil {\em ill-foundedness-style} & \hfil {\em well-foundedness-style} \\
\hhline{-|-}
Model-existence-style Completeness\hspace*{-1cm}\hfill\break Theorem ($\Compl^-_A({\cal T})$) &
Provability-style Completeness \hfill\break Theorem ($\Compl^+_A({\cal T})$) \\
${\cal T}$ consistent $\imp {\cal T}$ is satisfiable &
${\cal T}$ unsatisfiable $\imp {\cal T}$ inconsistent \\
\hline
\end{tabular}
\end{center}
\end{table}

The completeness theorem of logic is conventionally expressed either
as the existence of a model for any consistent theory, or
contrapositively, that if a theory is unsatisfied in all theories,
then it is inconsistent, as shown on Table~\ref{table:16}.
For instance, see Rinaldi, Schuster and
Wessel~\cite{RinaldiSchusterWessel17} for the statement of a
completeness theorem such as $\mbox{Compl}^+({\cal T})$, up to the
identification of some $\exists$ with $\neg\neg\exists$. See also
e.g.~\cite{RinaldiSchuster16} for an algebraic reading.
Summing up, we have:
\begin{theorem}
\label{thm:compl}
Let ${\cal T}$ be a theory of clauses over some set of atoms $A$, with clauses represented as sequences
in $({\mycal A} \times \Bool)^*$.
The {\em Generalised Weak Kőnig's Lemma} over the complement ${\cal T}^C$ of ${\cal T}$, i.e. $\GDC_{{\mycal A}\Bool{\cal T}^C}$, coincides
with the model-existence formulation of completeness for the Scott entailment relation generated by ${\cal T}$, i.e. $\Compl^-_A({\cal T})$. Contrapositively,
the {\em Generalised Weak Fan Theorem} over ${\cal T}$, i.e. $\GBI_{{\mycal A}\Bool{\cal T}}$, coincides
with the provability-style formulation of completeness for the Scott entailment relation generated by ${\cal T}$, i.e. $\Compl^+_A({\cal T})$. Record that, to preserve the duality,
$\Compl^-_A({\cal T})$ relies on an explicit definition of $\Gamma
\not\vdash_{\cal T} \Delta$ which is linearly (and classically) equivalent to but
intuitionistically stronger than the negation of $\Gamma \vdash_{\cal
  T} \Delta$, and
$\Compl^+_A({\cal T})$ relies on an explicit definition of $\alpha \not\vDash {\cal T}$
which is linearly (and classically) equivalent to but
intuitionistically stronger than the negation of $\alpha \vDash {\cal T}$.
\hfill $\Box$
\end{theorem}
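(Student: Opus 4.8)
The statement is essentially the assembly of four correspondences that have already been put in place in the paragraphs preceding it, each plugged into the two implications displayed in Table~\ref{table:gdc-gbi}/Table~\ref{table:16}. Concretely, under the agreed identification of a clause $\Gamma \triangleright \Delta$ with the element of $({\mycal A}\times\Bool)^*$ carrying $(a,1)$ for each $a\in\Gamma$ and $(a,0)$ for each $a\in\Delta$, one has: (i) ``${\cal T}$ inductively ${\mycal A}$-$\Bool$-barred (from $\emptyseq$)'' $=$ ``$\vdash_{\cal T}$'' $=$ ``${\cal T}$ inconsistent''; (ii) ``${\cal T}^C$ ${\mycal A}$-$\Bool$-approximable'' $=$ ``$\not\vdash_{\cal T}$'' $=$ ``${\cal T}$ consistent''; (iii) ``${\cal T}$ ${\mycal A}$-$\Bool$-barred'' $=$ ``${\cal T}$ positively unsatisfiable''; (iv) ``${\cal T}^C$ has an ${\mycal A}$-$\Bool$-choice function'' $=$ ``${\cal T}$ satisfiable''. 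So the plan is simply to recall why each of (i)--(iv) holds and then to read off the two equalities of the theorem.

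For (i) I would match the $\mu$-characterisation of $\vdash_{\cal T}$ with the inductively-barred fixpoint of Table~\ref{table:13} instantiated at $T:={\cal T}$: since $\dom v=\Gamma\cup\Delta$, the side condition ``$a\notin\dom v$'' is the freshness condition ``$F\notin\Gamma\cup\Delta$'' of the $\Cut$ clause, adding $(F,1)$ resp.\ $(F,0)$ to $v$ is moving $F$ into $\Gamma$ resp.\ $\Delta$, and the base disjunct ``$v\in\outerofpos{{\cal T}}$'' corresponds to the axiom rules of the entailment relation, the $\outerofpos{\;}$ closure (now taken with respect to $\subseteq$) encoding the admissibility of weakening. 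Dually, for (ii), the $\nu$-characterisation of positive disprovability is term-for-term ``${\cal T}^C$ ${\mycal A}$-$\Bool$-approximable'', with $\innerneg{{\cal T}^C}$ expressing ``no axiom applies even after weakening''. For (iii), unfolding $u\prec\alpha$ for a clause $u=\Gamma\triangleright\Delta$ gives $\Gamma\subset\alpha\land\Delta\subset\overline\alpha$, so $\exists u\,(u\prec\alpha\mytensor u\in{\cal T})$ is exactly $\alpha\not\vDash{\cal T}$, and ``${\cal T}$ ${\mycal A}$-$\Bool$-barred'', i.e.\ $\forall\alpha\,\exists u\,(u\prec\alpha\mytensor u\in{\cal T})$, is ``${\cal T}$ positively unsatisfiable''. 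For (iv), ``${\cal T}^C$ has an ${\mycal A}$-$\Bool$-choice function'', namely $\exists\alpha\,\forall u\,(u\prec\alpha\mymultimap u\in{\cal T}^C)$, is obtained from $\exists\alpha\,(\alpha\vDash{\cal T})$ by contraposing the implication inside $\alpha\vDash{\cal T}$: $u\in{\cal T}^C$ is $\neg(u\in{\cal T})$, and $\neg(u\prec\alpha)$ for $u=\Gamma\triangleright\Delta$ is $\neg(\Gamma\subset\alpha\land\Delta\subset\overline\alpha)$, which is the linear (equivalently classical) reading of $\Gamma\subset\alpha\imp\Delta\intersect\alpha$.

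Putting the pieces together: $\GBI_{{\mycal A}\Bool{\cal T}}$ is ``${\cal T}$ ${\mycal A}$-$\Bool$-barred $\imp$ ${\cal T}$ inductively ${\mycal A}$-$\Bool$-barred'', which by (iii) and (i) is ``${\cal T}$ unsatisfiable $\imp{\cal T}$ inconsistent'', i.e.\ $\Compl^+_A({\cal T})$; and $\GDC_{{\mycal A}\Bool{\cal T}^C}$ is ``${\cal T}^C$ ${\mycal A}$-$\Bool$-approximable $\imp$ ${\cal T}^C$ has an ${\mycal A}$-$\Bool$-choice function'', which by (ii) and (iv) is ``${\cal T}$ consistent $\imp{\cal T}$ satisfiable'', i.e.\ $\Compl^-_A({\cal T})$.

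The proof has no genuine obstacle — the content is entirely in the preceding build-up — so the only thing to be careful about is that the four equalities really are equalities of formulas (modulo the notational identifications) and not merely equivalences: one must keep the closure operators $\innerneg{\;}$ and $\outerofpos{\;}$ aligned with the weakening-admissibility of the entailment relation, keep the polarity convention ($1\equiv$ positive, $0\equiv$ negative) and the freshness side conditions lined up, and remember that for the $\GDC$/$\Compl^-$ half the match holds only up to the linear (equivalently classical) reading — which is precisely why the statement uses the explicit \emph{positive disprovability} and \emph{positive falsity} in place of the plain negations, so as to keep the two halves dual.
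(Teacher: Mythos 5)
Your proposal is correct and follows exactly the route the paper takes: the theorem is stated with an immediate $\Box$ because its content is precisely the four identifications established in Section~\ref{sec:logical-reading} (provability as inductive barredness, positive disprovability as approximability of ${\cal T}^C$, positive unsatisfiability as barredness, and model existence as the existence of a choice function for ${\cal T}^C$), assembled via Tables~\ref{table:15} and~\ref{table:16}. Your caveat about the linear/classical reading and the role of the explicit positive definitions matches the paper's own closing remark in the theorem statement.
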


Note incidentally that entailment relations are connective-free. The usual
reliance on Markov's principle to intuitionistically prove
completeness as validity implies provability~\cite{Kreisel58} does not
apply (see e.g.~\cite{HerbelinIlik16,ForsterKirstWehr21} for recent studies).


\subsection{Algebraic reading: relation to the Boolean Prime Ideal/Filter Theorem}
\label{sec:ultrafilter-compl}

\newcommand{\clor}{\stackrel{.}{\lor}}
\newcommand{\cland}{\stackrel{.}{\land}}
\newcommand{\cbot}{\stackrel{.}{\bot}}
\newcommand{\ctop}{\stackrel{.}{\top}}
\newcommand{\cneg}{\stackrel{.}{\neg}}
\newcommand{\cvdash}{\stackrel{.}{\vdash}}
\newcommand{\cbigwedge}{\stackrel{.}{\bigwedge}}
\newcommand{\cbigvee}{\stackrel{.}{\bigvee}}
\newcommand{\clos}[1]{\mathit{clos}{#1}}

The previous reasoning based on entailment relations can also be expressed in
terms of Boolean algebras, connecting Generalised Weak Kőnig's Lemma
to the Boolean Prime Ideal/Filter Theorem. There is however a caveat:
the standard definition of proper filter and proper ideal is by negation and it
will be equivalent to approximability only with a linear or
classical, i.e. involutive, reading of the negation.

Let $({\cal B},\clor,\cland,\cbot,\ctop,\cneg)$ be a Boolean algebra and
$\cvdash$ the canonical order relation associated to it: $b \cvdash b'
\defeq (b \cland b') = b$. We call {\em filter} over ${\cal B}$ any
non-empty subset $F$ of ${\cal B}$ which is closed under $\cland$ and
closed under $\cvdash$ on the right. A filter is {\em proper} if it
does not contain $\cbot$. Otherwise, it coincides with ${\cal B}$ and
we call it {\em full}. We call {\em ultrafilter} a maximal proper
filter. A maximal filter in a Boolean algebra can be described as a
map $U$ from ${\cal B}$ to $\Bool$ such that $b_1 \cland b_2 \in U$
iff $b_1 \in U \land b_2 \in U$, $b_1 \clor b_2 \in U$ iff $b_1 \in U
\lor b_2 \in U$, $\cneg b \in U$ iff $\neg (b \in U)$, $\ctop \in U$,
and $\cbot \not\in U$. In a Boolean algebra, the notion of maximal
filter coincides with the notion of prime filter where a filter $F$ is
{\em prime} if $(b_1 \clor b_2) \in F$ implies $b_1 \in F$ or $b_2 \in
F$.

Dually, we call {\em ideal} over ${\cal B}$ any non-empty subset $I$
of ${\cal B}$ which is closed under $\clor$ and closed under $\cvdash$
on the left. An ideal is {\em proper} if it does not contain $\ctop$,
and {\em full} otherwise. A {\em prime} ideal $I$ is such that $(b_1
\cland b_2) \in I$ implies $b_1 \in I$ or $b_2 \in I$ and this
coincides with the notion of maximal proper ideal. A prime/maximal
proper ideal can be characterised in a dual way to prime/maximal
proper filter, i.e. as a map $U$ from ${\cal B}$ to $\Bool$ such that
$b_1 \cland b_2 \in U$ iff $b_1 \in U \lor b_2 \in U$, $b_1 \clor b_2
\in U$ iff $b_1 \in U \land b_2 \in U$, $\cneg b \in U$ iff $\neg (b
\in U)$, $\cbot \in U$ and $\ctop \not\in U$.

There is a family of provably equivalent theorems about the existence
of maximal/prime ideals/filters in Boolean algebras (see
e.g. Jech~\cite[2.3]{Jech73}) called Boolean Prime Ideal Theorem in
arbitrary Boolean algebras, or Ultrafilter Theorem in the Boolean
algebra of subsets of a set. We consider in Table~\ref{table:17} the case of a general
Boolean algebra and state the Boolean Prime Ideal Theorem in its two
``ideal'' and ``filter'' flavours. We also consider their
contrapositives.

\begin{table}
\hypertarget{axalg}{\caption{Reformulation of Table~\ref{table:14} as statements about a given Boolean algebra}}
  \label{table:17}
\renewcommand*{\arraystretch}{1.4}
\begin{center}
\begin{tabular}{|p{3.95cm}|p{4.05cm}|}
\hline
\hfil {\em ill-foundedness-style} & \hfil {\em well-foundedness-style} \\
\hhline{-|-}
Boolean Prime Filter Theorem ($\BPF_{{\cal B}}(F)$ for $F$ a filter) &
``Boolean Full Filter Theorem'' ($\coBPF_{{\cal B}}(F)$ for $F$ a filter) \\
$F$ proper $\imp$\hfill\break \hspace*{1mm} $F$ extensible into prime filter &
$F$ not extensible into prime filter $\imp\!\!\!$ \hfill\break \hspace*{1mm} $F$ full \\
\hhline{-|-}
Boolean Prime Ideal Theorem ($\BPI_{{\cal B}}(I)$ for $I$ an ideal) &
``Boolean Full Ideal Theorem'' ($\coBPI_{{\cal B}}(I)$ for $I$ an ideal) \\
$I$ proper $\imp$\hfill\break \hspace*{1mm} $I$ extensible into prime ideal &
$I$ not extensible into prime ideal $\imp\!\!$ \hfill\break \hspace*{1mm} $I$ full \\
\hline
\end{tabular}
\end{center}
\end{table}

We now compare the Boolean Prime Ideal/Filter Theorems to
Generalised Weak Kőnig's Lemma, i.e. $\GDC_{{\mycal A}\Bool T}$, showing first
that the Generalised Weak Kőnig's Lemma is an instance of
the Boolean Prime Ideal and Boolean Prime Filter Theorems.

To any domain ${\mycal A}$ we can associate a freely generated Boolean algebra
$(\free{{\mycal A}},\clor,\cland,\cbot,\ctop,\cneg)$ by considering the set of
algebraic expressions built from $\clor$, $\cland$, $\cbot$, $\ctop$
and $\cneg$, all quotiented by the axioms of a Boolean algebra.

As in the previous section, any $v$ in $({\mycal A} \times \Bool)^*$,
can be written under the form $\Gamma \mathbin{\triangleright} \Delta$
and a predicate over $({\mycal A} \times \Bool)^*$ can be seen as a
theory ${\cal T}$ of clauses. Let $\vdash_{{\cal T}}$ be the
associated entailment relation and $F_{\cal T}$ be the (equivalence
classes of) Boolean expressions of the form $\cbigwedge_i ((\cbigvee
\cneg\! \Gamma_i) \;\clor\; (\cbigvee \Delta_i))$ such that $\Gamma_i
\vdash_{\cal T} \Delta_i$ holds for all $i$ (this can be shown
independent of the exact choice of conjunctive normal form). It is
relatively standard to show that $F_{\cal T}$ is a filter.
This filter is
proper if $\bot \not\in F_{\cal T}$, that is if $\neg (\vdash_{\cal T})$, that is if ${\cal
T}$ is not inconsistent, that is, by
Section~\ref{sec:logical-reading}, if ${\cal T}^C$ is
$A$-$\Bool$-approximable, where the connectives are interpreted either
linearly or classically.

We can dually define $I_{\cal T}$ to be the (equivalence
classes of) Boolean expressions of the form $\cbigvee_i ((\cbigwedge
\Gamma_i) \;\cland\; (\cbigwedge \cneg\!\!  \Delta_i))$ such that
$\Gamma_i \vdash_{\cal T} \Delta_i$ holds for all $i$.
This is an ideal which is proper if $\top \not\in I_{\cal
  T}$, that is if $\neg (\vdash_{\cal T})$, that is if
${\cal T}^C$ is $A$-$\Bool$-approximable where, again, the
connectives are interpreted either linearly or classically.

Reasoning by induction on the definition of $\vdash_{\cal T}$ and relying on the definition of $(\Gamma \triangleright \Delta) \prec \alpha$,
we have the general result that prime filters and prime
ideals on a free Boolean algebra, here $\free{{\mycal A}}$, are
characterised by their intersection with generators, here~${\mycal A}$.
Whether other elements of $\free{{\mycal A}}$ belong
or not to a prime filter or prime ideal is canonically
determined\footnote{We define the value of $\alpha$ as equations to
  remain agnostic on the representation of a function to $\Bool$, see
  \ref{sec:infinite-sequence}.} by:
\begin{small}
$$
\!
\begin{array}{llll}
\alpha(a \clor a') \myeq b'' & \!\defeq\! & (\alpha(a) \myeq b) \land (\alpha(a')\myeq b') \land (b'' = b + b')\!\\
\alpha(a \cland a') \myeq b'' & \!\defeq\! & (\alpha(a) \myeq b) \land (\alpha(a')\myeq b') \land (b'' = b \cdot b')\\
\alpha(\cbot) \myeq b & \!\defeq\! & b = 0\\
\alpha(\ctop) \myeq b & \!\defeq\! & b = 1\\
\alpha(\cneg a) \myeq b' & \!\defeq\! & (\alpha(a) \myeq b) \land (b' = 1 - b)\\
\end{array}
$$
\end{small}
where $+$, $\cdot$, $-$ are the corresponding operations on $\Bool$, and where the prime filter case is characterised by $\alpha(b) \myeq 1$ and the prime ideal case by $\alpha(b) \myeq 0$.

In particular, the existence of a function from $A$ to $\Bool$
characterising a prime filter that extends the filter $F_{\cal T}$ on
$\free{\mycal A}$ is the same, by Section~\ref{sec:logical-reading},
as a model of ${\cal T}$ and as an $A$-$\Bool$-choice function for
${\cal T}^C$. By focusing on $\alpha(b) \myeq 0$ rather than
$\alpha(b) \myeq 1$, this very same function also characterises the
prime ideal that extends the ideal~$I_{\cal T}$, so, we get:

\begin{theorem}
\label{thm:bpf}
$\GDC_{{\mycal A}\Bool{\cal T}^C}$, where the connectives are
  interpreted linearly or classically, is equivalent to
  $\BPF_{\free{{\mycal A}}}(F_{\cal T})$ and $\BPI_{\free{{\mycal
        A}}}(I_{\cal T})$. \hfill $\Box$
\end{theorem}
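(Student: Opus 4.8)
The plan is to read Theorem~\ref{thm:bpf} as a repackaging, in the language of Boolean algebras, of the model-existence completeness theorem already identified with $\GDC_{{\mycal A}\Bool{\cal T}^C}$ in Theorem~\ref{thm:compl}. Concretely, $\GDC_{{\mycal A}\Bool{\cal T}^C}$ is the statement ``${\cal T}$ consistent $\imp$ ${\cal T}$ satisfiable'', where, by Section~\ref{sec:logical-reading} and Table~\ref{table:15}, ``${\cal T}$ consistent'' is the positive disprovability $\not\vdash_{\cal T}$ of the empty clause (equivalently, ${\cal T}^C$ being ${\mycal A}$-$\Bool$-approximable) and ``${\cal T}$ satisfiable'' is the existence of a model $\alpha \vDash {\cal T}$ (equivalently, of an ${\mycal A}$-$\Bool$-choice function for ${\cal T}^C$). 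So the task reduces to matching the hypothesis of $\GDC_{{\mycal A}\Bool{\cal T}^C}$ with ``$F_{\cal T}$ proper'' and its conclusion with ``$F_{\cal T}$ extensible into a prime filter'', and dually for $I_{\cal T}$; the filter/ideal properties of $F_{\cal T}$, $I_{\cal T}$ themselves I would take as the standard facts already noted before the theorem.

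For the hypothesis side I would argue directly from the construction of $F_{\cal T}$: the only conjunctive normal form equal to $\cbot$ in the free Boolean algebra and built from consequences of ${\cal T}$ is the empty disjunction, so $\cbot \in F_{\cal T}$ holds precisely when $\emptyset \vdash_{\cal T} \emptyset$, i.e. when $\vdash_{\cal T}$ of the empty clause holds — reducing an arbitrary such normal form back to $\vdash_{\cal T}$ is where the admissibility of $\Cut$ is used. Hence $F_{\cal T}$ is proper iff $\not\vdash_{\cal T}$, and this last is the negation $\neg(\vdash_{\cal T})$, which coincides with positive disprovability, and hence with ${\mycal A}$-$\Bool$-approximability of ${\cal T}^C$, exactly when negation is read linearly or classically. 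This is the single point in the whole argument forcing the stated restriction on the reading of the connectives. The ideal case is symmetric, with $\ctop \in I_{\cal T}$ in place of $\cbot \in F_{\cal T}$.

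For the conclusion side the key object is the map $\alpha \mapsto P_\alpha \defeq \{\, b \mid \alpha(b) \myeq 1 \,\}$, defined by propagating $\alpha : {\mycal A} \arrow \Bool$ along the recursive equations displayed just before the theorem. I would first check that $P_\alpha$ satisfies the clauses characterising a maximal (= prime) filter recalled in Section~\ref{sec:ultrafilter-compl} — $b_1 \cland b_2 \in P_\alpha$ iff $b_1, b_2 \in P_\alpha$, $b_1 \clor b_2 \in P_\alpha$ iff $b_1 \in P_\alpha$ or $b_2 \in P_\alpha$, $\cneg b \in P_\alpha$ iff $b \notin P_\alpha$, $\ctop \in P_\alpha$, $\cbot \notin P_\alpha$ — and conversely that any prime filter $P$ equals $P_{P \cap {\mycal A}}$, so that $P \mapsto P \cap {\mycal A}$ is inverse to $\alpha \mapsto P_\alpha$. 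Then I would match ``$F_{\cal T} \subseteq P_\alpha$'' with ``$\alpha \vDash {\cal T}$'': one direction is soundness of $\vdash_{\cal T}$ by induction on derivations (using $\Ax$, $\Ax_{\cal T}$, $\Cut$, and $\Gamma \subset \alpha \imp \Delta \intersect \alpha$ for clauses of ${\cal T}$), which puts every generator $(\cbigvee\cneg\Gamma) \clor (\cbigvee\Delta)$ of $F_{\cal T}$ into $P_\alpha$, whence $F_{\cal T} \subseteq P_\alpha$ by closure under $\cland$ and upward; the other direction is immediate since $F_{\cal T} \subseteq P_\alpha$ forces each clause of ${\cal T}$ into $P_\alpha$. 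Composing with Section~\ref{sec:logical-reading} gives ``${\cal T}$ satisfiable'' $=$ ``$F_{\cal T}$ extensible into a prime filter'', hence $\GDC_{{\mycal A}\Bool{\cal T}^C}$ coincides with $\BPF_{\free{{\mycal A}}}(F_{\cal T})$; reading the same $\alpha$ through $\alpha(b) \myeq 0$ makes the complement of $P_\alpha$ a prime ideal, yielding a bijection with prime ideals extending $I_{\cal T}$ and, with the already-matched hypothesis side, $\GDC_{{\mycal A}\Bool{\cal T}^C} = \BPI_{\free{{\mycal A}}}(I_{\cal T})$.

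The main obstacle is the generator characterisation underlying the conclusion side: showing that a prime filter of $\free{{\mycal A}}$ is both completely determined by, and freely generated from, its trace on ${\mycal A}$. This is where the combinatorial price is paid — an induction on Boolean expressions (equivalently on the equational theory of Boolean algebras) to see that the recursive clauses for $\alpha$ define a consistent assignment obeying the maximal-filter axioms, and a second induction (on $\vdash_{\cal T}$, invoking $\Cut$) for the bookkeeping relating membership in $F_{\cal T}$, $I_{\cal T}$ to derivability; everything else is routine once Theorem~\ref{thm:compl} and Section~\ref{sec:logical-reading} are in hand. Finally, although the statement is for a fixed ${\cal T}$, the identification is exhaustive at the level of schemes as well, since every filter (resp. ideal) of $\free{{\mycal A}}$ arises as $F_{\cal T}$ (resp. $I_{\cal T}$) for the theory consisting of those clauses whose associated normal form it contains.
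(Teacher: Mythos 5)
Your proposal is correct and takes essentially the same route as the paper, whose proof of this theorem is precisely the discussion preceding its statement: properness of $F_{\cal T}$ (resp.\ $I_{\cal T}$) is identified with $\neg(\vdash_{\cal T})$ and hence, under a linear or classical reading of negation, with ${\mycal A}$-$\Bool$-approximability of ${\cal T}^C$, while prime filters and prime ideals of $\free{{\mycal A}}$ are characterised by their trace on the generators, so that a prime filter extending $F_{\cal T}$ (dually, a prime ideal extending $I_{\cal T}$) is the same datum as a model of ${\cal T}$, i.e.\ an ${\mycal A}$-$\Bool$-choice function for ${\cal T}^C$. The only slip is your claim that the sole conjunctive normal form equal to the bottom element is the empty disjunction (a conjunction of an atom with its own negation is a counterexample), but your immediately following appeal to the admissibility of $\Cut$ to reduce an arbitrary such normal form to $\vdash_{\cal T}$ is the correct repair and is exactly the ``relatively standard'' step the paper also leaves implicit.
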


Conversely, if $F$ is a filter on a Boolean algebra ${\cal B}$, we can
define ${\cal T}_F$ on $({\cal B} \times \Bool)^*$ by $(\Gamma
\triangleright \Delta) \in {\cal T}_F \defeq (\cbigvee \cneg\! \Gamma)
\;\clor\; (\cbigvee \Delta) \in F$.  By induction on a proof of
$\Gamma \vdash_{\cal T} \Delta$ we can show that it implies $(\cbigvee
\cneg\! \Gamma) \;\clor\; (\cbigvee \Delta) \in F$ thus $\Gamma
\vdash_{\cal T} \Delta$ iff $(\cbigvee \cneg\! \Gamma) \;\clor\;
(\cbigvee \Delta) \in F$. Therefore, $F$ proper becomes equivalent to
${\cal T}_F$ $A$-$\Bool$-approximable where the connectives are
interpreted either linearly or classically. Reasoning as above, this
eventually allow to reduce $\BPF_{{\cal B}}(F)$ to $\GDC_{{\cal B}\Bool
  {\cal T}_F}$ and to show the equivalence of $\GDC_{A\Bool T}$ and
$\BPF_{{\cal B}}(F)$ as schemes.
Then, a similar analysis can put $\GBI_{{\mycal A}\Bool{\cal T}}$ into
correspondence with $\coBPF_{\cal B}(F)$ and $\coBPI_{\cal
  B}(I)$.

More generally, we also believe that, like in the countable case,
$\GDC_{ABT}$ and $\GDC_{ABT}$ over any finite, non-necessarily
two-element, codomain $B$ can be reduced to
$\GDC_{A\Bool T}$ and $\GDC_{A\Bool T}$.

\omitnow{
Conversely, let $({\cal B},\clor,\cland,\cbot,\ctop,\cneg)$ be
a Boolean algebra. To any filter $F$ we can associate a subset $T_F$ of
$({\cal B} \times \Bool)^*$ made of the singletons $(b,0)$ for all $b
\in F$, plus, for each instance $b = b'$ of a Boolean algebra axiom,
the four two-element sequences $(b,0) \star (b',1)$,
$(b',1) \star (b,0)$, $(b,0) \star (b',1)$ and $(b',1) \star (b,0)$.

We can then characterise the property of being a proper filter as
$T_F$ being ${\cal B}$-$\Bool$-approximable.

\begin{prop}
$\cbot \in F$ iff $\vdash_{T_F}$ is derivable.
\end{prop}

In particular, $T^C$ ${\cal B}$-$\Bool$-productive expresses that the
filter generated by $T^C$ does not contain the empty sequent,
i.e. does not contain $\bot$. Similarly, $T^C$ has a ${\cal
  B}$-$\Bool$-choice function is linearly equivalent to the existence
of a maximal filter containing $T^C$...

\begin{theorem}
$\BPF_{{\cal B}}(F)$ is equivalent to $\GDC_{{\cal B}\Bool(T_F)^C}$ and
$\coBPF_{{\cal B}}(F)$ is equivalent to $\GBI_{{\cal B}\Bool T_F}$.
\end{theorem}
}



\section{Further questions}
\label{sec:questions}

The duality revealed that when a proof requires classical reasoning
and its dual does not, it is that it requires co-intuitionistic
reasoning and its dual intuitionistic reasoning. As a conclusion, to
the notable exception of Proposition~\ref{prop:productive-unbounded},
we believe that all proofs could be carried out in a linear variant of
higher-order arithmetic.

There is a rich literature on choice axioms and on principles
equivalent to choice axioms. Not all of them can be classified as
either ill- or barred/well-foundedness-style, though.
For instance, open induction and update
induction~\cite{Raoult88,Coquand97,Berger04}, are classically equivalent to
bar induction and dependent choice but are formulated as
well-foundedness of some order on functions.
The study could also for instance be extended to choice principles such as
Zorn's lemma, the ordinal variants of the axiom of dependent choices
by L\'evy~\cite{Levy64} and the ordinal variants of Zorn's
lemma~\cite{Wolk83} by Wolk.

\section*{Acknowledgments}

We thank the communities of researchers who contributed to develop the
material we built on, and in particular Camille Noûs, from the
Cogitamus Lab, who embodies the collective and collaborative nature of
scientific research.

The ideas in Section~\ref{sec:ultrafilter-compl} derived from
investigations led by Charlotte Barot~\cite{Barot17}. The second
author thanks Valentin Blot and Étienne Miquey for numerous fruitful
discussions on the axiom of dependent choice and bar
induction. Special thanks also to the reviewers for their
corrections and insightful suggestions.


\bibliographystyle{plain}
\bibliography{../biblio-en,../extra}

\nocite{Berger09}





\end{document}